\newcommand{\wt}{{\mathrm{wt}}}
\newcommand{\gf}{{\mathbb{F}}}
\newcommand{\C}{{\mathcal{C}}}
\newtheorem{theorem}{Theorem}
\newtheorem{lemma}[theorem]{Lemma}
\newtheorem{proposition}[theorem]{Proposition}
\newtheorem{corollary}[theorem]{Corollary}
\newtheorem{problem}{Open Problem}
\newtheorem{example}{Example}
\newtheorem{remark}{Remark}
\begin{document}
	\title{On the Hamming Weight Functions of Linear Codes}
\author{\IEEEauthorblockN{	Dongmei Huang\IEEEauthorrefmark{1}, Qunying Liao\IEEEauthorrefmark{1}, Sihem Mesnager\IEEEauthorrefmark{2}\IEEEauthorrefmark{3}, Gaohua Tang\IEEEauthorrefmark{4} 
and Haode Yan\IEEEauthorrefmark{5}
}  \\
	\IEEEauthorblockA{\IEEEauthorrefmark{1}College of Mathematical Science, Sichuan Normal University, Chengdu, China. }\\
	\IEEEauthorblockA{\IEEEauthorrefmark{2}Department of Mathematics, University of Paris VIII, F-93526 Saint-Denis, Laboratory Geometry,
Analysis and Applications, LAGA, University Sorbonne Paris Nord, CNRS, UMR 7539, F-93430,
Villetaneuse, France.}\\
\IEEEauthorblockA{\IEEEauthorrefmark{3}Telecom Paris, Polytechnic Institute of Paris, 91120 Palaiseau, France.}\\
	\IEEEauthorblockA{\IEEEauthorrefmark{4}School of Science, Beibu Gulf University, Qinzhou, China.}\\
	\IEEEauthorblockA{\IEEEauthorrefmark{5}School of Science, Harbin Institute of Technology, Shenzhen, 518055, China}\\
	\IEEEauthorblockA{\href{mailto: qunyingliao@sicnu.edu.cn}{qunyingliao}@sicnu.edu.cn, 
	\href{mailto:B20230801005@stu.sicnu.edu.cn}{huangdongmeimaths}@stu.sicnu.edu.cn,
	\href{mailto:mesnager@univ-paris8.fr}{mesnager}@univ-paris8.fr, 
	\href{mailto:tanggaohua@163.com}{tanggaohua}@163.com, 
	\href{mailto: yanhd@hitZ.edu.cn}{yanhd}@hit.edu.cn\\
	\IEEEauthorblockA{Corresponding Author: Qunying Liao \quad Email: qunyingliao@sicnu.edu.cn}
		}
	
	   


	}
	
		\maketitle
	\begin{abstract}
Currently known secondary construction techniques for linear codes mainly include puncturing, shortening, and extending. In this paper, we propose a novel method for the secondary construction of linear codes based on their weight functions. Specifically, we develop a general framework that constructs new linear codes from the set of codewords in a given code having a fixed Hamming weight. We analyze the dimension, number of weights, and weight distribution of the constructed codes, and establish connections with the extendability of the original codes as well as the partial weight distribution of the derived codes. As a new tool, this framework enables us to establish an upper bound on the minimum weight of two-weight codes and to characterize all two-weight codes attaining this bound. Moreover, several divisibility properties concerning the parameters of two-weight codes are derived. The proposed method not only generates new families of linear codes but also provides a powerful approach for exploring the intrinsic combinatorial and geometric structures of existing codes.
\end{abstract}

	{\bf Keywords:} Linear code, two-weight code, extendable code, Hamming weight, weight distribution.
	
	{\bf Mathematics Subject Classification:} 06E30, 11T06, 94A60, 94D10.
	
	\section{Introduction}
	
	Let $\mathbb{F}_q$ be the finite field with $q$ elements. A $k$-dimensional subspace $\C$ of the vector space $\mathbb{F}_q^n$, consisting of all $n$-tuples over $\mathbb{F}_q$, is referred to as a \emph{linear code} of length $n$ and dimension $k$. Algebraically, $\C$ is simply a $k$-dimensional vector space over $\mathbb{F}_q$. However, as a specific subspace of $\mathbb{F}_q^n$, $\C$ inherits certain metric properties. In particular, for every $\mathbf{v} \in \mathbb{F}_q^n$, the \emph{weight} of $\mathbf{v}$, denoted by $\mathrm{wt}(\mathbf{v})$, is defined as the number of non-zero entries in the vector $\mathbf{v}$. 	The \emph{distance} between two vectors is defined as the weight of their difference. The interaction between the algebraic structure of $\C$ and the metric structure induced by the weight function is central to the study of coding theory.

	It was demonstrated in \cite{Nogin} that the weight function of a linear code, defined as an integer-valued function on the vector space of messages, uniquely determines the code up to equivalence. Consequently, the weight function plays a crucial role in coding theory, serving as a fundamental tool in distinguishing and characterizing linear codes. 
Hoffman \cite{Hoffman} investigated the conditions under which a weight function defined on an abstract vector space over a finite field can be realized as the weight function of a linear code. In \cite{GHS,Sim}, the degree of the preimage set of the Hamming weight function for linear codes was carefully analyzed. Certain preimage sets with low degrees were demonstrated. By utilizing relevant results concerning the weight distribution of Reed-Muller codes, divisibility properties and lower bounds for the sizes of specific preimage sets of the Hamming weight function were established in these works.

For a linear code, determining the minimum of the Hamming weight function and its value distribution are fundamental but challenging problems, as they correspond to the minimum distance and the weight enumerator of the code. Explicit determination of these parameters typically requires sophisticated tools, including exponential-sum evaluations, Walsh spectra of Boolean functions, and cyclotomic-field techniques.
Yuan, Carlet, and Ding~\cite{YuanCarletDing2006} determined the weight distribution of a class of linear codes derived from perfect nonlinear (planar) functions. Ding~\cite{Ding2015} constructed linear codes from certain $2$-designs using a defining-set approach and characterized their weight distributions via Gauss periods. Ding and Ding~\cite{KDingCDing2015} determined the weight distribution of two-weight and three-weight codes derived from quadratic functions. Tang and Li et~al.~\cite{tang2016} generalized these constructions using weakly regular bent functions and evaluated the corresponding weight distributions through cyclotomic-field methods. Ding, Heng, and Zhou~\cite{DingHengZhou2018} developed a framework for minimal binary linear codes and determined their complete weight distributions using Krawtchouk polynomials.

The sets of codewords of a fixed weight in a linear code play a crucial role in coding theory. Minimum-weight codewords are essential for decoding and for establishing bounds on the error-correcting capability of the code. From a combinatorial perspective, the supports of constant-weight codewords can form the blocks of combinatorial \(t\)-designs, providing a fundamental connection between linear codes and design theory. The classical Assmus-Mattson theorem~\cite{Assmus1969} provides sufficient conditions under which the supports of all codewords of a given weight form a \(t\)-design, but many codes supporting \(t\)-designs fall outside its scope.
Ding and his collaborators have made significant progress in extending this connection~\cite{Tang2019}. Ding and Tang~\cite{DingTang2020} constructed infinite families of near-MDS codes whose constant-weight codes support nontrivial \(t\)-designs. Tang and Ding~\cite{TangDing2021} further constructed infinite families of linear codes supporting 4-designs.

In this paper, we conduct a detailed investigation of the code and the geometric structures associated with the sets of constant-weight codewords in linear codes. Let \( \mathcal{C} \) be an \([n,k]\) linear code, and denote by \( L(\mathcal{C}) \) the set of all linear functions defined on \( \mathcal{C} \). For each weight \( w \) such that \( A_w \neq 0 \), let \( \mathrm{wt}^{-1}_{\mathcal{C}}(w) \) denote the set of all codewords in \( \mathcal{C} \) with Hamming weight \( w \). When \(\mathcal{C}\) is clear from the context, we often abbreviate \( \mathrm{wt}^{-1}_{\mathcal{C}}(w) \) as \( \mathrm{wt}^{-1}(w) \).
We define \( \mathcal{C}^*(w) \) as the linear code corresponding to the defining set \( \mathrm{wt}^{-1}(w) \), that is,
\begin{eqnarray}\label{eq:C*}
\mathcal{C}^*(w) = \left\{ \left(\ell(\mathbf{c})\right)_{\mathbf{c} \in \mathrm{wt}^{-1}(w)} : \ell \in L(\mathcal{C}) \right\}.
\end{eqnarray}
The objective of this paper is to construct linear codes \( \mathcal{C}^*(w) \) from certain well-known linear codes \( \mathcal{C} \). If \( \mathcal{C} \) is suitably chosen, the resulting code \( \mathcal{C}^*(w) \) may possess good or even optimal parameters. We establish some general structural properties of \( \mathcal{C}^*(w) \), and in the case where \( \mathcal{C} \) is a two-weight code, we completely determine \( \mathcal{C}^*(w) \). Moreover, we illustrate applications of this method in the study of two-weight codes, including the derivation of upper bounds on the minimum weight and lower bounds on the maximum weight, as well as results concerning the non-extendibility of two-weight codes.

Section~\ref{sec:auxiliary} provides an overview of relevant facts concerning linear codes and combinatorial \(t\)-designs. Section~\ref{generic} presents some general results on the linear codes associated with the constant-weight codewords of a linear code. Section~\ref{two-weight} gives the main parameters of the linear codes derived from the constant-weight codewords of two-weight codes and illustrates the applications of this new perspective in the study of two-weight codes. Finally, Section~\ref{conc} concludes the paper.

\section{Some auxiliary results}\label{sec:auxiliary}

In this section, we introduce several foundational concepts that are essential for understanding the results presented in this paper, specifically focusing on linear codes, generalized Hamming weights, combinatorial $t$-designs, and their interrelations.

\subsection{Linear codes and Pless Power Moments}

Let \( D = \{\{ \mathbf{v_1}, \mathbf{v_2}, \dots, \mathbf{v_n} \}\} \) be a multiset of vectors in a vector space \( V \) over  \( \mathbb{F}_q \), allowing repeated elements. Let \( L(V) \) denote the set of all linear functionals on \( V \). A linear code of length \( n \) over \( \mathbb{F}_q \) is defined as
\[
\mathcal{C}_D = \left\{ \left( \ell(\mathbf{v_1}), \ell(\mathbf{v_2}), \dots, \ell(\mathbf{v_n}) \right) : \ell \in L(V) \right\},
\]
where \( D \) is referred to as the \emph{defining set} of the code \( \mathcal{C}_D \). By construction, the dimension of the code \( \mathcal{C}_D \) is at most \( \dim(V) \).
This framework is quite versatile, enabling the construction of a broad class of linear codes through various choices of defining sets \( D  \). The choice of \( D \) influences the parameters and properties of the resulting code, making this a powerful method for generating codes with specific desired characteristics. In particular, many classes of linear codes with a few weights have been constructed using this technique \cite{Ding2015,tang2016,zhou2016}. If \( D \) is a subset of the projective space \( \mathrm{PG}(k-1, q) \), let \( \widetilde{D} \) be any lifting of \( D \) to the vector space \( \mathbb{F}_q^k \). It is clear that the linear codes \( \mathcal{C}_{\widetilde{D}} \) are equivalent for different liftings \( \widetilde{D} \), and thus we refer to the code associated with \( D \) as the linear code \( \mathcal{C}_D \). The projective geometric structure of \( D \) often plays a significant role in determining the properties of the code \( \mathcal{C}_D \), particularly when \( D \) is derived from carefully chosen configurations in projective space.

In coding theory, two common operations applied to linear codes are shortening and puncturing, which are used to create new codes from existing ones, often with different lengths or dimensions.

Let \(\mathcal{C}\) be a linear code of length \(n\), dimension \(k\), and minimum distance \(d\) over the finite field \(\gf_q\). Given a set \(T\) of \(t\) coordinates, \textit{puncturing} \(\mathcal{C}\) is the process of deleting all coordinates in \(T\) from each codeword of \(\mathcal{C}\). The resulting code, which is still linear, has length \(n - t\) and is referred to as the \textit{punctured code} of \(\mathcal{C}\), denoted by \(\mathcal{C}^T\).
Furthermore, let \(\mathcal{C}(T)\) be the set of codewords in \(\mathcal{C}\) that are zero in all coordinates of \(T\). This subset \(\mathcal{C}(T)\) forms a subcode of \(\mathcal{C}\). By puncturing \(\mathcal{C}(T)\) on \(T\), we obtain another linear code of length \(n - t\), known as the \textit{shortened code} of \(\mathcal{C}\), denoted by \(\mathcal{C}_T\).

Under certain conditions, the dimension of a shortened code \(\mathcal{C}_T\) is known and given in the
next proposition \cite{Huffman}.
\begin{proposition}\label{prop:punc-shor}
Let \(\mathcal{C}\) be an \([n, k, d]\) code over \(\gf_q\) and let \(T\) be any set of \(t\) coordinates. Let \(\mathcal{C}^T\) denote the punctured code of \(\mathcal{C}\) in all coordinates in \(T\). Then the following hold:
\begin{enumerate}
    \item[(1)] \((\mathcal{C}^{\perp})_T = (\mathcal{C}^T)^{\perp}\) and \((\mathcal{C}^{\perp})^T = (\mathcal{C}_T)^{\perp}\).
    \item[(2)] If \(t < d\), then \(\mathcal{C}^T\) and \((\mathcal{C}^{\perp})_T\) have dimensions \(\kappa\) and \(n - k-t\), respectively.
    \item[(3)] If \(t = d\) and \(T\) is the set of coordinates where a minimum weight codeword is nonzero, then \(\mathcal{C}^T\) and \((\mathcal{C}^{\perp})_T\) have dimensions \(k - 1\) and \(n - k - t + 1\), respectively.
\end{enumerate}
\end{proposition}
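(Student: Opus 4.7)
The plan is to handle the three parts in the order (1), (2), (3), since (2) and (3) will use (1) to deduce the dimension of $(\mathcal{C}^\perp)_T$ from that of $\mathcal{C}^T$.

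For part (1), I would exploit the elementary duality between puncturing and shortening. Write $\bar T$ for the complement of $T$ in the coordinate set. A vector $\mathbf{y}$ indexed by $\bar T$ lies in $(\mathcal{C}^T)^\perp$ precisely when $\sum_{i \in \bar T} c_i y_i = 0$ for every $\mathbf{c} \in \mathcal{C}$. Extending $\mathbf{y}$ by zeros on $T$ to a vector $\tilde{\mathbf{y}} \in \mathbb{F}_q^n$, this condition is equivalent to $\tilde{\mathbf{y}} \in \mathcal{C}^\perp$ with support in $\bar T$, i.e.\ $\tilde{\mathbf{y}} \in \mathcal{C}^\perp(T)$. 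Puncturing $\tilde{\mathbf{y}}$ on $T$ returns $\mathbf{y}$, yielding $(\mathcal{C}^T)^\perp = (\mathcal{C}^\perp)_T$. Applying this identity to $\mathcal{C}^\perp$ in place of $\mathcal{C}$ and using $(\mathcal{C}^\perp)^\perp = \mathcal{C}$ delivers the second equality.

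For part (2), view puncturing as the coordinate-projection map $\pi_T \colon \mathcal{C} \to \mathbb{F}_q^{n-t}$. Its kernel consists of the codewords of $\mathcal{C}$ whose support lies in $T$, and any such nonzero codeword would have weight at most $t < d$, contradicting the minimum distance. Hence $\pi_T$ is injective, so $\dim \mathcal{C}^T = k$. Combining with part (1), $\dim (\mathcal{C}^\perp)_T = (n-t) - \dim \mathcal{C}^T = n - k - t$.

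For part (3), the main subtle point and the place where a little care is needed is to show that $\ker \pi_T$ has dimension exactly $1$ when $T = \mathrm{supp}(\mathbf{c}^*)$ for a minimum-weight codeword $\mathbf{c}^*$. Clearly $\mathbf{c}^* \in \ker \pi_T$, so the dimension is at least one. For the upper bound, if $\mathbf{c} \in \ker \pi_T$ is nonzero, then $\mathrm{supp}(\mathbf{c}) \subseteq T$ with $|T| = d$. Choose any $i \in T$ with $c^*_i \neq 0$ and set $a = c_i / c^*_i$; then $\mathbf{c} - a\mathbf{c}^*$ is supported on $T \setminus \{i\}$, so has weight at most $d - 1$, forcing $\mathbf{c} = a\mathbf{c}^*$. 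Thus $\ker \pi_T = \mathbb{F}_q \mathbf{c}^*$, giving $\dim \mathcal{C}^T = k - 1$, and by part (1) again $\dim (\mathcal{C}^\perp)_T = (n-t) - (k-1) = n - k - t + 1$. The hardest step is exactly this rank-one computation of the kernel; once that is in hand, the rest of the proposition is bookkeeping via the duality in part (1).
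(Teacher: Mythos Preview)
Your proof is correct and follows the standard route found in most textbooks. Note, however, that the paper does not actually prove this proposition: it is quoted from Huffman and Pless \cite{Huffman} as a known background result, with no argument supplied. So there is no ``paper's proof'' to compare against; your write-up is exactly the kind of self-contained verification one would give if asked to fill in the reference. The only cosmetic point is that in part~(3) you may as well note that every $i\in T$ satisfies $c^*_i\neq 0$ since $T=\mathrm{supp}(\mathbf{c}^*)$, so the choice of $i$ is unconstrained.
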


In contrast to puncturing, if there exists an \((n+1, k, d+1)\) code \( \mathcal{C}' \) from which \( \mathcal{C} \) can be derived by deleting a coordinate, then \( \mathcal{C} \) is termed \emph{extendable} to \( \mathcal{C}' \), and \( \mathcal{C}' \) is referred to as an \emph{extension} of \( \mathcal{C} \). Moreover, \( \mathcal{C} \) is considered \emph{doubly extendable} if it has an extension that itself possesses an additional extension.

The \emph{MacWilliams identities} and the \emph{Pless power moments} (see \cite{Huffman}) are essential tools for studying the relationship between the weight enumerators of a linear code and its dual, as well as for understanding the distribution of codeword weights.

Let $\C$ be an $[n, k]_q$ linear code over a finite field $\mathbb{F}_q$, and let $\C^\perp$ denote its dual code. A projective code is a linear code such that the minimum weight of its dual code is at least
three.
The weight enumerator of $\C$ is given by
\[
W_\C(x, y) = \sum_{i=0}^{n} A_i x^{n-i} y^i,
\]
where $A_i$ denotes the number of codewords in $\C$ of Hamming weight $i$. 
The MacWilliams identity relates the weight enumerator of a linear code $\C$ to that of its dual code $\C^\perp$ as follows
\[
W_{\C^\perp}(x, y) = \frac{1}{|\C|} W_\C(x + (q-1)y, x - y),
\]
where $W_{\C}(x, y)$ represents the weight enumerator of code $\C$, and $W_{\C^\perp}(x, y)$ represents the weight enumerator of its dual code. This identity allows the computation of the weight enumerator of the dual code directly from the weight distribution of the original code.
By comparing the coefficients of the polynomials on both sides of the above equation, a more direct relationship between the weight distribution of $\C$ and  $\C^\perp$ can be obtained
\begin{eqnarray}\label{eq:A-Adual}
\frac{1}{q^k} \sum_{i=0}^{n-r} \binom{n-i}{r} A_i = \frac{1}{q^r} \sum_{i=0}^{r} \binom{n-i}{n-r} A_i^\perp,	
\end{eqnarray}
where $0\le r \le n$ and $A_i^\perp$ represents the number of codewords of weight $i$ in the dual code $\C^\perp$. These equations further reveal how the weight distribution of the original code can be used to compute the weight distribution of the dual code.

The \emph{Pless power moments}, derived from the MacWilliams identity, relate the moments of the weight distribution of codewords in $\C$ to the weight distribution of its dual code $\C^\perp$ \cite{DingTang2022}.
The first few Pless power moment is expressed as
\[
\left\{
\begin{aligned}
    \sum_{i=0}^{n} A_i &= q^k, \\
    \sum_{i=0}^{n} A_i i &= q^{k-1} \left(qn - n - A_1^\perp \right), \\
    \sum_{i=0}^{n} A_i i^2 &= q^{k-2} \left[ (q-1)n(qn - n + 1) - (2qn - q - 2n + 2)A_1^\perp + 2A_2^\perp \right], \\
    \sum_{i=0}^{n} A_i i^3 &= q^{k-3} \left[ (q-1)n(q^2 n^2 - 2qn^2 + 3qn - q + n^2 - 3n + 2) \right. \\
    & \quad \left. - (3q^2 n^2 - 3q^2 n - 6qn^2 + 12qn + q^2 - 6q + 3n^2 - 9n + 6)A_1^\perp \right. \\
     & \quad \left.  + 6(qn - q - n + 2)A_2^\perp - 6A_3^\perp \right].
\end{aligned}
\right.
\]

In the binary case ($q=2$), these moments simplify to:
\[
\left\{
\begin{aligned}
    \sum_{i=0}^{n} A_i &= 2^k, \\
    \sum_{i=0}^{n} A_i i &= 2^{k-1}(n - A_1^\perp), \\
    \sum_{i=0}^{n} A_i i^2 &= 2^{k-2}[n(n + 1) - 2nA_1^\perp + 2A_2^\perp], \\
    \sum_{i=0}^{n} A_i i^3 &= 2^{k-3}[n(n + 3) - (3n^2 + 3n - 2)A_1^\perp + 6nA_2^\perp - 6A_3^\perp].
\end{aligned}
\right.
\]
These equations represent the first four Pless power moments. The first moment counts the total number of codewords, the second moment sums the codeword weights, while the third and fourth moments provide more detailed information about the weight distribution, including variance and higher-order characteristics.

The MacWilliams identities and Pless power moments are widely used in coding theory to study the properties of dual codes, analyze error-correcting performance, and derive bounds on parameters such as the minimum distance. These tools offer deep insights into the structure of linear codes and their duals, playing a crucial role in both theoretical and practical applications.

\subsection{Generalized Hamming weights}
Generalized Hamming weights are an important concept in coding theory, extending the idea of traditional Hamming weights to higher-dimensional subcodes. They play a key role in understanding the structure and error-correcting capabilities of linear codes, particularly in applications such as wiretap channels, network coding, and secret sharing.

Let $\C$ be an code of length $n$ and dimension $k$ over $\mathbb{F}_q$. The $r$-th generalized Hamming weight $d_r(\C)$ for $1 \leq r \leq k$ is given by
\[
d_r(\C) = \min \{ \#\mathrm{Supp}(\mathcal D) : \mathcal D \text{ is an } r\text{-dimensional subcode of } \C \},
\]
where $\mathrm{Supp}(\mathcal D) $, the support of $\mathcal D$, is the set of coordinate positions for which at least one word of $\mathcal D$ has a non-zero coordinate. The quantity 
\[
\mathrm{wt}(\mathcal D) = \# \mathrm{Supp}(\mathcal D) 
\]
is called the effective length or the weight of $\mathcal D$.

The $r$-dimensional weight distribution of $\C$ consists of the numbers 
\[
A^{(r)}(i) = \# \left \{ \mathcal D : \mathrm{wt}(\mathcal D) = i, \dim(\mathcal D) = r \right \},
\]
where $\mathcal D$ runs through the $r$-dimensional subcodes of $\C$ and $0 \le i \le n$. Since the projection of $\mathcal D$ onto a coordinate position is an $\mathbb{F}_q$-linear map, we immediately see that
\[
\mathrm{wt}(\mathcal D) = \frac{1}{q^r - q^{r-1}} \sum_{\bf c \in \mathcal D} \wt(\bf c).
\]

\subsection{Combinatorial $t$-Designs and the Assmus-Mattson Theorem}

A \textit{combinatorial $t$-$(v,k,\lambda)$ design} is a pair $(\mathcal{P}, \mathcal{B})$, where $\mathcal{P}$ is a $v$-element set of \textit{points}, and $\mathcal{B}$ is a multiset of $k$-subsets of $\mathcal{P}$, called \textit{blocks}, such that every $t$-subset of $\mathcal{P}$ is contained in exactly $\lambda$ blocks. A design is called \textit{simple} if there are no repeated blocks, and is called a \textit{Steiner system}, denoted by $S(t,k,v)$, if $t \geq 2$ and $\lambda = 1$.

Given a $[v,k,d]$ linear code $\C$ over the finite field $\mathrm{GF}(q)$, with weight distribution $(A_0, A_1, \ldots, A_v)$ and weight enumerator $\sum_{i=0}^v A_i z^i$, it is often of interest to determine whether the code induces combinatorial designs. Let $\mathcal{P}(\C) = \{0,1,\dots,v-1\}$ be the set of coordinate positions of $\C$. For a codeword $\mathbf{c} = (c_0,\ldots,c_{v-1}) \in \C$, its support is defined as
\[
\mathrm{Supp}(\mathbf{c}) = \{ i \in \mathcal{P}(\C) : c_i \neq 0 \}.
\]
For a fixed weight $w$, we consider the multiset
\[
\mathcal{B}_w(\C) = \frac{1}{q-1} \left\{\left\{ \mathrm{Supp}(\mathbf{c}) : \mathbf{c} \in \C,~\mathrm{wt}(\mathbf{c}) = w \right\}\right\},
\]
where $\{\{\cdot\}\}$ denotes multiset notation and $\frac{1}{q-1}$ scales the multiplicity of each block accordingly. Under appropriate conditions, the pair $(\mathcal{P}(\C), \mathcal{B}_w(\C))$ forms a $t$-design with
\[
b = \frac{1}{q-1}A_w, \quad \lambda = \frac{\binom{w}{t}}{(q-1)\binom{v}{t}}A_w.
\]
Such designs may be trivial, simple, or possess repeated blocks, depending on the structure of $\C$.

A powerful tool for constructing such designs is the classical Assmus-Mattson Theorem \cite{Assmus1969}, stated as follows:

\begin{theorem}[Assmus-Mattson]
Let $\C$ be a linear code over $\mathrm{GF}(q)$ with length $\nu$ and minimum distance $d$, and let $\C^\perp$ denote the dual code with minimum distance $d^\perp$. Suppose $1 \le t < \min\{d, d^\perp\}$, and that there are at most $d^\perp - t$ nonzero weights of $C$ in $\{1,2,\ldots,\nu - t\}$. Then for all $k \in \{0, 1, \ldots, \nu\}$, the codewords of weight $k$ in $\C$ and $\C^\perp$ support $t$-designs. That is, both $(\mathcal{P}(\C), \mathcal{B}_k(\C))$ and $(\mathcal{P}(\C^\perp), \mathcal{B}_k(\C^\perp))$ form $t$-designs.
\end{theorem}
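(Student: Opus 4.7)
The plan is to fix an arbitrary $t$-subset $T$ of the coordinate positions and show that
\[
B_w(T) := \#\{\bc \in \C : \wt(\bc)=w,\ \mathrm{Supp}(\bc) \cap T = \emptyset\}
\]
depends only on $w$ and $|T|=t$. Once this holds, double counting the pairs $(\bc, T)$ with $\mathrm{Supp}(\bc)\cap T = \emptyset$ gives $B_w(T) = A_w\binom{\nu-w}{t}/\binom{\nu}{t}$; equivalently, the number of weight-$w$ codewords whose support \emph{contains} a given $t$-subset is $A_w\binom{w}{t}/\binom{\nu}{t}$, independent of the particular $t$-subset. This is precisely the $t$-design property of $(\mathcal{P}(\C), \mathcal{B}_w(\C))$.

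The key identification is $B_w(T) = A_w(\C_T)$: every codeword of the subcode $\C(T)$ vanishes on $T$, so puncturing preserves Hamming weight. Applying Proposition~\ref{prop:punc-shor}(1)--(2) to $\C^\perp$ with $t < d^\perp$, we obtain $\dim\C_T = k-t$ and $(\C_T)^\perp = (\C^\perp)^T$. Since every codeword of $\C^\perp$ has weight at least $d^\perp$ and puncturing on $T$ removes at most $t$ nonzero coordinates, $(\C^\perp)^T$ has minimum distance at least $d^\perp - t$; in particular $A_j((\C^\perp)^T) = 0$ for $1 \leq j \leq d^\perp - t - 1$.

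Next, I would apply the Pless power moments to $\C_T$. For each $r \in \{0, 1, \ldots, d^\perp-t-1\}$, the $r$-th moment $\sum_i i^r A_i(\C_T)$ is a linear combination of $A_j((\C^\perp)^T)$ for $j \leq r < d^\perp - t$, all of which vanish except $j=0$. Hence these first $s := d^\perp - t$ moments reduce to constants $c_0, c_1, \ldots, c_{s-1}$ depending only on $\nu - t$, $k - t$, $q$, and $r$, and not on $T$. By hypothesis $\C_T$ has at most $s$ distinct nonzero weights $w_1 < \cdots < w_{s'}$ with $s' \leq s$, so the linear system
\[
\sum_{i=1}^{s'} w_i^r B_{w_i}(T) = c_r, \qquad r = 0, 1, \ldots, s'-1,
\]
has a Vandermonde coefficient matrix in the distinct positive integers $w_i$ and is uniquely solvable. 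Thus $B_{w_i}(T)$ depends only on $t$, proving the $t$-design property for codewords of $\C$ of each weight.

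The parallel statement for $\C^\perp$ is obtained by an analogous analysis of $(\C^\perp)_T$ and its dual $\C^T$, now using $t < d$ to ensure that $\C^T$ has minimum distance at least $d - t$ and applying Pless to $(\C^\perp)_T$. The main obstacle is that the stated hypothesis controls only the number of nonzero weights of $\C$, so a bound on the number of nonzero weights of $(\C^\perp)_T$ in $[1,\nu-t]$ --- needed to close the analogous Vandermonde system --- is not immediate; it is typically obtained either from a MacWilliams-type dual-distance bound, or by leveraging the $t$-design structure of $\C$ already established in the previous paragraph. Modulo this technical point, the Pless/Vandermonde machinery transfers verbatim to yield the conclusion for $\C^\perp$.
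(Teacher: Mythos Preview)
The paper does not provide its own proof of this theorem: it is stated in Section~\ref{sec:auxiliary} purely as background, with a citation to Assmus and Mattson's original 1969 paper, and is then used as a black box (e.g., via Theorem~\ref{thm:AGM}) in the later sections. There is therefore no proof in the paper against which to compare your attempt.

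That said, your sketch is the classical argument. One remark on the part you flagged as an obstacle: for the dual code, the standard route is exactly the second option you mention. Once you know that the weight-$w$ codewords of $\C$ support $t$-designs for every $w$, the number $A_j(\C^T)$ of weight-$j$ codewords in the punctured code $\C^T$ can be written as a sum over $w$ of the number of weight-$w$ codewords of $\C$ meeting $T$ in exactly $w-j$ positions; each summand is determined by the $t$-design parameters and hence is independent of the particular $T$. Feeding this into the MacWilliams relation between $\C^T$ and $(\C^\perp)_T = (\C^T)^\perp$ shows that the entire weight distribution of $(\C^\perp)_T$ is independent of $T$, which is the $t$-design statement for $\C^\perp$ without needing a separate count of its nonzero weights.

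A small point of precision in your first half: the weights $w_1,\ldots,w_{s'}$ entering the Vandermonde system should be taken to be the (at most $d^\perp - t$) nonzero weights of $\C$ lying in $\{1,\ldots,\nu-t\}$, which are fixed independently of $T$; phrasing them as ``the nonzero weights of $\C_T$'' risks the reading that the set of $w_i$'s varies with $T$, which would spoil the uniqueness argument.
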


While the classical Assmus-Mattson theorem provides a useful criterion for determining when the supports of codewords in a linear code form $t$-designs, this condition is not sufficient in general and sometimes fails to explain why certain codes give rise to $t$-designs. To address this limitation, several refined generalizations have been developed. One such result, presented in \cite{Tang2019}, considerably broadens the applicability of the Assmus--Mattson theorem and is stated as follows:

\begin{theorem}[A Generalization of the Assmus-Mattson Theorem]\label{thm:AGM}
Let $\C$ be a linear code over $\mathrm{GF}(q)$ of length $\nu$ with minimum distance $d$, and let $C^\perp$ be its dual with minimum distance $d^\perp$. Let $s$ and $t$ be positive integers with $t < \min\{d, d^\perp\}$. Let $S$ be an $s$-subset of $\{d, d+1, \ldots, \nu - t\}$. Suppose that for all $\ell \in \{d, d+1, \ldots, \nu - t\} \setminus S$ and $0 \le \ell^\perp \le s + t - 1$, the supports of codewords of weight $\ell$ in $\C$ and of weight $\ell^\perp$ in $\C^\perp$ form $t$-designs. Then both $\C$ and $\C^\perp$ support $t$-designs for all weights $k \ge t$.
\end{theorem}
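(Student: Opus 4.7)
The plan is to fix an arbitrary $t$-subset $T \subseteq \mathcal{P}(\mathcal{C})$ and establish that the counts
\[
A_k(T) = \#\{\mathbf{c} \in \mathcal{C} : \mathrm{wt}(\mathbf{c}) = k,\; T \subseteq \mathrm{Supp}(\mathbf{c})\},
\]
together with the dual analogues $A_{\ell^\perp}^{\perp}(T)$, are independent of the choice of $T$ for every admissible weight. Since this invariance is exactly the $t$-design property, once it is proved for $\mathcal{C}$ and for $\mathcal{C}^\perp$ the conclusion follows.

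First I would pass to the punctured code $\mathcal{C}^T$ of length $\nu-t$ and to its dual $(\mathcal{C}^\perp)_T$. Because $t < \min\{d,d^\perp\}$, Proposition~\ref{prop:punc-shor} guarantees that $\mathcal{C}^T$ has the same dimension as $\mathcal{C}$ and that $(\mathcal{C}^\perp)_T$ has dimension exactly $t$ less than $\mathcal{C}^\perp$. The weight distribution $\{B_j^T\}_{j \ge 0}$ of $\mathcal{C}^T$ decomposes, via inclusion--exclusion on subsets of $T$, into a fixed linear combination of the $A_i(T')$ with $T' \subseteq T$. After the pieces indexed by $\ell \in \{d,\dots,\nu-t\} \setminus S$ are absorbed as known (that is, $T$-invariant by hypothesis) and the range $\ell < d$ is discarded as trivial, the only unknowns left are the $s$ quantities $\{A_\ell(T) : \ell \in S\}$.

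The next step is to apply the first $s+t$ Pless power moments to $\mathcal{C}^T$. These express $\sum_j j^m B_j^T$, for $m = 0,1,\dots,s+t-1$, as explicit polynomials in the low-weight coefficients $A_0^{\perp,T},\dots,A_{s+t-1}^{\perp,T}$ of $(\mathcal{C}^\perp)_T$. A codeword of $(\mathcal{C}^\perp)_T$ is precisely a codeword of $\mathcal{C}^\perp$ whose support is disjoint from $T$, so the hypothesis that supports of weight-$\ell^\perp$ codewords of $\mathcal{C}^\perp$ form $t$-designs for all $\ell^\perp \le s+t-1$ forces every $A_{\ell^\perp}^{\perp,T}$ to be $T$-invariant; hence the right-hand sides of the Pless moment equations are themselves $T$-invariant.

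Assembling the $s+t$ equations yields a linear system in the $s$ unknowns $\{A_\ell(T) : \ell \in S\}$ with $T$-invariant data on the right. The coefficient matrix is built from the power sums $\ell^m$ of the distinct integers $\ell \in S$, so it is Vandermonde-like. The main obstacle I foresee is precisely verifying non-singularity of an appropriate $s \times s$ submatrix, since several of the Pless moment rows may become dependent once the $\ell \notin S$ contributions have been projected out; a careful counting argument appears necessary to single out $s$ genuinely independent rows. Granted this, every $A_\ell(T)$ with $\ell \in S$ is forced to be $T$-invariant, completing the $t$-design property for $\mathcal{C}$. Applying the MacWilliams identity to the resulting complete, $T$-invariant weight distribution of $\mathcal{C}$ (inside the appropriate shortened codes) then propagates the invariance to all weights of $\mathcal{C}^\perp$, finishing the dual half.
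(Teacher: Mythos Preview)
The paper does not contain a proof of Theorem~\ref{thm:AGM}; it is quoted as a background result from \cite{Tang2019} in the preliminaries section, so there is no ``paper's own proof'' to compare against. Your outline is in fact the route taken in the original reference: fix a $t$-set $T$, pass to the punctured/shortened pair $(\mathcal{C}^T,(\mathcal{C}^\perp)_T)$, use the design hypotheses to render the low-weight dual coefficients $T$-invariant, and then invoke the MacWilliams/Pless relations to pin down the remaining $s$ unknowns $A_\ell(T)$, $\ell\in S$, via a Vandermonde-type system. The point you flag as an obstacle---isolating $s$ independent rows once the $\ell\notin S$ contributions have been subtracted---is handled there by working directly with the transformed MacWilliams identities (equivalently the Krawtchouk relations) rather than the raw power moments, which keeps the relevant submatrix visibly Vandermonde in the distinct values $\ell\in S$; you would need to make that step explicit to turn your sketch into a complete argument.
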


This broader criterion allows for more flexibility in verifying the $t$-design property and reveals deeper connections between the algebraic structure of linear codes and combinatorial design theory. These results not only provide constructive techniques for generating designs from codes but also contribute to the classification and understanding of code symmetries and automorphism groups.

\section{A generic construction of linear codes from weight functions} \label{generic}
In this section, a general second construction of a linear code is presented, derived from the set of codewords of a given linear code having a fixed weight. The dimension, number of weights, and weight distribution of the constructed code are further investigated. Moreover, connections are established between the existence of codewords in the constructed code whose weights equal the code length, blocking sets, and the extendability of the original code.

Let $\C$ be an $[n,k]_q$ code, and suppose its codewords are represented as row vectors of length $n$. Let $A_w$ denote the number of codewords in $\C$ with Hamming weight $w$, and let $\mathbf{c}_1, \mathbf{c}_2, \dots, \mathbf{c}_{A_w}$ represent all such codewords of weight $w$. Define $M_w$ to be an $n \times A_w$ matrix as follows:
\begin{eqnarray}\label{eq:Matrix}
M_w = \begin{bmatrix}
\mathbf{c}_1^{T} & \mathbf{c}_2^{T} & \cdots & \mathbf{c}_{A_w}^{T}
\end{bmatrix},	
\end{eqnarray}
where each $\mathbf{c}_i^T$ is the transpose of the row vector $\mathbf{c}_i$.
It is easy to observe that the linear code $\C^*(w)$, as defined in (\ref{eq:C*}), is precisely the linear code spanned by the row vectors of the matrix $M_w$. 

Let $G$ be a generator matrix of the code $\C^*(w)$. By the definition of $\C^*(w)$, if $\mathbf{v}$ is a column vector of $G$, then $a\mathbf{v}$ is also a column vector of $G$ for any $a \in \mathbb{F}_q^*$. Therefore, without loss of generality, $G$ can be written in the following block form:
\[
G = \left[ G_0 \mid \alpha G_0 \mid \alpha^2 G_0 \mid \cdots \mid \alpha^{q-2} G_0 \right],
\]
where $G_0$ is a submatrix of $G$, and $\alpha$ is a generator of the multiplicative group $\mathbb{F}_q^*$. 
Let $\overline{\C^*(w)}$ be the linear code generated by $G_0$. It is evident that, up to equivalence, $\overline{\C^*(w)}$ does not depend on the specific choice of $G_0$. We refer to this code as the projectivization of $\C^*(w)$. Clearly, if $\C^*(w)$ is an $[n, k, d]$ code, then $\overline{\C^*(w)}$ is a projective $\left[\frac{n}{q-1}, k, \frac{d}{q-1}\right]$ code.

The following example illustrates that the proposed construction method naturally yields optimal linear codes.
\begin{example}
Let $\mathcal{C}$ be the ternary Golay code with parameters $[11,6,5]$, whose generator matrix is given by
\[
G = 
\left[
\begin{array}{ccccccccccc}
1 & 0 & 0 & 0 & 0 & 0 & 2 & 0 & 1 & 2 & 1 \\
0 & 1 & 0 & 0 & 0 & 0 & 1 & 2 & 2 & 2 & 1 \\
0 & 0 & 1 & 0 & 0 & 0 & 1 & 1 & 1 & 0 & 1 \\
0 & 0 & 0 & 1 & 0 & 0 & 1 & 1 & 0 & 2 & 2 \\
0 & 0 & 0 & 0 & 1 & 0 & 2 & 1 & 2 & 2 & 0 \\
0 & 0 & 0 & 0 & 0 & 1 & 0 & 2 & 1 & 2 & 2
\end{array}
\right].
\]
The weight enumerator of $\mathcal{C}$ is
\[
W_{\mathcal{C}}(x,y) = 1 + 132x^6y^5 + 132x^5y^6 + 330x^3y^8 + 110x^2y^9 + 24y^{11}.
\]
Then, the code $\overline{\mathcal{C}^*(9)}$ is a $[55,5,36]$ code with weight enumerator
\[
W_{\overline{\mathcal{C}^*(9)}}(x,y) = 1 + 220x^{19}y^{36} + 22x^{10}y^{45},
\]
which is optimal according to the Griesmer bound.
Similarly, the code $\overline{\mathcal{C}^*(11)}$ is a $[12,6,6]$ code with weight enumerator
\[
W_{\overline{\mathcal{C}^*(11)}}(x,y) = 1 + 264x^6y^6 + 440x^3y^9 + 24y^{12},
\]
which also meets the Griesmer bound and is, in fact, equivalent to the extended ternary Golay code.
\end{example}

The following theorem provides a more explicit characterization of the linear code $\overline{\mathcal{C}^*(w)}$.

\begin{theorem}\label{thm:def-C-w}
Let $\mathcal{C}$ be a projective $[n, k]$ linear code over $\mathbb{F}_q$, and suppose that the codewords of $\mathcal{C}$ with Hamming weight $w$ span the entire code $\mathcal{C}$. Let $G = [\mathbf{g}_1~\mathbf{g}_2~\cdots~\mathbf{g}_n]$ be a generator matrix of $\mathcal{C}$, where each $\mathbf{g}_i \in \mathbb{F}_q^k$ is a column vector. Define the set
\[
D = \left\{ \mathbf{v} \in \mathbb{F}_q^k : \mathrm{wt}\left( \left( \langle \mathbf{v}, \mathbf{g}_i \rangle \right)_{1 \le i \le n} \right) = w \right\},
\]
and let $\overline{D}$ denote the corresponding projective point set, obtained by selecting a single representative from each 1-dimensional subspace spanned by vectors in $D$.
Then,  $\overline{\mathcal{C}^*(w)}$ is monomially equivalent to the linear code $\mathcal{C}_{\overline{D}}$ defined by the defining set $\overline{D}$.
\end{theorem}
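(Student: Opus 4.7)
The plan is to realize $\mathcal{C}^*(w)$ as an instance of the defining-set construction by recognizing the entries of $M_w$ as values of a bilinear pairing, and then to descend to the projectivization by collapsing the obvious $\mathbb{F}_q^*$-action on $D$. Since $G$ has rank $k$, the map $\mathbf{v}\mapsto(\langle \mathbf{v},\mathbf{g}_j\rangle)_{1\le j\le n}$ is a bijection from $\mathbb{F}_q^k$ onto $\mathcal{C}$, and by the definition of $D$ it restricts to a bijection from $D$ onto $\mathrm{wt}^{-1}(w)$. I would use this to index the columns of $M_w$ by the vectors $\mathbf{v}\in D$, so that the $(j,\mathbf{v})$-entry of $M_w$ equals $\langle \mathbf{v},\mathbf{g}_j\rangle$.

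The central calculation is then to describe the row space of $M_w$ coordinate-freely. For any $\mathbf{u}\in\mathbb{F}_q^n$, the $\mathbf{v}$-component of $\mathbf{u}^{T}M_w$ equals $\sum_{j=1}^{n}u_j\langle \mathbf{v},\mathbf{g}_j\rangle=\langle \mathbf{v},\sum_j u_j\mathbf{g}_j\rangle$. Because $\mathbf{g}_1,\ldots,\mathbf{g}_n$ span $\mathbb{F}_q^k$, the vector $\mathbf{w}:=\sum_j u_j \mathbf{g}_j$ ranges over all of $\mathbb{F}_q^k$ as $\mathbf{u}$ varies, so
\begin{equation*}
\mathcal{C}^*(w)=\bigl\{(\langle \mathbf{v},\mathbf{w}\rangle)_{\mathbf{v}\in D}:\mathbf{w}\in\mathbb{F}_q^k\bigr\}.
\end{equation*}
Since every linear functional on $\mathbb{F}_q^k$ is represented by a unique inner product $\mathbf{v}\mapsto\langle \mathbf{v},\mathbf{w}\rangle$, the right-hand side is precisely the defining-set code $\mathcal{C}_D$. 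The spanning hypothesis on weight-$w$ codewords translates into $\mathrm{span}(D)=\mathbb{F}_q^k$, which guarantees that $\mathcal{C}^*(w)=\mathcal{C}_D$ attains the full dimension $k$, and in particular ensures that the two sides agree as subspaces rather than merely one being contained in the other.

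To descend to the projectivizations, I would use the scaling invariance of the Hamming weight: if $\mathbf{v}\in D$ and $a\in\mathbb{F}_q^*$, then $a\mathbf{v}\in D$ as well, so $D$ is a disjoint union of punctured $1$-dimensional subspaces, one above each point of $\overline{D}$. Fixing a lifting $\widetilde{\overline{D}}\subset D$ of $\overline{D}$ and ordering $D$ as $\widetilde{\overline{D}},\,\alpha\widetilde{\overline{D}},\,\ldots,\,\alpha^{q-2}\widetilde{\overline{D}}$ realizes the block decomposition $G=[G_0\mid\alpha G_0\mid\cdots\mid\alpha^{q-2}G_0]$ of the defining-set generator matrix, where the column of $G_0$ indexed by $\mathbf{v}\in\widetilde{\overline{D}}$ is $\mathbf{v}$ itself. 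Hence the row space of $G_0$, which is $\overline{\mathcal{C}^*(w)}$ by construction, coincides with $\mathcal{C}_{\widetilde{\overline{D}}}$. Any two choices of lifting differ by nonzero column scalings together with a possible permutation of coordinates, which is exactly a monomial transformation, so $\overline{\mathcal{C}^*(w)}$ is monomially equivalent to $\mathcal{C}_{\overline{D}}$.

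The main obstacle I anticipate is notational rather than substantive: keeping the row-versus-column conventions straight between $M_w$ and a generator matrix of $\mathcal{C}_D$, and ensuring that the spanning hypothesis is invoked at the right moment so that the identification is an honest equality of $k$-dimensional subspaces of $\mathbb{F}_q^{A_w}$ and not a mere numerical coincidence of parameters. Once the bilinear pairing $\langle \mathbf{v},\mathbf{w}\rangle$ is recognized as the bridge between the two constructions, the remainder of the argument is a definition chase.
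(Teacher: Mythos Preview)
Your proposal is correct and follows essentially the same route as the paper: identify $\mathrm{wt}^{-1}(w)$ with $D$ via the encoding isomorphism $\mathbf{v}\mapsto(\langle\mathbf{v},\mathbf{g}_i\rangle)_i$, recognize $\mathcal{C}^*(w)$ as the defining-set code $\mathcal{C}_D$, and then pass to projectivizations. The paper's proof is terser---it simply observes that the encoding map restricts to a bijection $D\to\mathrm{wt}^{-1}(w)$ and hence $\mathcal{C}_D$ and $\mathcal{C}^*(w)$ are monomially equivalent---whereas you make the bilinear pairing and the block decomposition $[G_0\mid\alpha G_0\mid\cdots]$ explicit, but the underlying argument is the same.
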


\begin{proof}
First, note that the code $\mathcal{C}$ can be described as
\[
\mathcal{C} = \left\{ \left( \langle \mathbf{v}, \mathbf{g}_i \rangle \right)_{1 \le i \le n} : \mathbf{v} \in \mathbb{F}_q^k \right\}.
\]
By the definition of $D$, the linear map from $\mathbb{F}_q^k$ to $\mathcal{C}$ given by
\[
\mathbf{v} \mapsto \left( \langle \mathbf{v}, \mathbf{g}_i \rangle \right)_{1 \le i \le n}
\]
restricts to a linear isomorphism between $D$ and $\mathrm{wt}^{-1}(w)$, the subset of $\mathcal{C}$ consisting of all codewords of weight $w$. Therefore, the linear codes $\mathcal{C}_D$ and $\mathcal{C}^*(w)$ are monomially equivalent.
Passing to the projective versions, the projective code $\mathcal{C}_{\overline{D}}$ defined by $\overline{D}$ is also monomially equivalent to $\overline{\mathcal{C}^*(w)}$.
\end{proof}

The following theorem demonstrates a form of duality in the proposed construction. Under certain conditions, applying the construction twice returns a code that is equivalent to the original one.

\begin{theorem}\label{thm:C**-C}
Let $\mathcal{C}$ be a projective $[n, k]$ linear code over $\mathbb{F}_q$ and let
$
\lambda = \frac{w A_w(\mathcal{C})}{n(q - 1)}.
$
Assume that the codewords of $\mathcal{C}$ with Hamming weight $w$ span the entire code $\mathcal{C}$, and that the supports of the weight-$w$ codewords form a $1$-design (possibly with repeated blocks). Suppose further that
$
A_{\lambda}(\overline{\mathcal{C}^*(w)}) = n(q - 1).
$
Then, $\overline{ \overline{\mathcal{C}^*(w)}^*(\lambda) }$ is monomially equivalent to the original code $\mathcal{C}$.
\end{theorem}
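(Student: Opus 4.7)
The plan is to trace carefully how each coordinate of $\mathcal{C}$ reappears inside the double construction, with the $1$-design hypothesis forcing the weights to line up and the projectivity of $\mathcal{C}$ preventing collisions. First I fix representatives $\tilde{\mathbf{c}}_1,\dots,\tilde{\mathbf{c}}_N$ of the $N=A_w/(q-1)$ projective classes of weight-$w$ codewords of $\mathcal{C}$. By hypothesis they span $\mathcal{C}$, so every codeword of $\overline{\mathcal{C}^*(w)}$ arises from a linear functional $\ell\in L(\mathcal{C})$ as $(\ell(\tilde{\mathbf{c}}_j))_{j=1}^N$. For each coordinate $i\in\{1,\dots,n\}$, the projection $\pi_i\in L(\mathcal{C})$ supplies a codeword $\mathbf{y}_i=\bigl((\tilde{\mathbf{c}}_j)_i\bigr)_{j=1}^N$ of $\overline{\mathcal{C}^*(w)}$ whose $j$-th entry is nonzero iff $i\in\mathrm{Supp}(\tilde{\mathbf{c}}_j)$; by the $1$-design hypothesis the number of such $j$ is the common replication number, which via the standard incidence count $bk=\lambda v$ evaluates to $\lambda=wA_w/(n(q-1))$. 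Hence every $\mathbf{y}_i$ has weight exactly $\lambda$.

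Next I claim that $\mathbf{y}_1,\dots,\mathbf{y}_n$ sit in pairwise distinct projective classes of $\overline{\mathcal{C}^*(w)}$. Indeed, $\mathbf{y}_{i'}=c\,\mathbf{y}_i$ with $c\in\mathbb{F}_q^*$ would force $(\tilde{\mathbf{c}}_j)_{i'}=c(\tilde{\mathbf{c}}_j)_i$ for every $j$, and since the $\tilde{\mathbf{c}}_j$ span $\mathcal{C}$ this would propagate to $x_{i'}=c\,x_i$ for every $\mathbf{x}\in\mathcal{C}$, contradicting the projectivity of $\mathcal{C}$. The $\mathbf{y}_i$ and their $\mathbb{F}_q^*$-multiples thus account for $n(q-1)$ weight-$\lambda$ codewords, and the hypothesis $A_\lambda(\overline{\mathcal{C}^*(w)})=n(q-1)$ forces these to be all of them.

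Finally, I read off the double construction via the matrix formula~\eqref{eq:Matrix}: a generator matrix of $\overline{\overline{\mathcal{C}^*(w)}^*(\lambda)}$ may be taken to be the $N\times n$ matrix $M'$ whose $i$-th column is $\mathbf{y}_i$, i.e.\ whose $j$-th row is precisely $\tilde{\mathbf{c}}_j$. A row vector $\mathbf{u}=(u_1,\dots,u_N)\in\mathbb{F}_q^N$ then yields the $n$-tuple whose $i$-th entry is $\sum_j u_j(\tilde{\mathbf{c}}_j)_i=\pi_i\bigl(\sum_j u_j\tilde{\mathbf{c}}_j\bigr)$, so the row span of $M'$ equals $\mathrm{span}(\tilde{\mathbf{c}}_1,\dots,\tilde{\mathbf{c}}_N)=\mathcal{C}$; different choices of scalar representatives at either step only rescale columns of $M'$, which yields the claimed monomial equivalence. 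I expect the distinctness step to be the main technical point, since it is the sole place invoking the projectivity of $\mathcal{C}$ and is what prevents several coordinates of $\mathcal{C}$ from collapsing inside $\overline{\mathcal{C}^*(w)}$ and thereby undermining the numerical hypothesis $A_\lambda=n(q-1)$.
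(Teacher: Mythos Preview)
Your proof is correct and follows essentially the same route as the paper: both set up the $N\times n$ array whose rows are representatives of the weight-$w$ codewords and whose columns are the coordinate projections, observe via the $1$-design hypothesis that each column has weight $\lambda$, and then use the numerical hypothesis $A_\lambda=n(q-1)$ to conclude that these columns exhaust the weight-$\lambda$ codewords so that the second application of the construction recovers $\mathcal{C}$. Your argument is in fact more explicit than the paper's on the distinctness-of-projective-classes step (which the paper handles only implicitly via the projectivity of $\mathcal{C}$), but the overall structure is the same.
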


\begin{proof}

Let $\mathbf{g}_i$ ($1 \leq i \leq n$), $D$, and $\overline{D}$ be defined as in Theorem~\ref{thm:def-C-w}.  
Denote by $m(q-1)$ the number of codewords of $\mathcal{C}$ with Hamming weight $w$.  
Then there exist $m$ pairwise non-collinear vectors $\mathbf{v}_1, \dots, \mathbf{v}_m \in \mathbb{F}_q^k$ such that $\overline{D}$ is precisely the projective point set generated by them.  
Consequently, in Fig.~\ref{Fig:Codewords-C-w}, the $m$ rows represent the direction set of all weight-$w$ codewords of $\mathcal{C}$.  
Since the weight-$w$ codewords span the whole code $\mathcal{C}$, the vectors $\mathbf{v}_1, \dots, \mathbf{v}_m$ span $\mathbb{F}_q^k$.  
By Theorem~\ref{thm:def-C-w}, the set $\{\mathbf{v}_1, \dots, \mathbf{v}_m\}$ can thus serve as a defining set for $\overline{\mathcal{C}^*(w)}$.  
Furthermore, because the supports of the weight-$w$ codewords form a $1$-design, each column in Fig.~\ref{Fig:Codewords-C-w} corresponds to a codeword of $\overline{\mathcal{C}^*(w)}$ with weight $\lambda$.  
Therefore, the columns of Fig.~\ref{Fig:Codewords-C-w} represent exactly the direction set of the weight-$\lambda$ codewords of $\overline{\mathcal{C}^*(w)}$.  
Applying Theorem~\ref{thm:def-C-w} once again, we deduce that $\overline{ \overline{\mathcal{C}^*(w)}^*(\lambda) }$ is defined by the set $\{ \mathbf{g}_1, \dots, \mathbf{g}_n\}$.  
Hence, $\overline{ \overline{\mathcal{C}^*(w)}^*(\lambda) }$ is monomially equivalent to the original code $\mathcal{C}$.

	\begin{figure}[htbp]
  \centering
  \begin{center}
\begin{adjustbox}{max width=\textwidth}
\begin{tikzpicture}[scale=1, every node/.style={font=\footnotesize}]
  \foreach \i in {0,...,6} {
    \draw (0,-\i) -- (6,-\i);
  }

  \foreach \j in {0,...,6} {
    \draw (\j,0) -- (\j,-6);
  }

  \node at (0.5, 0.5) {$\mathbf{g}_1$};
  \node at (1.5, 0.5) {$\mathbf{g}_2$};
  \node at (2.5, 0.5) {$\dots$};
  \node at (3.5, 0.5) {\textcolor{red}{$\mathbf{g}_j$}};
  \node at (4.5, 0.5) {$\cdots$};
  \node at (5.5, 0.5) {$\mathbf{g}_n$};

  \node at (-0.5, -0.5) {$\mathbf{v}_1$};
  \node at (-0.5, -1.5) {$\mathbf{v}_2$};
  \node at (-0.5, -2.5) {$\vdots$};
  \node at (-0.5, -3.5) {\textcolor{green}{$\mathbf{v}_i$}};
  \node at (-0.5, -4.5) {$\vdots$};
  \node at (-0.5, -5.5) {$\mathbf{v}_m$};

  \node at (0.5, -0.5) {$\langle \mathbf{v}_1, \mathbf{g}_1 \rangle$};
  \node at (1.5, -0.5) {$\langle \mathbf{v}_1, \mathbf{g}_2 \rangle$};
  \node at (2.5, -0.5) {$\dots$};
  \node at (3.5, -0.5) {\textcolor{red}{$\langle \mathbf{v}_1, \mathbf{g}_j \rangle$}};
  \node at (4.5, -0.5) {$\cdots$};
  \node at (5.5, -0.5) {$\langle \mathbf{v}_1, \mathbf{g}_n \rangle$};

  \node at (0.5, -1.5) {$\langle \mathbf{v}_2, \mathbf{g}_1 \rangle$};
  \node at (1.5, -1.5) {$\langle \mathbf{v}_2, \mathbf{g}_2 \rangle$};
  \node at (2.5, -1.5) {$\dots$};
  \node at (3.5, -1.5) {\textcolor{red}{$\langle \mathbf{v}_2, \mathbf{g}_j \rangle$}};
  \node at (4.5, -1.5) {$\cdots$};
  \node at (5.5, -1.5) {$\langle \mathbf{v}_2, \mathbf{g}_n \rangle$};

  \node at (0.5, -2.5) {$\vdots$};
  \node at (1.5, -2.5) {$\vdots$};
  \node at (2.5, -2.5) {$\ddots$};
  \node at (3.5, -2.5) {\textcolor{red}{$\vdots$}};
  \node at (4.5, -2.5) {$\cdots$};
  \node at (5.5, -2.5) {$\vdots$};

  \node at (0.5, -3.5) {\textcolor{green}{$\langle \mathbf{v}_i, \mathbf{g}_1 \rangle$}};
  \node at (1.5, -3.5) {\textcolor{green}{$\langle \mathbf{v}_i, \mathbf{g}_2 \rangle$}};
  \node at (2.5, -3.5) {\textcolor{green}{$\dots$}};
  \node at (3.5, -3.5) {\textcolor{green}{$\langle \mathbf{v}_i,$}\textcolor{red}{$\mathbf{g}_j \rangle$}};
  \node at (4.5, -3.5) {\textcolor{green}{$\cdots$}};
  \node at (5.5, -3.5) {\textcolor{green}{$\langle \mathbf{v}_i, \mathbf{g}_n \rangle$}};
  \node at (7.1, -3.5) {\textcolor{green}{Row weight:~$w$}};

  \node at (0.5, -4.5) {$\vdots$};
  \node at (1.5, -4.5) {$\vdots$};
  \node at (2.5, -4.5) {$\cdots$};
  \node at (3.5, -4.5) {\textcolor{red}{$\vdots$}};
  \node at (4.5, -4.5) {$\ddots$};
  \node at (5.5, -4.5) {$\vdots$};

  \node at (0.5, -5.5) {$\langle \mathbf{v}_m, \mathbf{g}_1 \rangle$};
  \node at (1.5, -5.5) {$\langle \mathbf{v}_m, \mathbf{g}_2 \rangle$};
  \node at (2.5, -5.5) {$\dots$};
  \node at (3.5, -5.5) {\textcolor{red}{$\langle \mathbf{v}_m, \mathbf{g}_j \rangle$}};
  \node at (4.5, -5.5) {$\cdots$};
  \node at (5.5, -5.5) {$\langle \mathbf{v}_m, \mathbf{g}_n \rangle$};
  \node at (3.5, -6.3) {\textcolor{red}{Column weight:~$\lambda$}};
\end{tikzpicture}
\end{adjustbox}
  \end{center}
  \caption{Codewords of weight $\lambda$ in $\overline{\mathcal{C}^*(w)}$
}\label{Fig:Codewords-C-w}
\end{figure}

	\end{proof}

The following proposition establishes the relationship between the weight of subcodes in $\C^*(w)$ and the number of codewords of weight $w$ in subcodes of $\C$.
\begin{proposition}\label{prop:wt=Aw}
	Let $\C$ be an $[n,k]$ code with $A_{w} \neq 0$. Let $\mathcal D$ be an $r$-dimensional subcode of $\C^*(w)$. Let $E: L(\C) \to \C^*(w)$ be the linear map given by 
$\ell \mapsto \left( \ell(\mathbf{c}) \right)_{\mathbf{c} \in \wt^{-1}(w)}$. Let $\ker \left( E^{-1}(\mathcal{D}) \right)$ be the set consisting of all $\mathbf{c} \in \C$ such that $\ell(\mathbf{c}) = 0$ for all $\ell \in E^{-1}(\mathcal{D})$. Then $\ker \left( E^{-1}(\mathcal{D}) \right)$ is a subcode of $\C$ with codimension $\dim (\C)-\dim \left ( \ker ( E) \right ) - r$. Furthermore, 
\[
\wt(\mathcal{D}) = A_w(\C) - A_w\left( \ker \left( E^{-1}(\mathcal{D}) \right) \right),
\]
where \(A_w(\C)\) represents the number of codewords of weight \(w\) in \(\C\), and \(A_w\left( \ker \left( E^{-1}(\mathcal{D}) \right) \right)\) denotes the number of codewords of weight \(w\) in the subcode \(\ker \left( E^{-1}(\mathcal{D}) \right)\) of $\C$.
\end{proposition}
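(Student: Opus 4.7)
The plan is to recognize both parts of the statement as elementary duality facts, viewing $L(\C)$ as the dual space $\C^*$ so that the operation $S \mapsto \{\mathbf c \in \C : \ell(\mathbf c)=0 \text{ for all } \ell \in S\}$ is the annihilator map. For the first claim, note that by the very definition of $\C^*(w)$, the evaluation map $E: L(\C) \to \C^*(w)$ is surjective, so $\dim E^{-1}(\mathcal D) = r + \dim(\ker E)$. Since the annihilator of any subspace $S \subseteq L(\C)$ inside $\C$ has dimension $\dim \C - \dim S$, this immediately gives the asserted dimension count for $\ker(E^{-1}(\mathcal D))$.

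For the weight identity I would translate the effective length $\wt(\mathcal D) = \#\mathrm{Supp}(\mathcal D)$ into the language of the construction. The coordinates of $\C^*(w)$ are indexed by the elements of $\wt^{-1}(w)$, and the coordinate at $\mathbf c$ of the codeword $E(\ell)$ is precisely $\ell(\mathbf c)$. Consequently, a coordinate $\mathbf c \in \wt^{-1}(w)$ lies in the support of $\mathcal D$ if and only if some $\ell \in E^{-1}(\mathcal D)$ satisfies $\ell(\mathbf c) \neq 0$, i.e., $\mathbf c \notin \ker(E^{-1}(\mathcal D))$. This is the pivotal observation; everything else is bookkeeping.

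From here a complementary count finishes the job:
\[
\wt(\mathcal D) \;=\; |\wt^{-1}(w)| \;-\; \bigl|\wt^{-1}(w) \cap \ker(E^{-1}(\mathcal D))\bigr| \;=\; A_w(\C) \;-\; A_w\bigl(\ker(E^{-1}(\mathcal D))\bigr),
\]
the final equality holding because the intersection is exactly the set of weight-$w$ codewords of $\C$ that also belong to the subcode $\ker(E^{-1}(\mathcal D))$. I do not anticipate any genuine obstacle; the only care point is the clean identification of the coordinates of $\C^*(w)$ with $\wt^{-1}(w)$ so that evaluation at a coordinate $\mathbf c$ matches evaluation of any representative functional $\ell$ on $\mathbf c$. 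Once that bookkeeping is fixed, both conclusions follow from a single application of dual-annihilator arithmetic together with the complementary counting above.
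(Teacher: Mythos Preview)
Your proposal is correct and follows essentially the same route as the paper: the paper also obtains $\dim E^{-1}(\mathcal D)=\dim(\ker E)+r$ (phrased via a short exact-sequence diagram rather than your direct appeal to surjectivity of $E$), applies the annihilator dimension formula, and then identifies $\mathrm{Supp}(\mathcal D)=\wt^{-1}_{\C}(w)\setminus \wt^{-1}_{\ker(E^{-1}(\mathcal D))}(w)$ exactly as you do. The only cosmetic difference is the commutative diagram in place of your one-line surjectivity remark.
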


\begin{proof}
	According to the definition of $E$, the diagram below is commutative, with both rows forming exact sequences.
\[
\begin{tikzcd}
0\arrow[r]&\ker (E) \arrow[r] \arrow[d, equal] & E^{-1}(\mathcal{D}) \arrow[r, "E|_{E^{-1}(\mathcal{D})}"] \arrow[d, hook] &  \mathcal{D} \arrow[d, hook] \arrow[r]  & 0 \\
0\arrow[r]& \ker (E) \arrow[r] & L(\C) \arrow[r, "E"'] & \C^*(w) \arrow[r]  & 0 
\end{tikzcd}.
\]
From this we deduce that 
\[\dim (E^{-1} (\mathcal D))=\dim (\ker (E)) + \dim (\mathcal D).\]
Consequently, \( \ker\left( E^{-1}(\mathcal{D}) \right) \) is a linear subspace of \( \mathcal{C} \), and its dimension is given by
\[
\begin{aligned}
\dim\left( \ker\left( E^{-1}(\mathcal{D}) \right) \right) 
&= \dim(\mathcal{C}) - \dim\left( E^{-1}(\mathcal{D}) \right) \\
&= \dim(\mathcal{C}) - \dim\left( \ker(E) \right) - r.
\end{aligned}
\]
By the definition of \( \mathrm{Supp}(\mathcal{D}) \), we have
\[
\mathrm{Supp}(\mathcal{D}) = \mathrm{wt}^{-1}_{\mathcal{C}}(w) \setminus \mathrm{wt}^{-1}_{\ker\left( E^{-1}(\mathcal{D}) \right)}(w).
\]
Therefore,
\[
\mathrm{wt}(\mathcal{D}) = A_w(\mathcal{C}) - A_w\left( \ker \left( E^{-1}(\mathcal{D}) \right) \right),
\]
which is precisely the desired conclusion.

\end{proof}

Let \( \mathcal{C} \) be an \([n, k]\) linear code over the finite field \( \mathbb{F}_q \).
Let \( \left[ \begin{smallmatrix} \mathcal{C} \\ r \end{smallmatrix} \right]_q \) denote the set of all \( r \)-dimensional subcodes of a linear code \( \mathcal{C} \subseteq \mathbb{F}_q^n \). The cardinality of \( \left[ \begin{smallmatrix} \mathcal{C} \\ r \end{smallmatrix} \right]_q \) is given by the Gaussian binomial coefficient \( \left[ \begin{smallmatrix} \dim(\mathcal{C}) \\ r \end{smallmatrix} \right]_q \), which counts the number of \( r \)-dimensional subspaces in a \( \dim(\mathcal{C}) \)-dimensional vector space over \( \mathbb{F}_q \).
The following corollary shows that studying the value distribution of the weight function on \( \left[ \begin{smallmatrix} \C^*(w) \\ r \end{smallmatrix} \right]_q \) is equivalent to studying the value distribution of the function on  \( \left[ \begin{smallmatrix} \mathcal{C} \\  k-r \end{smallmatrix} \right]_q \) given by $\mathcal{B} \mapsto A_w(\mathcal{B})$ for the case $\dim(\C^*(w)) = \dim(\C)$, where \( \left[ \begin{smallmatrix} \mathcal{C} \\  k-r \end{smallmatrix} \right]_q \).

\begin{corollary}\label{cor:C*-distribution}
	Let $\C$ be an $[n,k]$ code with $A_{w} \neq 0$. Suppose that $\dim(\C^*(w))= \dim(\C) $. Then the $r$-th generalized Hamming weight $d_r \left( \C^*(w) \right)$ is equal to
\[
A_w(\C) - \max \left\{ A_w(\mathcal{B}) : \mathcal{B} \text{ is a subcode of } \C \text{ of codimension } r \right\}.\]
Moreover, the $r$-dimensional weight distribution of $\C^*(w)$ is determined by
\[
A^{(r)}_i(\C^*(w)) = \# \left\{ \mathcal{B} : A_w(\mathcal{B}) = A_w(\C) - i \text{ and } \dim(\mathcal{B}) = k - r \right\},
\]
where $\mathcal{B}$ runs over all subcodes of $\C$ of codimension $r$.

\end{corollary}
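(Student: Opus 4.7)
The strategy is to specialize Proposition~\ref{prop:wt=Aw} under the additional assumption that $\dim(\C^*(w)) = \dim(\C)$. Since the map $E : L(\C) \to \C^*(w)$ defined by $\ell \mapsto (\ell(\mathbf{c}))_{\mathbf{c} \in \wt^{-1}(w)}$ is surjective by the very definition of $\C^*(w)$, and $\dim L(\C) = \dim \C = k$, the hypothesis forces $\dim \ker(E) = 0$. Hence $E$ is a linear isomorphism, and the dimension formula in Proposition~\ref{prop:wt=Aw} reduces to $\dim \ker\bigl(E^{-1}(\mathcal{D})\bigr) = k - r$ whenever $\mathcal{D}$ is an $r$-dimensional subcode of $\C^*(w)$.

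Next, I would set up the bijection that makes the two statements essentially tautological. Sending $\mathcal{D} \mapsto E^{-1}(\mathcal{D})$ is a bijection between $r$-dimensional subcodes of $\C^*(w)$ and $r$-dimensional subspaces of $L(\C)$. Composing with the standard annihilator-type map $U \mapsto \{\mathbf{c} \in \C : \ell(\mathbf{c}) = 0 \text{ for all } \ell \in U\}$, which is a bijection between $r$-dimensional subspaces of $L(\C)$ and codimension-$r$ subcodes of $\C$, one obtains a bijection
\[
\Phi : \Bigl[\begin{smallmatrix} \C^*(w) \\ r \end{smallmatrix}\Bigr]_q \longrightarrow \Bigl[\begin{smallmatrix} \C \\ k-r \end{smallmatrix}\Bigr]_q, \qquad \Phi(\mathcal{D}) = \ker\bigl(E^{-1}(\mathcal{D})\bigr).
\]

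Now Proposition~\ref{prop:wt=Aw} yields the identity $\wt(\mathcal{D}) = A_w(\C) - A_w(\Phi(\mathcal{D}))$. Taking the minimum over $\mathcal{D}$ on the left gives $d_r(\C^*(w))$ by definition, while on the right it corresponds to maximizing $A_w(\mathcal{B})$ over $\mathcal{B} \in \bigl[\begin{smallmatrix} \C \\ k-r \end{smallmatrix}\bigr]_q$, yielding the first claim. For the weight distribution, the same identity gives $\wt(\mathcal{D}) = i$ if and only if $A_w(\Phi(\mathcal{D})) = A_w(\C) - i$; since $\Phi$ is a bijection, counting such $\mathcal{D}$ on the left equals counting the corresponding $\mathcal{B}$ on the right, which is exactly the stated formula.

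The main obstacle, which is relatively minor, is being careful about the duality map $U \mapsto \ker(U)$ from subspaces of $L(\C)$ to subcodes of $\C$: one must verify it is inclusion-reversing, sends $r$-dimensional subspaces to codimension-$r$ subcodes, and is bijective. This is standard finite-dimensional duality and is already implicit in the dimension count of Proposition~\ref{prop:wt=Aw}, so no real computation is required beyond invoking that proposition and checking that the hypothesis $\dim(\C^*(w)) = \dim(\C)$ forces $E$ to be injective.
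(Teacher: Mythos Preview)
Your proposal is correct and follows essentially the same approach as the paper: both observe that the hypothesis $\dim(\C^*(w)) = \dim(\C)$ forces $\ker(E) = 0$, and then invoke Proposition~\ref{prop:wt=Aw} to read off both conclusions. Your write-up simply makes explicit the bijection $\Phi$ between $r$-dimensional subcodes of $\C^*(w)$ and codimension-$r$ subcodes of $\C$ that the paper leaves implicit in the phrase ``follows directly.''
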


\begin{proof}
	Since \( \dim(\mathcal{C}^*(w)) = \dim(\mathcal{C}) \), it follows that \( \dim(\ker(E)) = 0 \). Consequently, the desired conclusion follows directly from Proposition \ref{prop:wt=Aw}.
\end{proof}

The following result regarding the weight distribution of the projection code \( \overline{\mathcal{C}^*(w)} \) follows directly from Corollary \ref{cor:C*-distribution}.

\begin{corollary}\label{cor:projC*-distribution}
Let \( \mathcal{C} \) be an \([n, k]\) code with \( A_w \neq 0 \), and suppose that \( \dim(\mathcal{C}^*(w)) = \dim(\mathcal{C}) \). Then, the \( r \)-th generalized Hamming weight \( d_r\left( \overline{\mathcal{C}^*(w)} \right) \) is given by
\[
d_r\left( \overline{\mathcal{C}^*(w)} \right) = \frac{1}{q-1} \left( A_w(\mathcal{C}) - \max \left\{ A_w(\mathcal{B}) : \mathcal{B} \subseteq \mathcal{C}, \ \operatorname{codim}(\mathcal{B}) = r \right\} \right).
\]
Moreover, the \( r \)-dimensional weight distribution of \( \overline{\mathcal{C}^*(w)} \) is determined by
\[
A^{(r)}_i\left( \overline{\mathcal{C}^*(w)} \right) = \# \left\{ \mathcal{B} \subseteq \mathcal{C} : \dim(\mathcal{B}) = k - r, \ A_w(\mathcal{B}) = A_w(\mathcal{C}) - (q - 1)i \right\},
\]
where the set \( \mathcal{B} \) ranges over all subcodes of \( \mathcal{C} \) with codimension \( r \).
\end{corollary}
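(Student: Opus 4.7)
The plan is to reduce the statement to Corollary \ref{cor:C*-distribution} by exploiting the explicit block structure relating $\mathcal{C}^*(w)$ and its projectivization $\overline{\mathcal{C}^*(w)}$. Recall that a generator matrix of $\mathcal{C}^*(w)$ was written in the form
\[
G = \bigl[\, G_0 \mid \alpha G_0 \mid \alpha^2 G_0 \mid \cdots \mid \alpha^{q-2} G_0 \,\bigr],
\]
where $G_0$ generates $\overline{\mathcal{C}^*(w)}$. In particular $G$ and $G_0$ have the same row space as an abstract vector space, so the natural map sending a subspace of the row space of $G_0$ to the corresponding subspace of the row space of $G$ induces a dimension-preserving bijection between $r$-dimensional subcodes of $\overline{\mathcal{C}^*(w)}$ and $r$-dimensional subcodes of $\mathcal{C}^*(w)$.

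Next I would check how this bijection behaves with respect to the weight of a subcode. For a single codeword $\mathbf{u} \in \overline{\mathcal{C}^*(w)}$, its image in $\mathcal{C}^*(w)$ is $\bigl(\mathbf{u}, \alpha \mathbf{u}, \ldots, \alpha^{q-2}\mathbf{u}\bigr)$, whose support is the disjoint union of $q-1$ translated copies of $\mathrm{Supp}(\mathbf{u})$. Consequently, if $\overline{\mathcal{D}}$ is an $r$-dimensional subcode of $\overline{\mathcal{C}^*(w)}$ corresponding to the subcode $\mathcal{D}$ of $\mathcal{C}^*(w)$, then
\[
\mathrm{wt}(\mathcal{D}) = (q-1)\,\mathrm{wt}(\overline{\mathcal{D}}),
\]
since the support of $\mathcal{D}$ is the disjoint union of $q-1$ copies of the support of $\overline{\mathcal{D}}$.

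Combining the bijection with the weight-scaling identity, the $r$-th generalized Hamming weight $d_r(\overline{\mathcal{C}^*(w)})$ equals $d_r(\mathcal{C}^*(w))/(q-1)$, and the $r$-dimensional weight distribution satisfies $A^{(r)}_i(\overline{\mathcal{C}^*(w)}) = A^{(r)}_{(q-1)i}(\mathcal{C}^*(w))$. Substituting these two identities into the formulas provided by Corollary \ref{cor:C*-distribution} yields both statements of the corollary immediately.

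Since the result is essentially a translation of Corollary \ref{cor:C*-distribution} through the scaling $i \mapsto (q-1)i$, there is no real obstacle beyond being careful with this bijection and the weight-rescaling step. The only point that deserves a line of justification is that $r$-dimensional subcodes on either side really do correspond one-to-one under the replication pattern imposed by $G$, which follows directly from the fact that replicating each column $q-1$ times does not alter the abstract row space.
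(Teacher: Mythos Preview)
Your proposal is correct and follows essentially the same approach as the paper: the paper's proof simply states that the conclusion follows directly from Corollary~\ref{cor:C*-distribution} and the definition of $\overline{\mathcal{C}^*(w)}$, and you have spelled out explicitly the bijection between $r$-dimensional subcodes and the weight-scaling identity $\mathrm{wt}(\mathcal{D}) = (q-1)\,\mathrm{wt}(\overline{\mathcal{D}})$ that underlie that one-line reduction.
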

\begin{proof}  
The conclusion of the corollary follows directly from Corollary \ref{cor:C*-distribution} and the definition of \( \overline{\mathcal{C}^*(w)} \).  
\end{proof}

The dimension of \( \mathcal{C}^*(w) \) is determined by the following proposition.
\begin{proposition}\label{prop:dim(C*)}
Let $\mathcal{C}$ be an $[n,k]$ code with $A_{w}(\C) \neq 0$, where $A_{w}(\C)$ denotes the number of codewords of weight $w$. 
Then, the dimension of $\mathcal{C}^*(w)$ is equal to the dimension of the subcode spanned by the codewords of weight $w$ in $\mathcal{C}$.  
In particular, $\mathcal{C}^*(w)$ has dimension $k$ if and only if there exists no codimension-one subcode $\mathcal{B}$ of $\mathcal{C}$ such that $A_{w}(\mathcal{B}) = A_{w}(\mathcal{C})$.
\end{proposition}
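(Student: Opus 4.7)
The plan is to identify $\dim \mathcal{C}^*(w)$ directly with $\dim V$, where $V$ denotes the subcode of $\mathcal{C}$ spanned by the weight-$w$ codewords, and then translate the equality $V=\mathcal{C}$ into the hyperplane condition given in the statement.

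First, I would consider the evaluation map $E:L(\mathcal{C})\to\mathcal{C}^*(w)$ sending $\ell\mapsto\bigl(\ell(\mathbf{c})\bigr)_{\mathbf{c}\in\mathrm{wt}^{-1}(w)}$, already used in Proposition \ref{prop:wt=Aw}. By construction $E$ is surjective, so $\dim\mathcal{C}^*(w)=\dim L(\mathcal{C})-\dim\ker E=k-\dim\ker E$. The key observation is that $\ell\in\ker E$ exactly when $\ell$ vanishes on every weight-$w$ codeword, and by linearity this is the same as $\ell$ vanishing on the whole subspace $V$. Thus $\ker E$ is the annihilator of $V$ in $L(\mathcal{C})$, and finite-dimensional duality gives $\dim\ker E=k-\dim V$. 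Substituting back yields $\dim\mathcal{C}^*(w)=\dim V$, which is the first assertion. (As a sanity check, one could alternatively observe that $\mathcal{C}^*(w)$ is by definition the row space of the matrix $M_w$ in \eqref{eq:Matrix}, whose column space is spanned by the transposes of the weight-$w$ codewords; the equality of row and column rank then recovers $\dim\mathcal{C}^*(w)=\dim V$.)

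For the \emph{in particular} statement I would argue both directions through $V$. If $\dim\mathcal{C}^*(w)<k$, then $V\subsetneq\mathcal{C}$, and any hyperplane $\mathcal{B}\subset\mathcal{C}$ containing $V$ (such a hyperplane exists because $V$ is a proper subspace of the finite-dimensional space $\mathcal{C}$) is a codimension-one subcode in which every weight-$w$ codeword of $\mathcal{C}$ already lies, so $A_w(\mathcal{B})=A_w(\mathcal{C})$. Conversely, if some codimension-one subcode $\mathcal{B}$ satisfies $A_w(\mathcal{B})=A_w(\mathcal{C})$, then all weight-$w$ codewords of $\mathcal{C}$ lie in $\mathcal{B}$, hence $V\subseteq\mathcal{B}\subsetneq\mathcal{C}$ and $\dim\mathcal{C}^*(w)=\dim V\le k-1$. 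This proves the equivalence.

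The argument is essentially a clean linear-algebra duality exercise, so I do not expect a genuine obstacle; the only point requiring any care is the justification that $\ker E$ equals the annihilator of the \emph{span} of $\mathrm{wt}^{-1}(w)$ rather than just of the set $\mathrm{wt}^{-1}(w)$, which is immediate from linearity of $\ell$. Everything else is bookkeeping via the rank–nullity theorem and the standard fact that a proper subspace of a finite-dimensional space is contained in a hyperplane.
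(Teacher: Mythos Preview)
Your argument is correct and essentially the same as the paper's: the paper simply observes that $\mathcal{C}^*(w)$ is the column span of the matrix whose rows are the weight-$w$ codewords, so its dimension equals the row rank, i.e.\ $\dim V$---which is exactly your parenthetical ``sanity check.'' Your primary route via the annihilator of $V$ in $L(\mathcal{C})$ is just the dual-space rephrasing of the same rank equality, and you additionally spell out the ``in particular'' clause that the paper leaves implicit.
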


\begin{proof}
	Let \( M \) be the matrix whose rows consist of all codewords of weight \( w \) in \( \mathcal{C} \), arranged in any fixed order. Observe that \( \mathcal{C}^*(w) \) is precisely the linear code over \( \mathbb{F}_q \) generated by the column span of \( M \). Hence, the desired conclusion follows.
\end{proof}

The following proposition is fundamental for determining the weight distribution of the code \( \mathcal{C}^*(w) \).

\begin{proposition}
Let \( \mathcal{C} \) be a \( q \)-ary \([n, k]\) linear code. As the vector \( \mathbf{v} \in \mathbb{F}_q^n \setminus \C^\perp \) varies, the set
\[
\left( \mathcal{C}^\perp \cup_{\alpha \in \mathbb{F}_q^*} (\mathcal{C}^\perp + \alpha \mathbf{v}) \right)^\perp
\]
exhausts all codimension-one subcodes of \( \mathcal{C} \). In particular, when \( \mathcal{C} \) is a binary code, its codimension-one subcodes correspond one-to-one with the nonzero cosets of \( \mathcal{C}^\perp \) with nonzero weight.
\end{proposition}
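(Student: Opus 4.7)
The plan is to exploit the duality $\mathcal{B} \mapsto \mathcal{B}^\perp$ between codimension-one subcodes of $\mathcal{C}$ and $1$-dimensional extensions of $\mathcal{C}^\perp$ inside $\mathbb{F}_q^n$. Concretely, if $\mathcal{B} \subseteq \mathcal{C}$ satisfies $\dim \mathcal{B} = k - 1$, then $\mathcal{C}^\perp \subseteq \mathcal{B}^\perp$ and $\dim \mathcal{B}^\perp = n - k + 1 = \dim \mathcal{C}^\perp + 1$; conversely, any $(\dim \mathcal{C}^\perp + 1)$-dimensional subspace $E$ of $\mathbb{F}_q^n$ containing $\mathcal{C}^\perp$ dualizes to a codimension-one subcode $E^\perp \subseteq \mathcal{C}$. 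Hence it suffices to parametrize all $1$-dimensional extensions of $\mathcal{C}^\perp$ and rewrite each one in the claimed set form.

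Next, I would observe that every $1$-dimensional extension of $\mathcal{C}^\perp$ has the form $\mathcal{C}^\perp + \mathbb{F}_q \mathbf{v}$ for some $\mathbf{v} \in \mathbb{F}_q^n \setminus \mathcal{C}^\perp$, and conversely every such $\mathbf{v}$ produces such an extension. Decomposing $\mathbb{F}_q = \{0\} \cup \mathbb{F}_q^*$ and distributing the sum over the union gives
\[
\mathcal{C}^\perp + \mathbb{F}_q \mathbf{v} \;=\; \mathcal{C}^\perp \cup \bigcup_{\alpha \in \mathbb{F}_q^*} (\mathcal{C}^\perp + \alpha \mathbf{v}).
\]
Dualizing both sides and letting $\mathbf{v}$ range over $\mathbb{F}_q^n \setminus \mathcal{C}^\perp$ then produces every codimension-one subcode of $\mathcal{C}$, which establishes the first assertion.

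For the binary case, $\mathbb{F}_2^* = \{1\}$, so each extension collapses to $\mathcal{C}^\perp \cup (\mathcal{C}^\perp + \mathbf{v})$. Two vectors $\mathbf{v}_1, \mathbf{v}_2 \in \mathbb{F}_2^n \setminus \mathcal{C}^\perp$ generate the same extension if and only if $\mathbf{v}_1 + \mathbf{v}_2 \in \mathcal{C}^\perp$, i.e., if and only if $\mathbf{v}_1$ and $\mathbf{v}_2$ represent the same coset of $\mathcal{C}^\perp$. Since a coset of $\mathcal{C}^\perp$ has weight zero precisely when it equals $\mathcal{C}^\perp$ itself, the map $\mathbf{v} \mapsto (\mathcal{C}^\perp \cup (\mathcal{C}^\perp + \mathbf{v}))^\perp$ descends to the required bijection between codimension-one subcodes of $\mathcal{C}$ and nonzero (equivalently, positive-weight) cosets of $\mathcal{C}^\perp$ in $\mathbb{F}_2^n$. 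No step is particularly delicate; the only point requiring explicit verification is surjectivity of the parametrization by $\mathbf{v}$, which reduces to the elementary fact that any $(\dim \mathcal{C}^\perp + 1)$-dimensional supercode of $\mathcal{C}^\perp$ is spanned by $\mathcal{C}^\perp$ together with any single vector lying outside it.
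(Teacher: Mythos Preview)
Your proposal is correct and follows essentially the same approach as the paper: both invoke the duality $\mathcal{B}\mapsto\mathcal{B}^\perp$ to identify codimension-one subcodes of $\mathcal{C}$ with $(\dim\mathcal{C}^\perp+1)$-dimensional supercodes of $\mathcal{C}^\perp$, and then parametrize each such supercode as the span of $\mathcal{C}^\perp$ with a single vector $\mathbf{v}\notin\mathcal{C}^\perp$. Your write-up is in fact more detailed than the paper's, which handles the binary case only implicitly.
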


\begin{proof}
Note that the dual map \( \mathcal{B} \mapsto \mathcal{B}^\perp \), where \( \mathcal{B} \subset \mathcal{C} \) and \( \dim(\mathcal{B}) = \dim(\mathcal{C}) - 1 \), establishes a one-to-one correspondence between the codimension-one subcodes of \( \mathcal{C} \) and the supercodes of \( \mathcal{C}^\perp \) of dimension \( \dim(\mathcal{C}^\perp) + 1 \). Moreover, each such supercode of \( \mathcal{C}^\perp \) can be expressed as the linear span of \( \mathcal{C}^\perp \) and a single vector \( \mathbf{v} \in \mathbb{F}_q^n \setminus \mathcal{C}^\perp \). This yields the desired result.
\end{proof}

The following result shows that the number of nonzero weights in the code \( \overline{\mathcal{C}^*(w)} \) can be bounded by the number of distinct weight distributions among the cosets of the dual code \( \mathcal{C}^\perp \).
\begin{proposition}
Let \( \mathcal{C} \) be an \([n, k, d]\) linear code, and let \( S(\mathcal{C}) \) denote the set of all nonzero weights in \( \mathcal{C} \). Assume that there are \( u^{\perp} \) distinct weight distributions among the cosets of the dual code \( \mathcal{C}^{\perp} \), and that \( \overline{\mathcal{C}^*(w)} \) has dimension \( k \), where \( w \in S(\mathcal{C}) \). Then, the code \( \overline{\mathcal{C}^*(w)} \) has at most \( u^{\perp} - 1 \) nonzero weights.
\end{proposition}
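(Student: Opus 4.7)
The plan is to combine the $r=1$ case of Corollary~\ref{cor:projC*-distribution} with a scaling argument that links codimension-one subcodes of $\mathcal{C}$ to cosets of $\mathcal{C}^\perp$. First I would observe that, by that corollary, every nonzero weight of $\overline{\mathcal{C}^*(w)}$ has the form
\[
i \;=\; \frac{1}{q-1}\bigl(A_w(\mathcal{C}) - A_w(\mathcal{B})\bigr),
\]
where $\mathcal{B}$ ranges over the codimension-$1$ subcodes of $\mathcal{C}$. The assumption $\dim(\mathcal{C}^*(w)) = k$ combined with Proposition~\ref{prop:dim(C*)} guarantees $A_w(\mathcal{B}) < A_w(\mathcal{C})$ for every such $\mathcal{B}$, so the set of nonzero weights of $\overline{\mathcal{C}^*(w)}$ injects into the set of distinct values taken by $A_w(\mathcal{B})$. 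It therefore suffices to bound the number of distinct such values by $u^\perp - 1$.

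Next, using the previous proposition, I would express the dual of any codimension-$1$ subcode $\mathcal{B}$ as
\[
\mathcal{B}^\perp \;=\; \mathcal{C}^\perp \;\cup\; \bigcup_{\alpha \in \mathbb{F}_q^*} \bigl(\mathcal{C}^\perp + \alpha \mathbf{v}\bigr)
\]
for some $\mathbf{v} \in \mathbb{F}_q^n \setminus \mathcal{C}^\perp$. The key observation is that multiplication by $\alpha \in \mathbb{F}_q^*$ preserves Hamming weight and satisfies $\alpha(\mathcal{C}^\perp + \mathbf{v}) = \mathcal{C}^\perp + \alpha \mathbf{v}$; hence the $q-1$ nonzero cosets $\mathcal{C}^\perp + \alpha\mathbf{v}$ all share the same weight distribution. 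Consequently, $A_j(\mathcal{B}^\perp) = A_j(\mathcal{C}^\perp) + (q-1)\,A_j(\mathcal{C}^\perp + \mathbf{v})$ for every $j$, so the weight distribution of $\mathcal{B}^\perp$ is completely determined by the weight distribution of any one nonzero coset it contains.

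Then I would apply the MacWilliams identity \eqref{eq:A-Adual} to conclude that $A_w(\mathcal{B})$ is a fixed linear function of the weight distribution of $\mathcal{B}^\perp$, and therefore depends only on the weight distribution of a single nonzero coset of $\mathcal{C}^\perp$ inside $\mathcal{B}^\perp$. Finally, since $A_0(\mathcal{C}^\perp) = 1$ while $A_0(\mathcal{C}^\perp + \mathbf{v}) = 0$ for every $\mathbf{v} \notin \mathcal{C}^\perp$, the weight distribution of $\mathcal{C}^\perp$ itself is distinct from the weight distribution of every nonzero coset; so among the $u^\perp$ distinct coset weight distributions, at most $u^\perp - 1$ are realised by nonzero cosets. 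This yields at most $u^\perp - 1$ distinct values of $A_w(\mathcal{B})$, hence at most $u^\perp - 1$ nonzero weights in $\overline{\mathcal{C}^*(w)}$.

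The main obstacle I expect is the scaling observation in the second paragraph: without it, one would only know that $A_w(\mathcal{B})$ is determined by a multiset of $q - 1$ coset weight distributions, yielding a much weaker bound. Recognising that every codimension-$1$ extension of $\mathcal{C}^\perp$ breaks into a scalar orbit of $q-1$ cosets with a common weight distribution is what collapses the count down to $u^\perp - 1$; once this is in place, the remaining steps are just bookkeeping with Corollary~\ref{cor:projC*-distribution}, Proposition~\ref{prop:dim(C*)} and MacWilliams.
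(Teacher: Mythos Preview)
Your proposal is correct and follows essentially the same route as the paper: both reduce the count of nonzero weights of $\overline{\mathcal{C}^*(w)}$ to the count of distinct $A_w(\mathcal{B})$ over codimension-one subcodes via Corollary~\ref{cor:projC*-distribution}, and both bound the latter by $u^\perp-1$ using MacWilliams together with the fact that $\mathcal{C}^\perp$ itself accounts for one of the $u^\perp$ coset weight distributions. Your write-up is in fact more complete than the paper's, which compresses the scaling argument (that the $q-1$ nonzero cosets inside $\mathcal{B}^\perp$ share a common weight distribution) into the phrase ``it follows from the MacWilliams identities.''
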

\begin{proof}
Since the cosets of the dual code \( \mathcal{C}^{\perp} \) (including \( \mathcal{C}^{\perp} \) itself) exhibit \( u^{\perp} \) distinct weight distributions, it follows from the MacWilliams identities that there exist at most \( u^{\perp} - 1 \) distinct weight distributions among the codimension-one subcodes of \( \mathcal{C} \). By applying Corollary~\ref{cor:projC*-distribution}, the code \( \overline{\mathcal{C}^*(w)} \) thus has at most \( u^{\perp} - 1 \) nonzero weights.
\end{proof}

In a projective or affine space, a blocking set with respect to \((k - s)\)-dimensional (or \(s\)-codimensional) subspaces of an \(k\)-dimensional projective or affine space is called a \(s\)-blocking set. A $1$-blocking set is often simply referred to as a blocking set. The following proposition establishes a connection between blocking sets and the partial weight distribution of \( \overline{\mathcal{C}^*(w)} \).

\begin{proposition}
Let \( \mathcal{C} \) be an \([n, k, d]\) linear code, and let \( S(\mathcal{C}) \) denote the set of all nonzero weights in \( \mathcal{C} \). Suppose that \( \overline{\mathcal{C}^*(w)} \) has dimension \( k \) for some \( w \in S(\mathcal{C}) \). Then, \(\mathrm{wt}^{-1}(w)\) is a blocking set of $\C$ if and only if \( A_{n^*} \left( \overline{\mathcal{C}^*(w)} \right) = 0 \), where \( n^* \) denotes the length of the code \( \overline{\mathcal{C}^*(w)} \).
\end{proposition}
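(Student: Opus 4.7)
The plan is to reformulate the blocking-set condition as an algebraic statement about codimension-one subcodes of $\mathcal{C}$, and then to invoke Corollary~\ref{cor:projC*-distribution} at the extremal weight. Throughout, codewords of $\mathcal{C}$ are identified (up to nonzero scalars) with projective points of the $k$-dimensional ambient space, so that the hyperplanes of $\mathcal{C}$ in the blocking-set sense correspond exactly to the codimension-one subcodes $\mathcal{B} \subseteq \mathcal{C}$. First, I unfold the definition: $\mathrm{wt}^{-1}(w)$ is a blocking set of $\mathcal{C}$ if and only if every codimension-one subcode $\mathcal{B}$ contains at least one codeword of weight $w$, i.e.\ $A_w(\mathcal{B}) \geq 1$ for every such $\mathcal{B}$.

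Next, I translate the condition $A_{n^*}\bigl(\overline{\mathcal{C}^*(w)}\bigr) = 0$. Since $\mathcal{C}^*(w)$ has length $A_w(\mathcal{C})$, its projectivization has length $n^* = A_w(\mathcal{C})/(q-1)$. A codeword of $\overline{\mathcal{C}^*(w)}$ achieves weight $n^*$ precisely when every coordinate is nonzero; because the $q-1$ nonzero scalar multiples of such a codeword all have weight $n^*$, I get
\[
A_{n^*}\bigl(\overline{\mathcal{C}^*(w)}\bigr) \;=\; (q-1)\, A^{(1)}_{n^*}\bigl(\overline{\mathcal{C}^*(w)}\bigr).
\]

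Finally, I apply Corollary~\ref{cor:projC*-distribution} (whose hypothesis $\dim(\mathcal{C}^*(w)) = \dim(\mathcal{C}) = k$ is exactly the dimension assumption of the proposition) with $r=1$ and $i=n^*$. Because $A_w(\mathcal{C}) - (q-1)n^* = 0$, that corollary produces
\[
A^{(1)}_{n^*}\bigl(\overline{\mathcal{C}^*(w)}\bigr) \;=\; \#\bigl\{\mathcal{B} \subseteq \mathcal{C} : \dim(\mathcal{B}) = k-1,\ A_w(\mathcal{B}) = 0\bigr\}.
\]
Combining the three steps, $A_{n^*}\bigl(\overline{\mathcal{C}^*(w)}\bigr) = 0$ holds if and only if no codimension-one subcode of $\mathcal{C}$ avoids $\mathrm{wt}^{-1}(w)$, which is exactly the blocking-set condition established in the first step.

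The argument is essentially a direct consequence of Corollary~\ref{cor:projC*-distribution}, so no substantive obstacle arises. The only care required is pinning down the meaning of \textbf{blocking set of $\mathcal{C}$} as intersecting every codimension-one subcode (in line with the convention announced just before the proposition), and bookkeeping the scalar-multiple factor of $q-1$ that relates $A_{n^*}$ to $A^{(1)}_{n^*}$.
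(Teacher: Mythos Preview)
Your proposal is correct and follows essentially the same approach as the paper: both arguments reduce the blocking-set condition to the nonexistence of a codimension-one subcode $\mathcal{B}$ with $A_w(\mathcal{B})=0$, and then invoke Corollary~\ref{cor:projC*-distribution} at $r=1$, $i=n^*$. The only cosmetic difference is that you make the factor $(q-1)$ between $A_{n^*}$ and $A^{(1)}_{n^*}$ explicit, whereas the paper leaves that bookkeeping implicit.
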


\begin{proof}
First, suppose that \(\mathrm{wt}^{-1}(w)\) is a blocking set of \( \mathcal{C} \). Then, every codimension-one subcode \( \mathcal{B} \subset \mathcal{C} \) must contain at least one codeword of weight \( w \), i.e., \( A_w(\mathcal{B}) > 0 \). By Corollary \ref{cor:projC*-distribution}, it follows that \( A_{n^*} \bigl( \overline{\mathcal{C}^*(w)} \bigr) = 0 \).

Conversely, assume that \( A_{n^*} \bigl( \overline{\mathcal{C}^*(w)} \bigr) = 0 \). Again, by Corollary \ref{cor:projC*-distribution}, there cannot exist a codimension-one subcode \( \mathcal{B} \) of \( \mathcal{C} \) such that \( A_w(\mathcal{B}) = 0 \). Therefore, \(\mathrm{wt}^{-1}(w)\) must be a blocking set of \( \mathcal{C} \). This completes the proof.
\end{proof}

Let \( \mathcal{C} \) be an \([n, k, d]\) linear code. The extendability of \( \mathcal{C} \) is characterized by the following proposition in terms of the partial weight distribution of \( \overline{\mathcal{C}^*(d)} \).
\begin{theorem}\label{thm:ext-dist}
Let \( \mathcal{C} \) be an \([n, k, d]\) linear code. Suppose that the code \( \overline{\mathcal{C}^*(d)} \) has dimension \( k \). Then, \( \mathcal{C} \) is extendable if and only if \( A_{n^*} \left( \overline{\mathcal{C}^*(d)} \right) > 0 \), where \( n^* \) denotes the length of the code \( \overline{\mathcal{C}^*(d)} \).	
\end{theorem}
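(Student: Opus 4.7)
The plan is to combine Corollary~\ref{cor:projC*-distribution} with a standard reinterpretation of extendability in terms of codimension-one subcodes. First I would observe that the length of $\overline{\mathcal{C}^*(d)}$ equals $n^* = A_d(\mathcal{C})/(q-1)$, because weights are preserved under scalar multiplication, so the columns of the matrix $M_d$ in~\eqref{eq:Matrix} group into $A_d(\mathcal{C})/(q-1)$ bundles of $q-1$ nonzero scalar multiples each. Applying Corollary~\ref{cor:projC*-distribution} with $r = 1$, $w = d$, and $i = n^*$, the identity $A_d(\mathcal{C}) - (q-1)n^* = 0$ yields
\[
A^{(1)}_{n^*}\bigl(\overline{\mathcal{C}^*(d)}\bigr) = \#\{\mathcal{B} \subseteq \mathcal{C} : \dim(\mathcal{B}) = k - 1,\ A_d(\mathcal{B}) = 0\}.
\]
Since each $1$-dimensional subcode of weight $n^*$ accounts for exactly $q-1$ codewords of weight $n^*$, one has $A_{n^*}\bigl(\overline{\mathcal{C}^*(d)}\bigr) = (q-1)\, A^{(1)}_{n^*}\bigl(\overline{\mathcal{C}^*(d)}\bigr)$. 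Consequently, $A_{n^*}\bigl(\overline{\mathcal{C}^*(d)}\bigr) > 0$ if and only if some codimension-one subcode $\mathcal{B}\subseteq \mathcal{C}$ satisfies $A_d(\mathcal{B})=0$.

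It therefore suffices to establish the equivalence: $\mathcal{C}$ is extendable if and only if some codimension-one subcode $\mathcal{B}\subseteq\mathcal{C}$ has $A_d(\mathcal{B})=0$. For the forward direction, I would fix an extension $\mathcal{C}'$ of $\mathcal{C}$ with parameters $[n+1,k,d+1]$. Up to a coordinate permutation, $\mathcal{C}'$ is the row space of $[G \mid \mathbf{v}]$ for some generator matrix $G$ of $\mathcal{C}$ and some column $\mathbf{v}\in\mathbb{F}_q^k$, and the appended coordinate defines a linear functional $\ell\in L(\mathcal{C})$ via $\mathbf{u}^T G \mapsto \mathbf{u}^T \mathbf{v}$. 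Because $\mathcal{C}'$ has minimum weight $d+1$, every weight-$d$ codeword $\mathbf{c}\in\mathcal{C}$ must satisfy $\ell(\mathbf{c})\neq 0$; otherwise its extension in $\mathcal{C}'$ would still have weight $d$. In particular $\ell\not\equiv 0$, so $\mathcal{B} := \ker(\ell)$ is a codimension-one subcode of $\mathcal{C}$ with $A_d(\mathcal{B})=0$.

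Conversely, given a codimension-one subcode $\mathcal{B}\subseteq\mathcal{C}$ with $A_d(\mathcal{B})=0$, I would pick a nonzero $\ell\in L(\mathcal{C})$ with $\ker(\ell)=\mathcal{B}$ and form
\[
\mathcal{C}' = \{\,(\mathbf{c},\,\ell(\mathbf{c})) : \mathbf{c}\in\mathcal{C}\,\}\subseteq \mathbb{F}_q^{n+1}.
\]
This is a linear code of length $n+1$ and dimension $k$ from which $\mathcal{C}$ is recovered by deleting the last coordinate. Every weight-$d$ codeword of $\mathcal{C}$ lies outside $\mathcal{B}$ and thus extends to a codeword of weight $d+1$ in $\mathcal{C}'$, while every codeword of $\mathcal{C}$ of weight $>d$ extends to a codeword of weight $\geq d+1$. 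Hence $\mathcal{C}'$ has minimum distance exactly $d+1$, proving that $\mathcal{C}$ is extendable. The only delicate point in the argument is pinning down the length formula $n^* = A_d(\mathcal{C})/(q-1)$ and the careful bookkeeping that converts the subcode count of Corollary~\ref{cor:projC*-distribution} into the ordinary weight count $A_{n^*}\bigl(\overline{\mathcal{C}^*(d)}\bigr)$; once these are in place, both directions of the equivalence follow immediately from the standard extension construction.
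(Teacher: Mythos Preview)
Your proof is correct and follows essentially the same approach as the paper: both arguments reduce extendability to the existence of a codimension-one subcode $\mathcal{B}\subseteq\mathcal{C}$ with $A_d(\mathcal{B})=0$, and then invoke Corollary~\ref{cor:projC*-distribution} to identify that condition with $A_{n^*}\bigl(\overline{\mathcal{C}^*(d)}\bigr)>0$. The paper phrases the forward direction via the shortened code $\mathcal{C}'_{i_0}$ of the extension, which is exactly your $\ker(\ell)$; your additional bookkeeping (the length formula $n^*=A_d(\mathcal{C})/(q-1)$ and the conversion $A_{n^*}=(q-1)A^{(1)}_{n^*}$) just makes explicit what the paper leaves implicit.
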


\begin{proof}
First, assume that \( \mathcal{C} \) is extendable. Then there exists an \([n+1,k,d+1]\) linear code \( \mathcal{C}' \) such that \( \mathcal{C} \) is obtained as a punctured code of \( \mathcal{C}' \). Let \( i_0 \) be a coordinate position such that \( \mathcal{C} = \mathcal{C}'^{\{i_0\}} \). It can be observed that the shortened code \( \mathcal{C}'_{i_0} \) is a subcode of \( \mathcal{C} \) with dimension \( k-1 \). Since the minimum distance of \( \mathcal{C}' \) is \( d+1 \), no codeword of weight \( d \) appears in \( \mathcal{C}'_{i_0} \). Hence, by Corollary \ref{cor:projC*-distribution}, it is obtained that \( A_{n^*} \left( \overline{\mathcal{C}^*(d)} \right) > 0 \).

Conversely, suppose that \( A_{n^*} \left( \overline{\mathcal{C}^*(d)} \right) > 0 \). Then there exists a codimension-one subcode \( \mathcal{B} \) of \( \mathcal{C} \) which contains no codewords of weight \( d \). Thus, a linear function \( f \) on \( \mathcal{C} \) can be defined such that \( \ker (f) = \mathcal{B} \). Consider the following linear code:
\[
\mathcal{C}' = \left \{ (\mathbf{c}, f(\mathbf{c})) : \mathbf{c} \in \mathcal{C} \right \}.
\]
It can be verified that \( \mathcal{C}' \) is an \([n+1,k,d+1]\) linear extension of \( \mathcal{C} \). Therefore, \( \mathcal{C} \) is extendable. This completes the proof.
\end{proof}

\section{Linear codes from the Weight Function of Projective Two-Weight Codes}\label{two-weight}

Two-weight codes are an important class of linear codes that possess exactly two distinct non-zero Hamming weights among their codewords. Such codes are of interest in coding theory due to their highly structured nature and connections to combinatorial designs, finite geometry, and cryptographic applications. In this section, we conduct a detailed study of the linear codes associated with the fixed-weight codewords of two-weight codes. We not only determine the linear structures of these derived codes completely, but also establish an upper bound on the minimum weight of two-weight codes and give a full characterization of the binary codes attaining this bound. In addition, we characterize the extendable two-weight codes and derive several divisibility properties concerning the parameters of two-weight codes.

The weight distribution of two-weight code $\C$ with $d(\C^{\perp})\ge 2$ can be derived directly from the first two Pless power moments \cite{Calderbank}.
\begin{proposition}\label{prop:wt-dist-2}
Let $\mathcal{C}$ be a two-weight $[n, k]$ code over $\gf_{q}$ with $d(\C^{\perp})\ge 2$.  Suppose that the two nonzero weights of $\mathcal{C}$ are $w_1$ and $w_2$.  Then the weight distribution of $\mathcal{C}$ is given by 
\[
A_{w_1} = \frac{1}{(w_2 - w_1)} \left( w_2 \left( q^k - 1 \right) - n q^{k-1} (q - 1) \right)
\]
and 
\[
A_{w_2} = \frac{1}{(w_1 - w_2)} \left( w_1 \left( q^k - 1 \right) - n q^{k-1} (q - 1) \right).
\]

	\end{proposition}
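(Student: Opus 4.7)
The plan is to apply the first two Pless power moments recalled in Section~\ref{sec:auxiliary} and solve the resulting $2 \times 2$ linear system. Since $\mathcal{C}$ has exactly two nonzero weights $w_1$ and $w_2$, the only nonzero entries of its weight distribution are $A_0 = 1$, $A_{w_1}$, and $A_{w_2}$, which collapses each Pless moment to a very short sum. The hypothesis $d(\mathcal{C}^\perp) \ge 2$ will be used in precisely one place: it ensures $A_1^\perp = 0$ and thereby purges the dual-dependent term from the second moment.

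First, I would invoke the zeroth moment $\sum_{i=0}^{n} A_i = q^k$, which immediately yields
\[
A_{w_1} + A_{w_2} = q^k - 1.
\]
Next, I would substitute $A_1^\perp = 0$ into the first moment $\sum_{i=0}^{n} A_i \, i = q^{k-1}(qn - n - A_1^\perp)$ to obtain
\[
w_1 A_{w_1} + w_2 A_{w_2} = q^{k-1}(q-1)n.
\]
Finally, I would solve this linear system in the unknowns $A_{w_1}, A_{w_2}$. Since $w_1 \neq w_2$, the coefficient matrix is nonsingular; eliminating $A_{w_2}$ from the first equation and substituting into the second gives $(w_1 - w_2)A_{w_1} = q^{k-1}(q-1)n - w_2(q^k - 1)$, which rearranges to the claimed formula for $A_{w_1}$, and the expression for $A_{w_2}$ follows by symmetry.

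There is no serious obstacle: the argument is essentially a two-line calculation once the Pless moments are invoked. The only subtlety worth flagging is the role of the assumption $d(\mathcal{C}^\perp) \ge 2$, which is precisely what allows $A_1^\perp$ to drop out and reduces the closed form to depend only on $n$, $q$, $k$, $w_1$, and $w_2$. Without this hypothesis, the analogous formulas would still follow from the same two moments, but they would carry a residual $A_1^\perp$ term and hence no longer be intrinsic to the two weights and the basic code parameters.
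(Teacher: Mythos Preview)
Your proposal is correct and matches the paper's approach exactly: the paper states (without writing out the details) that the weight distribution ``can be derived directly from the first two Pless power moments,'' citing \cite{Calderbank}, which is precisely the computation you carry out. Your identification of the role of the hypothesis $d(\mathcal{C}^\perp)\ge 2$ in killing the $A_1^\perp$ term is also on point.
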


\subsection{Codimension-One Subcodes and the Upper Bound on the Minimum Weight of Two-Weight Codes}

The codimension-$1$ subcodes of a projective two-weight code are fully characterized as follows.
\begin{proposition}\label{prop:codimension-1}
Let $\mathcal{C}$ be a projective two-weight $[n, k]$ code over $\mathbb{F}_q$ with weights $w_1$ and $w_2$.
Let $\mathcal{B}$ be a codimension-$1$ subcode of $\mathcal{C}$. Then, the following hold:
\begin{enumerate}
    \item[(1)] The weight $\wt (\mathcal B)$ of $\mathcal{B}$ is either $n$ or $n-1$.
    
    \item[(2)] There are exactly $n$ codimension-$1$ subcodes $\mathcal{B}$ of $\mathcal{C}$ such that $\mathrm{wt}(\mathcal{B}) = n-1$.
    
    \item[(3)] Moreover, the weight distribution of $\mathcal{B}$ is as follows:
    \begin{itemize}
        \item If $\mathrm{wt}(\mathcal{B}) = n-1$, the weight distribution is given by
        \[
        A_{w_1}\left ( \mathcal B \right ) = \frac{1}{w_2 - w_1} \left( w_2 \left( q^{k-1} - 1 \right) - (n-1) q^{k-2} (q - 1) \right),
        \]
        \[
        A_{w_2} \left ( \mathcal B \right ) = \frac{1}{w_1 - w_2} \left( w_1 \left( q^{k-1} - 1 \right) - (n-1) q^{k-2} (q - 1) \right).
        \]
        
        \item If $\mathrm{wt}(\mathcal{B}) = n$, the weight distribution is given by
        \[
        A_{w_1} \left ( \mathcal B \right ) = \frac{1}{w_2 - w_1} \left( w_2 \left( q^{k-1} - 1 \right) - n q^{k-2} (q - 1) \right),
        \]
        \[
        A_{w_2} \left ( \mathcal B \right ) = \frac{1}{w_1 - w_2} \left( w_1 \left( q^{k-1} - 1 \right) - n q^{k-2} (q - 1) \right).
        \]
    \end{itemize}
\end{enumerate}
\end{proposition}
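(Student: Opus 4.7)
The plan is to interpret codimension-one subcodes of $\mathcal{C}$ geometrically via a generator matrix, deduce parts (1) and (2) from the projectivity assumption, and then extract the weight distributions in (3) from the first two Pless power moments.

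First I would write $G=[\mathbf{g}_1~\cdots~\mathbf{g}_n]$ for a generator matrix of $\mathcal{C}$, so that codewords are parametrized by $\mathbf{v}\in\mathbb{F}_q^k$ via $\mathbf{v}\mapsto(\langle\mathbf{v},\mathbf{g}_i\rangle)_{1\le i\le n}$. Every codimension-one subcode $\mathcal{B}\subset\mathcal{C}$ then arises from a (projectively unique) nonzero vector $\mathbf{u}\in\mathbb{F}_q^k$ as the image of the hyperplane $H_{\mathbf{u}}=\{\mathbf{v}:\langle\mathbf{v},\mathbf{u}\rangle=0\}$. A direct calculation shows that coordinate $i$ is a zero-coordinate of $\mathcal{B}$ if and only if $\mathbf{g}_i$ lies in the one-dimensional subspace $H_{\mathbf{u}}^{\perp}=\mathbb{F}_q\mathbf{u}$. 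Because $\mathcal{C}$ is projective, the columns $\mathbf{g}_1,\dots,\mathbf{g}_n$ are pairwise non-proportional, so at most one such $i$ can exist, giving part (1): $\mathrm{wt}(\mathcal{B})\in\{n-1,n\}$. For part (2), $\mathrm{wt}(\mathcal{B}_{\mathbf{u}})=n-1$ holds precisely when $[\mathbf{u}]=[\mathbf{g}_i]$ for some $i$; since the $[\mathbf{g}_i]$ are $n$ distinct projective points, there are exactly $n$ such codimension-one subcodes.

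For part (3), I fix a codimension-one subcode $\mathcal{B}$ of weight $\ell\in\{n-1,n\}$ and view it as a linear code on its effective support of length $\ell$, dimension $k-1$, with all nonzero weights in $\{w_1,w_2\}$ and no all-zero coordinate (so $d(\mathcal{B}^{\perp})\ge 2$ and $A_1^{\perp}(\mathcal{B})=0$). The first two Pless power moments then give the linear system
\begin{align*}
A_{w_1}(\mathcal{B})+A_{w_2}(\mathcal{B}) &= q^{k-1}-1,\\
w_1\,A_{w_1}(\mathcal{B})+w_2\,A_{w_2}(\mathcal{B}) &= q^{k-2}(q-1)\,\ell.
\end{align*}
Since $w_2-w_1\ne 0$, this $2\times 2$ system is invertible, and substituting $\ell=n-1$ or $\ell=n$ yields exactly the four formulas stated in the proposition. (Equivalently, Proposition~\ref{prop:wt-dist-2} applies directly to $\mathcal{B}$ with parameters $(\ell,k-1)$.)

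No substantive obstacle is anticipated: the content of the argument is concentrated in the geometric observation that projectivity forces the column multiset to have distinct projective classes, which is what makes the dichotomy $\mathrm{wt}(\mathcal{B})\in\{n-1,n\}$ and the count in (2) exact rather than merely an inequality. Once this is in place, (3) is a routine specialization of the two-weight moment identities.
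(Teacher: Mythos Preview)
Your proposal is correct and follows essentially the same approach as the paper: projectivity forces the zero-coordinate set of any hyperplane section to have size at most one, giving (1) and (2), and then the first two Pless moments applied to $\mathcal{B}$ on its effective support of length $\ell\in\{n-1,n\}$ give (3). The only cosmetic difference is that the paper phrases (1) and (2) via shortened codes and invokes Proposition~\ref{prop:punc-shor}, whereas you argue directly from the generator-matrix columns; your version is slightly more self-contained but the underlying observation is identical.
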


\begin{proof}
\begin{enumerate}
\item[(1)] Suppose, for the sake of contradiction, that the conclusion does not hold, so that \( \mathrm{wt}(\mathcal{B}) \le n - 2 \).  
Then there exist two coordinate positions \( i_1 \) and \( i_2 \) of \(\mathcal{C}\) such that for every codeword \( \mathbf{c} = (c_1, \ldots, c_n) \in \mathcal{B} \), we have \( c_{i_1} = c_{i_2} = 0 \).  
Consequently, the shortened code \(\mathcal{B}_{\{i_1,i_2\}}\) has the same dimension as \(\mathcal{B}\) and is contained in the shortened code \(\mathcal{C}_{\{i_1,i_2\}}\).   
It follows that \(\dim (\mathcal{C}_{\{i_1, i_2\}}) \ge \dim(\mathcal{B}_{\{i_1, i_2\}}) = k - 1\).  
On the other hand, since \(\mathcal{C}\) is a projective code, Proposition~\ref{prop:punc-shor} implies that \(\mathcal{C}_{\{i_1, i_2\}}\) must have dimension \( k - 2 \).  
This leads to a contradiction.  
Therefore, we conclude that \(\mathrm{wt}(\mathcal{B}) = n\) or \( n - 1 \).  
\item[(2)] Let \(\mathcal{C}(i_1, \ldots, i_s)\) denote the subcode of \(\mathcal{C}\) consisting of all codewords that are zero in the coordinate positions \(i_1, \ldots, i_s\).  
It is easy to see that a codimension-one subcode \(\mathcal{B}\) of \(\mathcal{C}\) satisfies \(\mathrm{wt}(\mathcal{B}) = n - 1\) if and only if \(\mathcal{B}\) is of the form \(\mathcal{C}(i)\) for some \(1 \le i \le n\).  
Moreover, it can be shown that the subcodes \(\mathcal{C}(1), \ldots, \mathcal{C}(n)\) are pairwise distinct.  
Indeed, suppose on the contrary that \(\mathcal{C}(i_1) = \mathcal{C}(i_2)\).  
Then we have \(\mathcal{C}(i_1) = \mathcal{C}(i_2) = \mathcal{C}(i_1, i_2)\).  
Consequently, \(\dim(\mathcal{C}(i_1)) = \dim(\mathcal{C}(i_1, i_2))\).  
However, by Proposition~\ref{prop:punc-shor}, we know that \(\dim(\mathcal{C}(i_1)) = k - 1\) and \(\dim(\mathcal{C}(i_1, i_2)) = k - 2\), which leads to a contradiction.  
Therefore, there are exactly \(n\) codimension-one subcodes of \(\mathcal{C}\) whose weight equals \(n - 1\).
\item[(3)] If \(\mathrm{wt}(\mathcal{B}) = n - 1\), then it is easy to see that \(\mathcal{B}\) has the same weight distribution as the shortened code \(\mathcal{C}_{\{i\}}\), where \(1 \le i \le n\).  
Moreover, it can be directly verified that \(\mathcal{C}_{\{i\}}\) is a two-weight code and that its dual satisfies \(d(\mathcal{C}_{\{i\}}^{\perp}) \ge 2\).  
The desired result then follows immediately from Proposition~\ref{prop:wt-dist-2}.  

If \(\mathrm{wt}(\mathcal{B}) = n\), then \(\mathcal{B}\) itself is a two-weight code, and its dual also satisfies \(d(\mathcal{B}^{\perp}) \ge 2\).  
Proposition~\ref{prop:wt-dist-2} then provides the weight distribution of \(\mathcal{B}\).

\end{enumerate}
\end{proof}

To describe certain two-weight codes, it is useful to invoke structural results on linear codes with constant Hamming weight. In particular, the following lemma characterizes all linear codes over a finite field in which every nonzero codeword has the same Hamming weight \cite{Bon84}.

\begin{lemma}\label{lemma:equidistant-support}
Let \( \mathbb{F}_q \) be a finite field, and let \( G \) be a \( k \times \frac{q^k - 1}{q - 1} \) matrix whose columns consist of exactly one nonzero vector from each one-dimensional subspace of \( \mathbb{F}_q^k \). Then the linear code generated by \( G \) is of dimension \( k \) and constant Hamming weight. Moreover, any other \( k \)-dimensional linear code over \( \mathbb{F}_q \) in which all nonzero codewords have the same weight is equivalent to a replication of this code, up to coordinate permutation and insertion of zero columns.
\end{lemma}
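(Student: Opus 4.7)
The plan is to establish the two assertions separately. For the first, fix any nonzero $\mathbf{v} \in \mathbb{F}_q^k$ and count the zero coordinates of the codeword $(\langle \mathbf{v}, \mathbf{g}_i\rangle)_{1 \le i \le (q^k-1)/(q-1)}$. A column $\mathbf{g}_i$ produces a zero entry precisely when the one-dimensional subspace it represents lies in the hyperplane $\mathbf{v}^\perp$; since the columns of $G$ form a transversal of the $\frac{q^k-1}{q-1}$ projective points of $\mathrm{PG}(k-1,q)$, and a hyperplane contains exactly $\frac{q^{k-1}-1}{q-1}$ of these points, the number of zero entries is the same for every nonzero $\mathbf{v}$. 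Subtracting from the length gives the common weight $\frac{q^k-1}{q-1}-\frac{q^{k-1}-1}{q-1} = q^{k-1}$. Since this weight is nonzero, the linear map $\mathbf{v} \mapsto (\langle \mathbf{v}, \mathbf{g}_i\rangle)_i$ has trivial kernel, so the code is indeed $k$-dimensional.

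For the second claim, let $\mathcal{C}'$ be any $k$-dimensional code of constant weight $w$, with generator matrix $G'$. First discard all zero columns of $G'$ (these are the inserted zero columns allowed by the statement), and then, using column rescalings (monomial equivalence), replace each remaining column by the designated representative of its one-dimensional subspace. For each $\ell \in \mathrm{PG}(k-1,q)$ let $n_\ell$ denote the resulting multiplicity, and set $N = \sum_\ell n_\ell$. The constant-weight hypothesis translates into the point--hyperplane identity
\[
\sum_{\ell \subseteq H} n_\ell \;=\; N - w \qquad \text{for every hyperplane } H \subset \mathbb{F}_q^k,
\]
because the left-hand side counts exactly the zero entries of the codeword associated with any nonzero $\mathbf{v}$ satisfying $H = \mathbf{v}^\perp$.

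The central step, which I expect to be the main obstacle, is to deduce from this single collection of hyperplane sums that $n_\ell$ is independent of $\ell$. For $k \ge 3$ I would argue by double counting: fix a projective point $\ell_0$, sum the identity over all $\frac{q^{k-1}-1}{q-1}$ hyperplanes through $\ell_0$, and note that every point $\ell \ne \ell_0$ is counted $\frac{q^{k-2}-1}{q-1}$ times (the number of hyperplanes containing the projective line $\langle \ell_0, \ell\rangle$) while $\ell_0$ itself is counted $\frac{q^{k-1}-1}{q-1}$ times. The resulting linear equation pins down $n_{\ell_0}$ as a function of $N, w, k, q$ alone, proving uniformity. The degenerate cases are immediate: for $k=2$ the hyperplanes are single projective points, so the identity already reads $n_\ell = N-w$ for every $\ell$, and $k=1$ is trivial.

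Once $n_\ell$ equals a common integer $t$, a permutation of coordinates groups the identical columns of the reduced matrix together, exhibiting it as the block matrix $[\,G \mid G \mid \cdots \mid G\,]$ of $t$ copies, and reinserting the previously discarded zero columns completes the equivalence between $\mathcal{C}'$ and the $t$-fold replication of the code generated by $G$.
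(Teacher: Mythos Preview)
Your argument is correct. The paper does not actually prove this lemma; it is quoted as a known result with a citation to Bonisoli \cite{Bon84}. Your self-contained proof via the hyperplane-incidence identity $\sum_{\ell\subseteq H} n_\ell = N-w$ and the double-counting over hyperplanes through a fixed point is the standard route to this classification and matches the spirit of Bonisoli's original argument. One very minor remark: in the final step you implicitly use that the common multiplicity $t$ is strictly positive, which follows because $N>0$ (otherwise all columns of $G'$ were zero and $\dim\mathcal{C}'=0$), and then $t=N(q-1)/(q^k-1)\ge 1$.
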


Some bounds on the nonzero weights of projective two-weight codes are presented below. Moreover, the codes that attain the upper bound on the minimum weight have been completely characterized. These results are of intrinsic significance in coding theory; more importantly, they demonstrate that analyzing the linear codes associated with codewords of a fixed weight provides a powerful tool in the study of linear codes.

\begin{theorem}\label{thm:w1-w2-bound}
Let $\mathcal{C}$ be a projective two-weight $[n, k]$ code over $\mathbb{F}_q$ with weights $w_1 < w_2$. Then
\[
w_1 \leq \frac{n(q-1)}{q} \quad \text{and} \quad w_2 \geq \frac{n(q-1)+2}{q}.
\]
Moreover, $w_1 = \frac{n(q-1)}{q}$ if and only if the defining set of $\mathcal{C}$ is the complement of a $t$-dimensional projective subspace in the projective space $\mathrm{PG}(k-1, q)$, where $t \leq k - 2$.
\end{theorem}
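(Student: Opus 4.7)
The plan is to derive both inequalities and characterize the equality case directly from Proposition~\ref{prop:codimension-1}(3), which gives explicit weight distributions for the two types of codimension-one subcodes of $\C$, and then translate the equality condition into a geometric statement about the defining set of $\C$.

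First, combining Proposition~\ref{prop:wt-dist-2} with Proposition~\ref{prop:codimension-1}(3), I would obtain for any codimension-one subcode $\mathcal{B}$ of $\C$ with $\wt(\mathcal{B}) = n$ the identity
\[
A_{w_2}(\C) - A_{w_2}(\mathcal{B}) = \frac{(q-1)q^{k-2}\bigl(n(q-1) - qw_1\bigr)}{w_2 - w_1}.
\]
Such subcodes exist because a genuine two-weight projective code must have $n < (q^k-1)/(q-1)$ (otherwise $\C$ would be the one-weight simplex code). Since $\mathcal{B} \subseteq \C$ forces the left side to be non-negative, the bound $w_1 \le n(q-1)/q$ follows immediately. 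An analogous computation for $\mathcal{B}_i = \{\mathbf{c} \in \C : c_i = 0\}$, which has $\wt(\mathcal{B}_i) = n-1$, yields
\[
A_{w_1}(\C) - A_{w_1}(\mathcal{B}_i) = \frac{(q-1)q^{k-2}\bigl(qw_2 - n(q-1) - 1\bigr)}{w_2 - w_1} \ge 0,
\]
so $qw_2 \ge n(q-1)+1$. To sharpen this to $qw_2 \ge n(q-1)+2$, I would argue by contradiction: if $qw_2 = n(q-1)+1$, then $A_{w_1}(\mathcal{B}_i) = A_{w_1}(\C)$ for every $i \in \{1,\dots,n\}$, so every minimum-weight codeword lies in every $\mathcal{B}_i$; but $\bigcap_{i=1}^{n} \mathcal{B}_i = \{\mathbf{0}\}$, contradicting $A_{w_1} > 0$.

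For the characterization when $w_1 = n(q-1)/q$, I would switch to the geometric picture: the defining set $D \subseteq \PG(k-1,q)$ with hyperplane intersection numbers $b = n - w_1 = n/q$ and $a = n - w_2$. The first two moment identities from Proposition~\ref{prop:wt-dist-2}, together with the standard second-moment count $\sum_{H} |D \cap H|^2 = n N_1 + n(n-1) N_2$ where $N_i = (q^{k-i} - 1)/(q-1)$, force $a = n - q^{k-1}$ and hence $w_2 = q^{k-1}$. A direct computation then shows $|H \cap D^c| = N_1 - a = |D^c|$ whenever $|H \cap D| = a$, so the hyperplanes realizing the minimum intersection are exactly those containing $D^c := \PG(k-1,q) \setminus D$. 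The number of such hyperplanes equals $(q^{k-d}-1)/(q-1)$, where $d = \dim_{\gf_q}\langle D^c\rangle$; comparing with the moment-derived value $\mu_a = n/(q^k - n(q-1))$ forces $q^d = q^k - n(q-1)$. Therefore $|D^c| = (q^d-1)/(q-1) = |\langle D^c \rangle|$, and combined with $D^c \subseteq \langle D^c \rangle$ this forces $D^c = \langle D^c \rangle$. Hence $D^c$ is a projective subspace of dimension $t = d-1$, with $0 \le t \le k-2$ (the upper bound because $d \le k-1$, else $D$ would be empty). The converse reduces to a direct case split of $|D \cap H|$ over hyperplanes $H$ depending on whether $H$ contains the subspace whose complement is $D$.

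The main obstacle is the geometric identification of $D^c$ as a genuine projective subspace rather than merely a subset of one. The critical leverage is that two a priori independent computations, namely $|D^c|$ (from the weight distribution under the assumption $w_1 = n(q-1)/q$) and $\dim \langle D^c \rangle$ (from counting hyperplanes through $D^c$), produce sizes that coincide, compelling $D^c$ to equal its own linear span.
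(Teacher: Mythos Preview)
Your proposal is correct, and the overall shape matches the paper: both of you obtain the inequalities from Proposition~\ref{prop:codimension-1}(3) by comparing $A_{w_i}(\mathcal{B})$ with $A_{w_i}(\mathcal{C})$ for the two types of codimension-one subcodes.

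Where you genuinely diverge is in the sharpening of the $w_2$ bound from $qw_2 \ge n(q-1)+1$ to $qw_2 \ge n(q-1)+2$. The paper argues that if equality held then $\mathcal{C}' := \mathrm{Span}(\wt^{-1}(w_1))$ would be a proper subcode, invokes Lemma~\ref{lemma:equidistant-support} to conclude $\wt^{-1}(w_1) = \mathcal{C}' \setminus \{0\}$, counts the codimension-one subspaces containing $\mathcal{C}'$ to determine $n$, and finally derives $w_1 = 0$. Your observation that $A_{w_1}(\mathcal{B}_i) = A_{w_1}(\mathcal{C})$ for all $i$ forces every minimum-weight codeword into $\bigcap_i \mathcal{B}_i = \{0\}$ is strictly more elementary and bypasses Lemma~\ref{lemma:equidistant-support} entirely. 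For the characterization of $w_1 = n(q-1)/q$, the paper works on the codeword side: it studies $\mathrm{Span}(\wt^{-1}(w_2))$, again uses Lemma~\ref{lemma:equidistant-support}, and finally passes to the orthogonal complement $W_2^\perp$ to identify $D^c$. You work directly on the geometric side, extracting $w_2 = q^{k-1}$ from the Pless moments (using $A_2^\perp = 0$), showing that the hyperplanes with intersection number $a$ are exactly those containing $D^c$, and then matching $|D^c|$ against the size of its span. The two arguments are dual to each other (your $\langle D^c \rangle$ is the paper's $W_2^\perp$), but yours avoids the detour through the equidistant-code lemma and reaches the conclusion with fewer moving parts.
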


	\begin{proof}
	Since $\mathcal{C}$ is a projective two-weight $[n, k]$ code over $\mathbb{F}_q$, it follows that $n < \frac{q^k - 1}{q - 1}$. According to Proposition~\ref{prop:codimension-1}, there exist exactly $n$ codimension-$1$ subspaces $\mathcal{B}$ of $\mathcal{C}$ such that $\mathrm{wt}(\mathcal{B}) = n - 1$. Therefore, there exists at least one codimension-$1$ subspace $\mathcal{B}$ of $\mathcal{C}$ satisfying $\mathrm{wt}(\mathcal{B}) = n$.
From Proposition~\ref{prop:codimension-1}, we have
	\[
	A_{w_2} \left ( \mathcal{B} \right ) = \frac{1}{w_1 - w_2} \left( w_1 \left( q^{k-1} - 1 \right) - n q^{k-2} (q - 1) \right).
	\]
	Clearly, we also have $A_{w_2} \left ( \mathcal{B} \right ) \le A_{w_2} \left ( \mathcal{C} \right )$. Then, applying Proposition~\ref{prop:wt-dist-2}, it follows that
	\[
	w_1 \leq \frac{n(q - 1)}{q}.
	\]

	To derive a lower bound on $w_2$, we choose a codimension-$1$ subcode $\mathcal{B}$ of $\mathcal{C}$ such that $\mathrm{wt}(\mathcal{B}) = n - 1$. By Proposition~\ref{prop:codimension-1}, we have
\[
A_{w_1}(\mathcal{B}) = \frac{1}{w_2 - w_1} \left( w_2 \left( q^{k-1} - 1 \right) - (n-1) q^{k-2} (q - 1) \right).
\]
Using the inequality \(A_{w_1}(\mathcal{B}) \le A_{w_1}(\mathcal{C})\) together with Proposition~\ref{prop:wt-dist-2}, we obtain
\[
q w_2 \ge n(q - 1) + 1.
\]
We now prove by contradiction that equality cannot hold. Suppose, for contradiction, that
\[
q w_2 = n(q - 1) + 1.
\]
Let \(\mathcal{C}'\) be the linear span of all codewords in \(\mathcal{C}\) of Hamming weight \(w_1\), i.e., \(\mathcal{C}' = \mathrm{Span}(\mathrm{wt}^{-1}(w_1))\), and denote its dimension by \(k_1\). By Proposition~\ref{prop:codimension-1}, we know that for any codimension-1 subspace \(\mathcal{B} \subset \mathcal{C}\) with generalized weight \(n-1\), we have \(A_{w_1}(\mathcal{B}) = A_{w_1}(\mathcal{C})\). This implies \(k_1 < k\). 
Again, by Proposition~\ref{prop:codimension-1}, the set of codewords \(\mathrm{wt}^{-1}(w_1)\) forms an equidistant linear code. Applying Lemma~\ref{lemma:equidistant-support}, we conclude that
\[
\mathrm{wt}^{-1}(w_1) = \mathcal{C}' \setminus \{0\}.
\]
Hence, the total number of weight-\(w_1\) codewords in \(\mathcal{C}\) is
\[
A_{w_1}(\mathcal{C}) = \frac{1}{w_2 - w_1} \left( w_2(q^k - 1) - nq^{k-1}(q - 1) \right) = q^{k_1} - 1.
\]
We obtain:
\begin{equation} \label{eq:w1w2-relation}
(q^{k_1} - 1) w_1 = q^{k_1} w_2 - q^{k-1}.
\end{equation}
Now, by Proposition~\ref{prop:codimension-1}, the set \(\mathrm{wt}^{-1}(w_1)\) is covered by exactly \(n\) codimension-1 subspaces of \(\mathcal{C}\). On the other hand, the number of codimension-1 subspaces in \(\mathcal{C}\) that cover \(\mathcal{C}'\) is given by
\[
n = \frac{q^{k - k_1} - 1}{q - 1}.
\]
Under our assumption \(q w_2 = n(q - 1) + 1\), we substitute to get
\[
w_2 = q^{k - k_1 - 1}.
\]
Plugging this into equation~\eqref{eq:w1w2-relation}, we obtain:
\[
(q^{k_1} - 1) w_1 = q^{k_1} \cdot q^{k - k_1 - 1} - q^{k-1} = q^{k - 1} - q^{k - 1} = 0,
\]
which implies \(w_1 = 0\). This contradicts the assumption that \(w_1 > 0\) since it is a Hamming weight.
Therefore, the assumption \(q w_2 = n(q - 1) + 1\) cannot hold, and we must have
\[
q w_2 \ge n(q - 1) + 2,
\]
which gives the lower bound
\[
w_2 \ge \frac{n(q - 1) + 2}{q}.
\]

We now consider the characterization of two-weight codes with $w_1 = \frac{n(q - 1)}{q}$. 

First, suppose that the code $\mathcal{C}$ is defined by the complement of a $t$-dimensional projective subspace in $\mathrm{PG}(k-1, q)$, where $t \leq k - 2$. Then, it is straightforward to verify that $\mathcal{C}$ is a two-weight linear code of parameters 
\[
\left[ \frac{q^k - q^{t+1}}{q - 1},\; k \right]
\]
with weights 
\[
w_1 = q^{k-1} - q^t,\quad w_2 = q^{k-1}.
\]
Clearly, we have \( w_1 = \frac{n(q - 1)}{q} \).

Conversely, suppose that $\mathcal{C}$ is an $[n,k]$ two-weight code with $w_1 = \frac{n(q - 1)}{q}$. Let $\mathcal{C}'$ denote the linear span of all codewords in $\mathcal{C}$ of weight $w_2$, i.e., $\mathcal{C}' = \mathrm{Span}(\mathrm{wt}^{-1}(w_2))$, and denote its dimension by $k_2$.
From the proof of the inequality \( w_1 \leq \frac{n(q - 1)}{q} \), we know that for every codimension-1 subcode $\mathcal{B} \subset \mathcal{C}$ with full support (i.e., $\mathrm{wt}(\mathcal{B}) = n$), it holds that
\[
A_{w_2}(\mathcal{B}) = A_{w_2}(\mathcal{C}).
\]
This implies that $\mathcal{C}'$ is a proper subspace of $\mathcal{C}$, i.e., $k_2 < k$.
Applying Proposition~\ref{prop:codimension-1}, we know that the code with defining set \(\mathrm{wt}^{-1}(w_2)\) is a constant-weight linear code. By Lemma~\ref{lemma:equidistant-support}, it follows that:
\[
\mathrm{wt}^{-1}(w_2) = \mathcal{C}' \setminus \{0\}.
\]
On one hand, Proposition~\ref{prop:codimension-1} implies that exactly $\frac{q^k - 1}{q - 1} - n$ codimension-1 subspaces of $\mathcal{C}$ cover $\mathrm{wt}^{-1}(w_2)$. On the other hand, the number of codimension-1 subspaces in $\mathcal{C}$ that contain $\mathcal{C}'$ is:
\[
\frac{q^{k - k_2} - 1}{q - 1}.
\]
Equating the two expressions gives:
\[
\frac{q^k - 1}{q - 1} - n = \frac{q^{k - k_2} - 1}{q - 1},
\]
from which we obtain:
\[
n = \frac{q^k - q^{k - k_2}}{q - 1}.
\]
Substituting into \( w_1 = \frac{n(q - 1)}{q} \), we get:
\[
w_1 = q^{k - 1} - q^{k - k_2 - 1}.
\]
Moreover, since
\[
A_{w_2}(\mathcal{C}) = \# \mathrm{wt}^{-1}(w_2) = q^{k_2} - 1,
\]
we apply Proposition~\ref{prop:wt-dist-2} to conclude:
\[
w_2 = q^{k - 1}.
\]
Let the defining set of $\mathcal{C}$ be $D = \{\mathbf{v}_1, \dots, \mathbf{v}_n\} \subset \mathrm{PG}(k-1, q)$. Let the complement of $D$ in the projective space be denoted by $\{\mathbf{v}_1', \dots, \mathbf{v}_m'\}$, where \( m = \frac{q^k - 1}{q - 1} - n \).
Define the set
\[
W_2 := \left\{ \mathbf{u} \in \mathbb{F}_q^k : \mathrm{wt} \left( \langle \mathbf{u}, \mathbf{v}_i \rangle_{1 \le i \le n} \right) = w_2 \right\}.
\]
Then $W_2$ is a linear space isomorphic to $\mathrm{wt}^{-1}(w_2)$, and for every $\mathbf{u} \in W_2$ and $1 \le j \le m$, we have $\langle \mathbf{u}, \mathbf{v}_j' \rangle = 0$. Hence, all vectors in $W_2$ are orthogonal to all points in $\mathrm{PG}(k - 1, q) \setminus D$.
This implies:
\[
(q - 1) \cdot \left( \frac{q^k - 1}{q - 1} - n \right) \leq q^{k - k_2} - 1,
\]
which gives the inequality:
\[
n \ge \frac{q^k - q^{k - k_2}}{q - 1}.
\]
In fact, equality holds if and only if $\mathrm{PG}(k - 1, q) \setminus D$ coincides with the projective space associated with the orthogonal complement $W_2^{\perp}$. That is:
\[
D = \mathrm{PG}(k - 1, q) \setminus \overline{W_2^{\perp}} ,
\]
where $\overline{W_2^{\perp}}$ is the projective point set corresponding to $W_2^{\perp}$.
This completes the proof.

\end{proof}

Theorem~\ref{thm:w1-w2-bound} yields further structural insights as summarized below.

\begin{corollary}\label{cor:comp-subspace-*}
Let $\mathcal{C}$ be a projective two-weight $[n, k]$ code over $\mathbb{F}_q$ with weights $w_1 = \frac{n(q-1)}{q} < w_2$. Define 
\[
k_2 = k - \log_q\left( q^k - n(q-1) \right).
\]
Then the defining set of $\mathcal{C}$ is the complement of a $(k - k_2 - 1)$-dimensional projective subspace $\overline{ U}$ in $\mathrm{PG}(k - 1, q)$.
Moreover, the projectivization $\overline{\mathcal{C}^*(w_2)}$ of ${\mathcal{C}^*(w_2)}$ is the simplex code with parameters $\left[\frac{q^{k_2} - 1}{q - 1},\ k_2\right]$.
In addition, the projectivization $\overline{\mathcal{C}^*(w_1)}$ of ${\mathcal{C}^*(w_1)}$ is the linear code whose defining set is the complement of the $(k_2 - 1)$-dimensional projective subspace $\overline{U^{\perp}}$ in $\mathrm{PG}(k - 1, q)$, where $U$ denotes the linear subspace obtained by lifting the projective subspace $\overline{U}$.

\end{corollary}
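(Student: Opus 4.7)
The plan is to read off this corollary from what was already established inside the proof of Theorem~\ref{thm:w1-w2-bound}, combined with the defining-set characterization of $\overline{\mathcal{C}^*(w)}$ in Theorem~\ref{thm:def-C-w}. Fix a generator matrix $G=[\mathbf{g}_1~\cdots~\mathbf{g}_n]$ of $\mathcal{C}$ and, mirroring the notation of that proof, set
\[
W_2=\{\mathbf{v}\in\mathbb{F}_q^k:\mathrm{wt}((\langle \mathbf{v},\mathbf{g}_i\rangle)_{1\le i\le n})=w_2\}\cup\{\mathbf{0}\}.
\]
The argument in Theorem~\ref{thm:w1-w2-bound} already establishes that $W_2$ is a linear subspace of some dimension $k_2$, that $\mathrm{wt}^{-1}(w_2)$ is in bijection with $W_2\setminus\{\mathbf{0}\}$, that $n=(q^k-q^{k-k_2})/(q-1)$ (which solves to $k_2=k-\log_q(q^k-n(q-1))$ as in the statement), and that the defining set of $\mathcal{C}$ equals $D=\mathrm{PG}(k-1,q)\setminus\overline{W_2^{\perp}}$. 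Setting $U:=W_2^{\perp}$ makes $\overline{U}$ a projective subspace of dimension $k-k_2-1$, which gives part~(1) at once.

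For part~(2), I apply Theorem~\ref{thm:def-C-w} with $w=w_2$: its set $D$ is exactly $W_2\setminus\{\mathbf{0}\}$, so $\overline{\mathcal{C}^*(w_2)}$ is monomially equivalent to the linear code whose defining set is the entire $(k_2-1)$-dimensional projective subspace $\overline{W_2}$ of $\mathrm{PG}(k-1,q)$. Evaluating all linear functionals on $\mathbb{F}_q^k$ at the points of $\overline{W_2}$ and quotienting by the kernel of the restriction map to $W_2$ produces the simplex code of parameters $[(q^{k_2}-1)/(q-1),\,k_2]$, which is the stated identification.

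For part~(3), I first verify through Proposition~\ref{prop:dim(C*)} that $\dim\mathcal{C}^*(w_1)=k$: the count $A_{w_1}(\mathcal{C})=q^k-q^{k_2}$ exceeds $q^{k-1}-1$, so the weight-$w_1$ codewords cannot fit inside any codimension-one subcode of $\mathcal{C}$ and therefore span $\mathcal{C}$. Theorem~\ref{thm:def-C-w} with $w=w_1$ then applies, and its defining set $W_1$ is, by the bijectivity of $\mathbf{v}\mapsto(\langle \mathbf{v},\mathbf{g}_i\rangle)_{1\le i\le n}$ from $\mathbb{F}_q^k\setminus\{\mathbf{0}\}$ onto $\mathcal{C}\setminus\{\mathbf{0}\}=\mathrm{wt}^{-1}(w_1)\sqcup\mathrm{wt}^{-1}(w_2)$, exactly $\mathbb{F}_q^k\setminus(W_2\cup\{\mathbf{0}\})$. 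Projectively this reads $\overline{W_1}=\mathrm{PG}(k-1,q)\setminus\overline{W_2}$, and since $U^{\perp}=(W_2^{\perp})^{\perp}=W_2$ the excised subspace can be rewritten as $\overline{U^{\perp}}$, giving the claim. The only delicate step to watch, and the main opportunity for confusion, is the notational swap between $W_2$ and its annihilator $U=W_2^{\perp}$; the two must be tracked carefully so that the $(k-k_2-1)$-dimensional $\overline{U}$ of part~(1) and the $(k_2-1)$-dimensional $\overline{U^{\perp}}$ of part~(3) both match the wording of the statement verbatim.
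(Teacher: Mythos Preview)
Your proposal is correct and matches the paper's approach: the paper states this corollary immediately after Theorem~\ref{thm:w1-w2-bound} with no separate proof, treating it as a direct consequence of the argument already given there, and your write-up is exactly the detailed unpacking of that derivation via Theorem~\ref{thm:def-C-w}. One small technical caveat: when you invoke Theorem~\ref{thm:def-C-w} with $w=w_2$, its stated hypothesis (that the weight-$w$ codewords span $\mathcal{C}$) fails, since $\mathrm{Span}(\mathrm{wt}^{-1}(w_2))$ has dimension $k_2<k$; you handle this correctly by passing to the quotient, but it would be cleaner to either note that the spanning hypothesis is not actually used in the proof of Theorem~\ref{thm:def-C-w}, or to argue part~(2) directly from the definition of $\mathcal{C}^*(w_2)$.
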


\begin{remark}
The following two families of projective two-weight codes attain the lower bound 
$\tfrac{n(q-1)+2}{q}$ for the maximum nonzero weight $w_2$ given in Theorem~\ref{thm:w1-w2-bound}.
\begin{enumerate}
    \item[(1)] Let $\mathcal{C}$ be the code obtained from the $[2^k-1,k]$ simplex code over $\mathbb{F}_2$ by puncturing one coordinate. 
    Then $\mathcal{C}$ is a projective two-weight $[2^k-2,k]$ code with weights 
    $w_1 = 2^{k-1}-1$ and $w_2 = 2^{k-1}$.
    
    \item[(2)] Let $D_1$ and $D_2$ be two disjoint $(\ell-1)$-dimensional subspaces of 
    $\mathrm{PG}(2\ell-1,q)$. Then the code $\mathcal{C}_{D_1 \cup D_2}$ defined by the set 
    $D_1 \cup D_2$ is a projective two-weight 
    $\left [\tfrac{2(q^\ell-1)}{q-1}, 2\ell \right ]$ code with weights 
    $w_1 = q^{\ell-1}$ and $w_2 = 2q^{\ell-1}$.
\end{enumerate}
\end{remark}

This naturally leads to the following open problem, which appears to be 
worthy of further investigation.

\begin{problem}
Completely characterize projective two-weight codes with weights 
$w_1 < w_2 = \tfrac{n(q-1)+2}{q}$.
\end{problem}

\subsection{Codes Associated with Fixed-Weight Codewords of Two-Weight Codes}

For a projective two-weight code $\C$ with weights $w_1$ and $w_2$, the parameters of $\mathcal{C}^*(w_1)$ and $\mathcal{C}^*(w_2)$ are given below.

\begin{theorem}\label{thm:dual-2wt}
Let $\mathcal{C}$ be a projective two-weight $[n, k]$ code over $\mathbb{F}_q$ with weights $w_1 < w_2$, and suppose $w_1 \neq \frac{n(q-1)}{q}$. Let $A_{w_1}$ and $A_{w_2}$ be as described in Proposition~\ref{prop:wt-dist-2}. Then:
\begin{enumerate}
    \item[(1)] $\mathcal{C}^*(w_1)$ is a two-weight $[A_{w_1}, k, w_{1,1}^*]$ code with weights
    \[
    w_{1,1}^* = \frac{q^{k-2}(q-1)}{w_2 - w_1} \left( w_2 q - n(q-1) - 1 \right), \quad
    w_{1,2}^* = \frac{q^{k-2}(q-1)}{w_2 - w_1} \left( w_2 q - n(q-1) \right).
    \]
    Its weight distribution is
    \[
    A_{w_{1,1}^*} \left( \mathcal{C}^*(w_1) \right) = n(q-1), \quad
    A_{w_{1,2}^*} \left( \mathcal{C}^*(w_1) \right) = q^k - 1 - n(q-1).
    \]
    
    \item[(2)] $\mathcal{C}^*(w_2)$ is a two-weight $[A_{w_2}, k, w_{2,1}^*]$ code with weights
    \[
    w_{2,1}^* = \frac{q^{k-2}(q-1)}{w_1 - w_2} \left( w_1 q - n(q-1) \right), \quad
    w_{2,2}^* = \frac{q^{k-2}(q-1)}{w_1 - w_2} \left( w_1 q - n(q-1) - 1 \right).
    \]
    Its weight distribution is
    \[
    A_{w_{2,1}^*} \left( \mathcal{C}^*(w_2) \right) = q^k - 1 - n(q-1), \quad
    A_{w_{2,2}^*} \left( \mathcal{C}^*(w_2) \right) = n(q-1).
    \]
\end{enumerate}
\end{theorem}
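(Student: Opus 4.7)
The plan is to realise each nonzero codeword of $\mathcal{C}^*(w_i)$ (for $i\in\{1,2\}$) as coming from a unique codimension-one subcode of $\mathcal{C}$, and then to read off its weight from the explicit formulas in Propositions~\ref{prop:wt-dist-2} and~\ref{prop:codimension-1}. The length is immediate: by definition, $\mathcal{C}^*(w_i)$ has length $|\wt^{-1}(w_i)| = A_{w_i}$. For the dimension, I would apply Proposition~\ref{prop:dim(C*)}, which reduces the claim $\dim \mathcal{C}^*(w_i) = k$ to showing that no codimension-one subcode $\mathcal{B}$ of $\mathcal{C}$ satisfies $A_{w_i}(\mathcal{B}) = A_{w_i}(\mathcal{C})$. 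By Proposition~\ref{prop:codimension-1}(1) we have $\wt(\mathcal{B}) \in \{n-1, n\}$, and in either case Proposition~\ref{prop:codimension-1}(3) gives an explicit expression for $A_{w_i}(\mathcal{B})$; subtracting from $A_{w_i}(\mathcal{C})$ produces the four quantities $w_{i,j}^{*}$ appearing in the statement, whose strict positivity follows from the bound $qw_2 \geq n(q-1)+2$ of Theorem~\ref{thm:w1-w2-bound} together with the hypothesis $w_1 \neq \frac{n(q-1)}{q}$.

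Once $\dim \mathcal{C}^*(w_i) = k$ is secured, the evaluation map $E \colon L(\mathcal{C}) \to \mathcal{C}^*(w_i)$, $\ell \mapsto (\ell(\mathbf{c}))_{\mathbf{c} \in \wt^{-1}(w_i)}$, is a linear isomorphism. Hence each nonzero codeword of $\mathcal{C}^*(w_i)$ is $E(\ell)$ for a unique nonzero $\ell$, and its Hamming weight equals the number of $\mathbf{c} \in \wt^{-1}(w_i)$ on which $\ell$ does not vanish, namely $A_{w_i}(\mathcal{C}) - A_{w_i}(\ker \ell)$. Because $\ker \ell$ is a codimension-one subcode determined by $\ell$ only up to a nonzero scalar, each codimension-one subcode $\mathcal{B}$ of $\mathcal{C}$ contributes exactly $q-1$ nonzero codewords of the same weight. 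Partitioning the $\frac{q^k-1}{q-1}$ codimension-one subcodes via Proposition~\ref{prop:codimension-1}(2) into the $n$ of weight $n-1$ and the $\frac{q^k-1}{q-1}-n$ of weight $n$, and multiplying each class by $q-1$, produces the two frequencies $n(q-1)$ and $q^k-1-n(q-1)$ claimed by the theorem.

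The remainder is computational: substitute Propositions~\ref{prop:wt-dist-2} and~\ref{prop:codimension-1}(3) into $A_{w_i}(\mathcal{C}) - A_{w_i}(\mathcal{B})$ and simplify to obtain the closed forms $w_{i,j}^{*}$. The main subtlety is pairing each subcode class with the correct label: a direct comparison shows $w_{1,1}^{*} < w_{1,2}^{*}$ and $w_{2,1}^{*} < w_{2,2}^{*}$, and matching the weights to the two classes reveals that the roles of $\wt(\mathcal{B}) = n-1$ and $\wt(\mathcal{B}) = n$ swap between $i=1$ and $i=2$. The hypothesis $w_1 \neq \frac{n(q-1)}{q}$ plays its only essential role here, in preventing $w_{2,1}^{*}$ from vanishing, which is exactly the condition ensuring $\dim \mathcal{C}^*(w_2) = k$ via Proposition~\ref{prop:dim(C*)}.
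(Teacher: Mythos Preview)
Your proposal is correct and follows essentially the same route as the paper's proof: both reduce the dimension claim to Proposition~\ref{prop:dim(C*)}, rule out the degenerate case $A_{w_i}(\mathcal{B}) = A_{w_i}(\mathcal{C})$ via the bounds of Theorem~\ref{thm:w1-w2-bound}, and then read off the two weights and their frequencies from the dichotomy $\wt(\mathcal{B})\in\{n-1,n\}$ in Proposition~\ref{prop:codimension-1} combined with Corollary~\ref{cor:C*-distribution}. The paper's write-up is more compressed (it cites Corollary~\ref{cor:C*-distribution} directly and argues the dimension step by contradiction rather than by explicitly verifying positivity of each $w_{i,j}^*$), but the underlying logic is identical; your version is in fact a bit more explicit about why the hypothesis $w_1\neq\tfrac{n(q-1)}{q}$ is needed only for part~(2).
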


\begin{proof}
We first show that the set of codewords of $\mathcal{C}$ with weight $w_1$, denoted by $\mathrm{wt}^{-1}(w_1)$, spans the entire code $\mathcal{C}$. Suppose, for the sake of contradiction, that the subspace spanned by the weight $w_1$ codewords is contained in a subcode $\mathcal{B}\subset\mathcal{C}$ of codimension $1$.
 Then $\mathcal{B}$ would contain all the codewords of weight $w_1$, that is,
\[
A_{w_1}(\mathcal{B}) = A_{w_1}(\mathcal{C}).
\]
Intuitively, $\mathcal{B}$ must then contain as many weight $w_1$ codewords as possible. By Proposition~\ref{prop:codimension-1}, this forces $\mathcal{B}$ to have generalized Hamming weight $\mathrm{wt}(\mathcal{B}) = n-1$. Combining Proposition~\ref{prop:codimension-1} with Proposition~\ref{prop:wt-dist-2}, one obtains
\[
w_2 = \frac{n(q-1)+1}{q},
\]
which contradicts Theorem~\ref{thm:w1-w2-bound}, where it is shown that $w_2 \ge \frac{n(q-1)+1}{q}$. Hence, the set of codewords of weight $w_1$ indeed spans the whole code $\mathcal{C}$. In particular, by Proposition~\ref{prop:dim(C*)}, the dimension of $\mathcal{C}^*(w_1)$ is equal to $k$. Finally, by combining Corollary~\ref{cor:C*-distribution} with Proposition~\ref{prop:codimension-1}, one obtains the explicit weight distribution of $\mathcal{C}^*(w_1)$ given in the theorem. 

The argument for $\mathcal{C}^*(w_2)$ is entirely analogous, and we omit the details. This completes the proof.
\end{proof}

As a corollary of Theorem \ref{thm:dual-2wt}, we have the following.
\begin{corollary}\label{cor:proj-dual-2wt}
Let $\mathcal{C}$ be a projective two-weight $[n, k]$ code over $\gf_q$ with weights $w_1$ and $w_2$, where $w_1 < w_2$, $w_1 \neq \frac{n(q-1)}{q}$. Let $A_{w_1}$ and $A_{w_2}$ be as described in Proposition~\ref{prop:wt-dist-2}. Then
\begin{enumerate}
    \item[(1)] The projectivization $\overline{\mathcal{C}^*({w_1})}$ of ${\mathcal{C}^*({w_1})}$  is a projective two-weight $[A_{w_1}/(q-1), k, \overline{w}_{1,1}^*]$ code with weights $\overline{w}_{1,1}^*$ and $\overline{w}_{1,2}^*$, where 
    \[
    \overline{w}_{1,1}^* = \frac{q^{k-2}}{w_2 - w_1} \left( w_2 q - n (q-1) - 1 \right), \quad \text{and} \quad \overline{w}_{1,2}^* = \frac{q^{k-2}}{w_2 - w_1} \left( w_2 q - n (q-1) \right).
    \]
    Its weight distribution is given by
    \[
    A_{\overline{w}_{1,1}^*} \left (\overline{\mathcal{C}^*({w_1})} \right ) = n(q-1),  
    \quad     \text{and} \quad 
    A_{\overline{w}_{1,2}^*}\left (\overline{\mathcal{C}^*({w_1})} \right ) = q^k-1-n(q-1).
    \]

    \item[(2)] The projectivization $\overline{\mathcal{C}^*({w_2})}$ of ${\mathcal{C}^*({w_2})}$  is a projective two-weight $[A_{w_2}/(q-1), k, \overline{w}_{2,1}^*]$ code with weights $\overline{w}_{2,1}^*$ and $\overline{w}_{2,2}^*$, where 
    \[
    \overline{w}_{2,1}^* = \frac{q^{k-2}}{w_1 - w_2} \left( w_1 q - n (q-1) \right), \quad \text{and} \quad \overline{w}_{2,2}^* = \frac{q^{k-2}}{w_1 - w_2} \left( w_1 q - n (q-1) - 1 \right).
    \]
    Its weight distribution is given by
    \[
    A_{\overline{w}_{2,1}^*} \left ( \overline{\mathcal{C}^*({w_2})} \right ) = q^k-1-n(q-1), \quad 
    \text{and}
   \quad 
    A_{\overline{w}_{2,2}^*}\left (\overline{\mathcal{C}^*({w_2})} \right ) = n(q-1).
    \]
\end{enumerate}
\end{corollary}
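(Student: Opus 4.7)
The plan is to derive this corollary as a direct translation of Theorem~\ref{thm:dual-2wt} through the projectivization map introduced in Section~\ref{generic}. The first step is to observe that, because $\mathrm{wt}^{-1}(w_i)$ is stable under scalar multiplication by $\mathbb{F}_q^*$, the columns of the matrix $M_{w_i}$ defining $\mathcal{C}^*(w_i)$ partition into $A_{w_i}/(q-1)$ orbits of size $q-1$. This is exactly the block structure $[G_0 \mid \alpha G_0 \mid \cdots \mid \alpha^{q-2}G_0]$ described in Section~\ref{generic}, so the projectivization $\overline{\mathcal{C}^*(w_i)}$ is well-defined and has length $A_{w_i}/(q-1)$, matching the claim.

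Next, I would invoke Theorem~\ref{thm:dual-2wt} to conclude that $\mathcal{C}^*(w_i)$ has dimension $k$; since the linear surjection $\mathcal{C}^*(w_i) \to \overline{\mathcal{C}^*(w_i)}$ obtained by deleting the repeated blocks $\alpha G_0, \dots, \alpha^{q-2} G_0$ is a bijection (each block has the same rank over $\mathbb{F}_q$), the projectivization also has dimension $k$. Under this bijection every Hamming weight is scaled by the factor $1/(q-1)$, because a codeword supported on a coordinate of $G_0$ is automatically supported on the corresponding coordinates of all $q-2$ scalar copies. Therefore the two nonzero weights $w_{i,1}^*, w_{i,2}^*$ from Theorem~\ref{thm:dual-2wt} map to the projective weights
\[
\overline{w}_{i,1}^* = \frac{w_{i,1}^*}{q-1},\qquad \overline{w}_{i,2}^* = \frac{w_{i,2}^*}{q-1},
\]
which, after substitution, yield the explicit expressions stated in the corollary.

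Finally, since the bijection is weight-preserving up to the factor $q-1$, the multiplicities carry over verbatim: $A_{\overline{w}_{i,j}^*}(\overline{\mathcal{C}^*(w_i)}) = A_{w_{i,j}^*}(\mathcal{C}^*(w_i))$ for $j=1,2$, and Theorem~\ref{thm:dual-2wt} supplies the right-hand sides $n(q-1)$ and $q^k-1-n(q-1)$. Projectivity of $\overline{\mathcal{C}^*(w_i)}$ is automatic from its construction, since by definition the generator matrix $G_0$ contains at most one representative from each one-dimensional subspace of $\mathbb{F}_q^k$.

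There is no genuine obstacle here: essentially all the heavy lifting has already been carried out in Theorem~\ref{thm:dual-2wt}, where the absence of any codimension-one subcode $\mathcal{B} \subset \mathcal{C}$ with $A_{w_i}(\mathcal{B}) = A_{w_i}(\mathcal{C})$ (ensured by the hypothesis $w_1 \neq \tfrac{n(q-1)}{q}$ via Theorem~\ref{thm:w1-w2-bound}) was used to guarantee that $\dim(\mathcal{C}^*(w_i)) = k$. The present corollary is then purely a matter of passing from the affine description to the projective description, so the write-up reduces to dividing the length and weights by $q-1$ and citing the relevant earlier results.
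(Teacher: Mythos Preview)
Your proposal is correct and matches the paper's treatment exactly: the paper states this result as an immediate corollary of Theorem~\ref{thm:dual-2wt} without further proof, and your write-up spells out precisely the routine passage from $\mathcal{C}^*(w_i)$ to its projectivization (dividing length and weights by $q-1$, keeping multiplicities, and noting projectivity from the fact that distinct $\mathbb{F}_q^*$-orbits in $\mathrm{wt}^{-1}(w_i)$ yield non-proportional columns of $G_0$).
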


The following result provides a more compact representation of the quantities 
$\overline{w}_{i,j}^*$ for $1 \le i,j \le 2$.

\begin{proposition}\label{prop:wij-wAw}
Let $\mathcal{C}$ be a projective two-weight $[n,k]$ code over $\mathbb{F}_q$ with nonzero weights $w_1$ and $w_2$ satisfying $w_1 < w_2$ and 
$w_1 \neq \tfrac{n(q-1)}{q}$. 
Suppose that the corresponding frequencies are $A_{w_1}$ and $A_{w_2}$, respectively. 
Let $\overline{w}_{1,1}^*$, $\overline{w}_{1,2}^*$, $\overline{w}_{2,1}^*$, and $\overline{w}_{2,2}^*$ be as defined in Corollary~\ref{cor:proj-dual-2wt}. 
Then we have
\[
\overline{w}_{1,1}^* = \frac{w_1 A_{w_1}}{n(q-1)}, 
\qquad 
\overline{w}_{2,2}^* = \frac{w_2 A_{w_2}}{n(q-1)},
\]
and
\[
\overline{w}_{1,2}^* = \frac{w_2' A_{w_1}}{n'(q-1)}, 
\qquad 
\overline{w}_{2,1}^* = \frac{w_1' A_{w_2}}{n'(q-1)},
\]
where 
$
n' = \frac{q^k-1}{q-1} - n$ and
$w_i' = q^{\,k-1} - w_{3-i}~ (i=1,2).
$
\end{proposition}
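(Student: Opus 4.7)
The plan is to combine Corollary~\ref{cor:projC*-distribution} in the case $r=1$ with the classification of codimension-$1$ subcodes of $\mathcal{C}$ given in Proposition~\ref{prop:codimension-1}, and then to apply a double-counting argument to each of the two resulting families. By Theorem~\ref{thm:dual-2wt} we already have $\dim\bigl(\overline{\mathcal{C}^*(w_i)}\bigr)=k$ for $i=1,2$, so Corollary~\ref{cor:projC*-distribution} identifies every nonzero weight of $\overline{\mathcal{C}^*(w_i)}$ with $(A_{w_i}(\mathcal{C})-A_{w_i}(\mathcal{B}))/(q-1)$ for some codimension-$1$ subcode $\mathcal{B}\subset\mathcal{C}$. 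Proposition~\ref{prop:codimension-1} splits these subcodes into family~(I), consisting of the $n$ subcodes $\mathcal{C}(i)=\{\mathbf{c}\in\mathcal{C}:c_i=0\}$ of generalized weight $n-1$, and family~(II), consisting of the remaining $n'=(q^k-1)/(q-1)-n$ subcodes of generalized weight $n$. Matching with the multiplicities in Corollary~\ref{cor:proj-dual-2wt} (noting that $q^k-1-n(q-1)=n'(q-1)$), the weights $\overline{w}_{1,1}^*$ and $\overline{w}_{2,2}^*$ arise from family~(I), while $\overline{w}_{1,2}^*$ and $\overline{w}_{2,1}^*$ arise from family~(II).

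For family~(I), summing codeword--coordinate incidences yields the double count
\[
\sum_{i=1}^{n}\bigl[A_{w}(\mathcal{C})-A_{w}(\mathcal{C}(i))\bigr]=\sum_{\mathbf{c}\in\mathrm{wt}^{-1}(w)}|\mathrm{Supp}(\mathbf{c})|=w\,A_{w}.
\]
Since Proposition~\ref{prop:codimension-1} asserts that $A_{w}(\mathcal{C}(i))$ is the same for every $i$, each summand on the left equals $wA_{w}/n$, and division by $q-1$ yields $\overline{w}_{1,1}^*=w_1A_{w_1}/(n(q-1))$ and $\overline{w}_{2,2}^*=w_2A_{w_2}/(n(q-1))$.

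For family~(II), fix a generator matrix $G=[\mathbf{g}_1\ \cdots\ \mathbf{g}_n]$ of $\mathcal{C}$ and identify $\mathcal{C}$ with $\mathbb{F}_q^k$ via $\mathbf{u}\mapsto\mathbf{u}G$, so that every codimension-$1$ subcode corresponds to a hyperplane $H\subset\mathbb{F}_q^k$ and lies in family~(II) precisely when the projective point $[H^{\perp}]$ belongs to $\overline{D^c}:=\mathrm{PG}(k-1,q)\setminus\overline{D}$, where $\overline{D}$ is the defining set of $\mathcal{C}$. Setting $W_{w}=\{\mathbf{u}\in\mathbb{F}_q^k:\mathrm{wt}(\mathbf{u}G)=w\}$ (so $|W_{w}|=A_{w}$), the analogous double count becomes
\[
\sum_{\mathcal{B}\in\text{family (II)}}\bigl[A_{w}(\mathcal{C})-A_{w}(\mathcal{B})\bigr]=\sum_{\mathbf{u}\in W_{w}}\#\bigl\{L\in\overline{D^c}:\langle\mathbf{u},L\rangle\neq 0\bigr\}.
\]
For each nonzero $\mathbf{u}$, exactly $q^{k-1}$ projective points $L\in\mathrm{PG}(k-1,q)$ satisfy $\langle\mathbf{u},L\rangle\neq 0$, and of these precisely $\mathrm{wt}(\mathbf{u}G)=w$ lie in $\overline{D}$; hence $q^{k-1}-w$ lie in $\overline{D^c}$. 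Summing over $W_{w}$ and dividing by $n'(q-1)$ (since $A_{w}(\mathcal{B})$ is again constant on family~(II)) produces $\overline{w}_{1,2}^*=(q^{k-1}-w_1)A_{w_1}/(n'(q-1))=w_2'A_{w_1}/(n'(q-1))$ and $\overline{w}_{2,1}^*=(q^{k-1}-w_2)A_{w_2}/(n'(q-1))=w_1'A_{w_2}/(n'(q-1))$.

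Both halves of the argument are essentially routine double counts, so no serious obstacle is expected; the only delicate points are matching each $\overline{w}_{i,j}^*$ to the correct family via the multiplicities in Corollary~\ref{cor:proj-dual-2wt}, and invoking the standard hyperplane/projective-point duality to reindex family~(II) by $\overline{D^c}$. A purely algebraic verification using the explicit formulas of Corollary~\ref{cor:proj-dual-2wt} and Proposition~\ref{prop:wt-dist-2} would also work, but it is less transparent and obscures the underlying combinatorial meaning of the identities.
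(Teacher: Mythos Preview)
Your argument is correct and takes a genuinely different route from the paper's proof. The paper proceeds algebraically: it combines the second and third Pless power moments to obtain
\[
\frac{w_1 A_{w_1}}{n(q-1)}=\frac{q^{k-2}}{w_2-w_1}\bigl(qw_2-n(q-1)-1\bigr),
\]
which is exactly the defining expression for $\overline{w}_{1,1}^*$, and then passes to the complement code $\mathrm{Comp}(\mathcal{C})$ (with parameters $n'$, $w_1'$, $w_2'$) to obtain the remaining two identities by the same Pless-moment computation. Your approach instead stays inside the codimension-$1$ subcode classification of Proposition~\ref{prop:codimension-1} and extracts each identity from a transparent incidence double count: for family~(I) you count codeword--coordinate incidences over $\overline{D}$, and for family~(II) you count the same incidences over the complement $\overline{D^c}$, using hyperplane/point duality to reindex. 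This avoids the Pless power moments entirely and makes the combinatorial meaning of all four identities visible simultaneously (they are just the $1$-design regularities of $\mathrm{wt}^{-1}(w)$ relative to $\overline{D}$ and $\overline{D^c}$). The paper's approach is quicker to write and exploits machinery already in place; yours is more self-contained and explains \emph{why} the formulas have the shape $wA_w/(n(q-1))$ and $(q^{k-1}-w)A_w/(n'(q-1))$ rather than merely verifying them.
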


\begin{proof}
From the second Pless power moment identity, we have
\[
w_1 A_{w_1} + w_2 A_{w_2} = n q^{k-1}(q-1).
\]
Multiplying both sides by $w_2$ gives
\[
w_1 w_2 A_{w_1} + w_2^2 A_{w_2} = n w_2 q^{k-1}(q-1).
\]
On the other hand, the third Pless power moment yields
\[
w_1^2 A_{w_1} + w_2^2 A_{w_2} = n q^{k-2}(q-1)(qn-n+1).
\]
Subtracting the two equations above and simplifying, we obtain
\begin{equation}\label{eq:wAw-n}
\frac{w_1 A_{w_1}}{n(q-1)} = \frac{q^{k-2}}{w_2-w_1} \Big( q w_2 - n(q-1) - 1 \Big).
\end{equation}
By Part (1) of Corollary~\ref{cor:proj-dual-2wt}, this immediately implies that 
$\overline{w}_{1,1}^* = \tfrac{w_1 A_{w_1}}{n(q-1)}$. 
By symmetry between $w_1$ and $w_2$, it follows analogously that 
$\overline{w}_{2,2}^* = \tfrac{w_2 A_{w_2}}{n(q-1)}$.

Now let $\mathcal{C}'$ be the code corresponding to the complement of the set 
$\{ \mathbf{g}_1, \dots, \mathbf{g}_n \}$ in the projective space $\mathrm{PG}(k-1,q)$, 
where $\mathbf{g}_1, \dots, \mathbf{g}_n$ are the column vectors of a generator matrix of $\mathcal{C}$.  
Since $w_1 \neq \tfrac{n(q-1)}{q}$, by Theorem~\ref{thm:w1-w2-bound} we know that $\mathcal{C}'$ is a projective two-weight $[n',k]$ code with weights $w_1'<w_2'$ and with corresponding frequencies $A_{w_2}$ and $A_{w_1}$, respectively. 

For the code $\mathcal{C}'$, an equation analogous to \eqref{eq:wAw-n} holds:
\[
\frac{w_1' A_{w_1'}(\mathcal{C}')}{n'(q-1)} 
= \frac{q^{k-2}}{w_2' - w_1'} \Big( q w_2' - n'(q-1) - 1 \Big).
\]
Substituting $A_{w_1'}(\mathcal{C}') = A_{w_2}$ on the left-hand side and 
$w_i' = q^{k-1} - w_{3-i}$ together with $n' = \tfrac{q^k - 1}{q-1} - n$ on the right-hand side, we obtain
\[
\overline{w}_{2,1}^* = \frac{w_1' A_{w_2}}{n'(q-1)}.
\]
Finally, by symmetry between $w_1$ and $w_2$, we also deduce
\[
\overline{w}_{1,2}^* = \frac{w_2' A_{w_1}}{n'(q-1)}.
\]
This completes the proof.
\end{proof}

\begin{example}
Let $\alpha$ be the primitive element of $\mathbb{F}_{3^6}$ such that $\alpha^6 -\alpha^4 + \alpha^2 -\alpha + 2=0$. Define 
\[
D = \left\{ \alpha^i : \mathrm{Tr}^6_1 \left (\alpha^{2i} \right ) = 0 \ \text{and} \ 0 \leq i \leq \frac{3^6 - 1}{3 - 1} - 1 \right\}.
\]
Then the linear code $\mathcal{C}_D$ has parameters $[112, 6 ,72]$ and weight enumerator $x^{112} +  504 x^{40}y^{72}+ 224 x^{30}y^{81}$. The linear code $\overline{\mathcal{C}_D^*(72)}$ has parameters $[ 252, 6,  162]$ and weight enumerator $x^{252} + 224 x^{90}y^{162} + 504x^{81}y^{171}$. The linear code $\overline{\mathcal{C}_D^*(81)}$ has parameters $[112, 6 ,72]$ and weight enumerator $x^{112} +  504 x^{40}y^{72}+ 224 x^{30}y^{81}$.

\end{example}

The following result shows that the parameters of a projective two-weight code satisfy certain divisibility conditions.

\begin{theorem}
Let $\mathcal{C}$ be a projective two-weight $[n,k]$ code over $\mathbb{F}_q$ with nonzero weights 
$w_1 < w_2$ and corresponding frequencies $A_{w_1}$ and $A_{w_2}$. Then the following hold:
\begin{enumerate}
    \item[(1)] The weight difference $w_2 - w_1$ divides each of $q^{\,k-2}$, $w_1$, and $w_2$;
    \item[(2)] The length $n$ divides both $\frac{w_1 A_{w_1}}{q-1}$ and $\frac{w_2 A_{w_2}}{q-1}$.
\end{enumerate}
\end{theorem}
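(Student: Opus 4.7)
The plan is to split the argument according to whether $w_1 = \frac{n(q-1)}{q}$ or not, leveraging the structural results obtained in Section~\ref{two-weight}.

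In the generic case $w_1 \neq \frac{n(q-1)}{q}$, I would first invoke Corollary~\ref{cor:proj-dual-2wt}, according to which $\overline{\mathcal{C}^*(w_1)}$ is a projective two-weight code with nonzero integer weights $\overline{w}_{1,1}^* < \overline{w}_{1,2}^*$ whose difference equals $q^{k-2}/(w_2-w_1)$. This difference being a positive integer immediately yields $w_2 - w_1 \mid q^{k-2}$, so $w_2 - w_1$ is a pure power of $q$, say $q^s$ with $0 \le s \le k-2$. To promote this to $w_2 - w_1 \mid w_2$, I would reduce the identity
\[
A_{w_1}(w_2 - w_1) = w_2(q^k - 1) - n q^{k-1}(q-1),
\]
obtained from Proposition~\ref{prop:wt-dist-2}, modulo $q^s$: since $s \le k-2$, both $q^k$ and $q^{k-1}$ vanish modulo $q^s$, forcing $w_2 \equiv 0 \pmod{w_2 - w_1}$ and hence also $w_2 - w_1 \mid w_1$. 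This settles part (1). For part (2), Proposition~\ref{prop:wij-wAw} furnishes the compact identities $\overline{w}_{1,1}^* = \frac{w_1 A_{w_1}}{n(q-1)}$ and $\overline{w}_{2,2}^* = \frac{w_2 A_{w_2}}{n(q-1)}$; since both are positive integers, one has $n(q-1) \mid w_i A_{w_i}$, whence $n \mid \frac{w_i A_{w_i}}{q-1}$ for $i \in \{1,2\}$.

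The remaining case $w_1 = \frac{n(q-1)}{q}$ I would handle by a direct verification using Corollary~\ref{cor:comp-subspace-*}, which supplies the explicit parameters $w_1 = q^{k-1} - q^t$, $w_2 = q^{k-1}$, and $n = (q^k - q^{t+1})/(q-1)$ with $0 \le t \le k-2$. Part (1) then becomes immediate: $w_2 - w_1 = q^t$ divides each of $q^{k-2}$, $w_1 = q^t(q^{k-1-t}-1)$ and $w_2 = q^{k-1}$, because $t \le k-2$. For part (2), I would evaluate $A_{w_1}$ and $A_{w_2}$ via Proposition~\ref{prop:wt-dist-2} and check that the quotients $\frac{w_i A_{w_i}}{n(q-1)}$ simplify to positive integers (they come out to a power of $q$ possibly multiplied by a factor of the form $q^{t+1} - 1$).

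I expect the main obstacle to be the mod-$q^s$ step of the generic case: the divisibility $w_2 - w_1 \mid w_2$ is not visible from Corollary~\ref{cor:proj-dual-2wt} on its own, and the trick of first deducing that $w_2 - w_1$ is a pure power of $q$ and only then feeding this back into the Pless power moment identity is what makes the argument close. Everything else is either a direct reading of the previously established formulas or elementary arithmetic.
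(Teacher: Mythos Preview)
Your proposal is correct and follows essentially the same route as the paper: deduce $(w_2-w_1)\mid q^{k-2}$ from the integrality of $\overline{w}_{1,2}^*-\overline{w}_{1,1}^*$ in Corollary~\ref{cor:proj-dual-2wt}, feed this back into the expressions of Proposition~\ref{prop:wt-dist-2} to obtain $(w_2-w_1)\mid w_1,w_2$, and read part~(2) off Proposition~\ref{prop:wij-wAw}. One harmless slip: when $q$ is a proper prime power, $(w_2-w_1)\mid q^{k-2}$ only forces $w_2-w_1$ to be a power of the characteristic $p$, not of $q$ itself---but all your reduction actually needs is $(w_2-w_1)\mid q^{k-2}\mid q^{k-1}$, so the argument goes through unchanged. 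You are in fact more careful than the paper in separately handling the boundary case $w_1=n(q-1)/q$, where Corollary~\ref{cor:proj-dual-2wt} and Proposition~\ref{prop:wij-wAw} do not directly apply.
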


\begin{proof}
From Part~(1) of Corollary~\ref{cor:proj-dual-2wt}, we have
\[
(w_2 - w_1)\big(\overline{w}_{1,2}^* - \overline{w}_{1,1}^*\big) = q^{\,k-2}.
\]
Hence, $w_2 - w_1$ divides $q^{\,k-2}$. Furthermore, by using the explicit expressions for 
$A_{w_1}$ and $A_{w_2}$ given in Proposition~\ref{prop:wt-dist-2}, it follows immediately that both $w_1$ and $w_2$ are divisible by $w_2 - w_1$.
Part~(2) of the theorem follows directly from Corollary~\ref{cor:proj-dual-2wt} and Proposition~\ref{prop:wij-wAw}.
\end{proof}

\begin{theorem}\label{thm:2weight-C-C**}

Let $\mathcal{C}$ be a projective two-weight $[n, k]$ code over $\mathbb{F}_q$ with nonzero weights $w_1$ and $w_2$, where $w_1 < w_2$ and $w_1 \neq \tfrac{n(q-1)}{q}$. Let $\mathbf{g}_1, \dots, \mathbf{g}_n$ denote the column vectors of a generator matrix of $\mathcal{C}$. Let the notations $\overline{\mathcal{C}^*(w_1)}$, $\overline{\mathcal{C}^*(w_2)}$, and the parameters $\overline{w}_{1,1}^*$, $\overline{w}_{1,2}^*$, $\overline{w}_{2,1}^*$, and $\overline{w}_{2,2}^*$ be as defined in Corollary~\ref{cor:proj-dual-2wt}. Then we have
\[
\overline{ \overline{\mathcal{C}^*(w_1)}^*(\overline{w}_{1,1}^*) } \cong \overline{ \overline{\mathcal{C}^*(w_2)}^*(\overline{w}_{2,2}^*) } \cong \mathcal{C},
\]
and
\[
\overline{ \overline{\mathcal{C}^*(w_1)}^*(\overline{w}_{1,2}^*) } \cong \overline{ \overline{\mathcal{C}^*(w_2)}^*(\overline{w}_{2,1}^*) } \cong \mathrm{Comp}(\mathcal{C}),
\]
where $\mathrm{Comp}(\mathcal{C})$ denotes the code corresponding to the complement of the set $\{ \mathbf{g}_1, \dots, \mathbf{g}_n \}$ in the projective space $\mathrm{PG}(k-1,q)$, and the symbol \(\cong\) stands for code equivalence.
	
\end{theorem}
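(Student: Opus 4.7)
The plan is to combine Theorem~\ref{thm:C**-C} (which reconstructs $\mathcal{C}$ from $\overline{\mathcal{C}^*(w)}$ under three hypotheses) with Theorem~\ref{thm:def-C-w} (which identifies $\overline{\mathcal{C}^*(w)}$ by an explicit defining set). I would split the four claimed equivalences into two types: the ``diagonal'' ones $\overline{\overline{\mathcal{C}^*(w_i)}^*(\overline{w}_{i,i}^*)} \cong \mathcal{C}$, to be treated by a direct invocation of Theorem~\ref{thm:C**-C}, and the ``off-diagonal'' ones $\overline{\overline{\mathcal{C}^*(w_i)}^*(\overline{w}_{i,3-i}^*)} \cong \mathrm{Comp}(\mathcal{C})$, to be treated by applying Theorem~\ref{thm:def-C-w} twice and then pinning down the defining set through the two-weight structure.

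For the diagonal equivalences, I would verify the three hypotheses of Theorem~\ref{thm:C**-C} applied to $\mathcal{C}$ with $w=w_i$: (a) the weight-$w_i$ codewords span $\mathcal{C}$, which is already established in the proof of Theorem~\ref{thm:dual-2wt}; (b) their supports form a $1$-design, which follows from the Assmus--Mattson theorem since $\mathcal{C}$ is projective (so $d^\perp \ge 3$) and has exactly two nonzero weights; and (c) the parameter $\lambda = \tfrac{w_i A_{w_i}}{n(q-1)}$ coincides with $\overline{w}_{i,i}^*$ by Proposition~\ref{prop:wij-wAw}, while Corollary~\ref{cor:proj-dual-2wt} gives $A_{\overline{w}_{i,i}^*}(\overline{\mathcal{C}^*(w_i)})=n(q-1)$. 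The conclusion of Theorem~\ref{thm:C**-C} then immediately yields the diagonal equivalences.

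For the off-diagonal case, I would first use Theorem~\ref{thm:def-C-w} to fix projective representatives $\mathbf{v}_1,\dots,\mathbf{v}_m$ of the defining set of $\overline{\mathcal{C}^*(w_1)}$, so that the $\mathbf{v}_j$ serve as the columns of a generator matrix of $\overline{\mathcal{C}^*(w_1)}$. A second application of Theorem~\ref{thm:def-C-w} to $\overline{\mathcal{C}^*(w_1)}$ with weight $\overline{w}_{1,2}^*$ shows that $\overline{\overline{\mathcal{C}^*(w_1)}^*(\overline{w}_{1,2}^*)}$ is defined by the projective set $\overline{D}_2$ of all $\overline{\mathbf{u}}\in\mathrm{PG}(k-1,q)$ for which $(\langle\mathbf{u},\mathbf{v}_j\rangle)_{j=1}^{m}$ has weight $\overline{w}_{1,2}^*$. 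The diagonal step (together with the figure in the proof of Theorem~\ref{thm:C**-C}) shows that each of the $n(q-1)$ vectors $\alpha\mathbf{g}_j$ produces a weight-$\overline{w}_{1,1}^*$ codeword of $\overline{\mathcal{C}^*(w_1)}$, and the equality $A_{\overline{w}_{1,1}^*}(\overline{\mathcal{C}^*(w_1)})=n(q-1)$ from Corollary~\ref{cor:proj-dual-2wt} confirms that these vectors already exhaust the weight-$\overline{w}_{1,1}^*$ codewords. Since $\overline{\mathcal{C}^*(w_1)}$ has only two nonzero weights, every remaining nonzero $\mathbf{u}$ must yield weight $\overline{w}_{1,2}^*$, so $\overline{D}_2$ is precisely $\mathrm{PG}(k-1,q)\setminus\{\overline{\mathbf{g}_1},\dots,\overline{\mathbf{g}_n}\}$, i.e.\ the defining set of $\mathrm{Comp}(\mathcal{C})$. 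The $i=2$ case is handled symmetrically by interchanging the roles of $w_1$ and $w_2$.

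I expect the main subtlety to lie in the off-diagonal identification: upgrading from an inclusion to an equality between $\overline{D}_2$ and the projective complement of $\{\overline{\mathbf{g}_j}\}$ rests on the exact count $A_{\overline{w}_{1,1}^*}(\overline{\mathcal{C}^*(w_1)})=n(q-1)$ supplied by Corollary~\ref{cor:proj-dual-2wt}, which guarantees that the scalar multiples of the $\mathbf{g}_j$ completely exhaust the weight-$\overline{w}_{1,1}^*$ codewords. Without this precise frequency, the two-weight structure alone would not suffice to pin down the defining set, and one would only recover a containment.
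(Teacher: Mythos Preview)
Your proposal is correct. For the diagonal equivalences $\overline{\overline{\mathcal{C}^*(w_i)}^*(\overline{w}_{i,i}^*)}\cong\mathcal{C}$ your argument coincides with the paper's: both verify the three hypotheses of Theorem~\ref{thm:C**-C} (spanning via the proof of Theorem~\ref{thm:dual-2wt}, the $1$-design property via Assmus--Mattson, and the identification $\lambda=\overline{w}_{i,i}^*$ together with the exact frequency $n(q-1)$ via Proposition~\ref{prop:wij-wAw} and Corollary~\ref{cor:proj-dual-2wt}).

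For the off-diagonal equivalences you and the paper take genuinely different routes. The paper observes that $\mathrm{Comp}(\mathcal{C})$ is itself a projective two-weight $[n',k]$ code with weights $w_i'=q^{k-1}-w_{3-i}$, and that under the common encoding parametrization the weight-$w_2$ codewords of $\mathcal{C}$ coincide with the weight-$w_1'$ codewords of $\mathrm{Comp}(\mathcal{C})$; hence $\overline{\mathcal{C}^*(w_2)}\cong\overline{\mathrm{Comp}(\mathcal{C})^*(w_1')}$, and a \emph{second} application of Theorem~\ref{thm:C**-C}, now to $\mathrm{Comp}(\mathcal{C})$, reduces the off-diagonal case to a diagonal case for the complementary code. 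Your approach instead stays inside $\overline{\mathcal{C}^*(w_1)}$: you use the exact count $A_{\overline{w}_{1,1}^*}=n(q-1)$ to argue that the scalar multiples of the $\mathbf{g}_j$ already exhaust the weight-$\overline{w}_{1,1}^*$ codewords, so by the two-weight structure the remaining nonzero $\mathbf{u}$'s---precisely the projective complement of $\{\overline{\mathbf{g}_1},\dots,\overline{\mathbf{g}_n}\}$---must all yield weight $\overline{w}_{1,2}^*$, pinning down the defining set of $\overline{\overline{\mathcal{C}^*(w_1)}^*(\overline{w}_{1,2}^*)}$ directly via Theorem~\ref{thm:def-C-w}. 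The paper's reduction is more structural (it makes the $\mathcal{C}\leftrightarrow\mathrm{Comp}(\mathcal{C})$ symmetry explicit and reuses Theorem~\ref{thm:C**-C} wholesale), while yours is more elementary and sidesteps re-verifying the $1$-design and spanning hypotheses for $\mathrm{Comp}(\mathcal{C})$. One small point you leave implicit: the second application of Theorem~\ref{thm:def-C-w} requires that the weight-$\overline{w}_{1,2}^*$ codewords span $\overline{\mathcal{C}^*(w_1)}$, which is equivalent to $\mathrm{Comp}(\mathcal{C})$ having dimension $k$; this does hold under the standing hypothesis $w_1\neq n(q-1)/q$, but deserves a sentence.
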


\begin{proof}
By Corollary~\ref{cor:proj-dual-2wt}, since $w_1 \neq \tfrac{n(q-1)}{q}$, the codewords of $\mathcal{C}$ of weight $w_1$ span the entire code $\mathcal{C}$. Combining Corollary~\ref{cor:proj-dual-2wt} with Proposition~\ref{prop:wij-wAw} yields 
\[
\overline{w}_{1,1}^*=\frac{w_1 A_{w_1}}{n(q-1)} 
\quad \text{and} \quad 
A_{\overline{w}_{1,1}^*} \big( \overline{\mathcal{C}^*(w_1)} \big) = n(q-1).
\]
Since $\mathcal{C}$ is a projective two-weight code, by Theorem \ref{thm:AGM}, the codewords of weight $w_1$ support a $1$-design. Thus, Theorem~\ref{thm:C**-C} implies 
\[
\overline{ \overline{\mathcal{C}^*(w_1)}^*(\overline{w}_{1,1}^*) } \cong \mathcal{C}.
\]
By symmetry, we also have 
\[
\overline{ \overline{\mathcal{C}^*(w_2)}^*(\overline{w}_{2,2}^*) } \cong \mathcal{C}.
\]

Next, observe that $\mathrm{Comp}(\mathcal{C})$ is itself a projective two-weight $[\tfrac{q^k-1}{q-1}-n, k]$ code with nonzero weights 
\[
w_1' = q^{k-1} - w_2, 
\quad 
w_2' = q^{k-1} - w_1,
\]
and corresponding frequencies $A_{w_2}$ and $A_{w_1}$. It is straightforward to see that the set of codewords of weight $w_1'$ in $\mathrm{Comp}(\mathcal{C})$ is linearly equivalent to the set of codewords of weight $w_2$ in $\mathcal{C}$. Hence,
\[
\overline{\mathrm{Comp}(\mathcal{C})^*(w_1')} \cong \overline{\mathcal{C}^*(w_2)}.
\]
Again, combining Corollary~\ref{cor:proj-dual-2wt} and Proposition~\ref{prop:wij-wAw} gives 
\[
\overline{w}_{2,1}^* = \frac{w_1' A_{w_2}}{n'(q-1)} 
\quad \text{and} \quad 
A_{\overline{w}_{2,1}^*}\big( \overline{\mathrm{Comp}(\mathcal{C})^*(w_1')} \big) = n'(q-1),
\]
where $n' = \tfrac{q^k-1}{q-1} - n$. Applying Theorem~\ref{thm:C**-C}, we obtain
\[
\overline{ \overline{\mathrm{Comp}(\mathcal{C})^*(w_1')}^*(\overline{w}_{2,1}^*) } \cong \mathrm{Comp}(\mathcal{C}).
\]
Consequently,
\[
\overline{\overline{\mathcal{C}^*(w_2)}^*(\overline{w}_{2,1}^*) } \cong \mathrm{Comp}(\mathcal{C}).
\]
By symmetry, the same reasoning shows that
\[
\overline{ \overline{\mathcal{C}^*(w_1)}^*(\overline{w}_{1,2}^*) } \cong \mathrm{Comp}(\mathcal{C}).
\]
This completes the proof.
\end{proof}

\begin{corollary}
Let $\mathcal{C}$ be a projective two-weight code with weights $w_1 < w_2$. Then the code $\mathcal{C}^*(w_1)$ has dimension $k$. Moreover, we have
\[
\overline{ \overline{\mathcal{C}^*(w_1)}^{*}(\overline{w}_{1,1}^*) } \cong \mathcal{C},
\]
where $\overline{w}_{1,1}^*$ denotes the minimum nonzero weight of the code $\overline{\mathcal{C}^*(w_1)}$.
\end{corollary}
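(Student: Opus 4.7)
The plan is to split the argument according to whether $w_1 \neq \frac{n(q-1)}{q}$ or $w_1 = \frac{n(q-1)}{q}$, since these two regimes are handled by distinct earlier results of the paper. The corollary then becomes a recombination of those results, together with one additional observation needed to close the boundary case.

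First, in the generic case $w_1 \neq \frac{n(q-1)}{q}$, the claim is essentially a restatement of what has already been proved. Theorem~\ref{thm:dual-2wt}(1) gives that $\mathcal{C}^*(w_1)$ is an $[A_{w_1},k,w_{1,1}^*]$ code, so the dimension claim is immediate. Moreover, from the explicit weights $w_{1,1}^* < w_{1,2}^*$ listed there, after passing to the projectivization via Corollary~\ref{cor:proj-dual-2wt}(1), the value $\overline{w}_{1,1}^*$ is precisely the minimum nonzero weight of $\overline{\mathcal{C}^*(w_1)}$. Therefore the first isomorphism in Theorem~\ref{thm:2weight-C-C**} yields $\overline{\overline{\mathcal{C}^*(w_1)}^*(\overline{w}_{1,1}^*)} \cong \mathcal{C}$ directly.

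The main obstacle, and the only place where the argument needs something new, is the boundary case $w_1 = \frac{n(q-1)}{q}$, which is explicitly excluded from Theorems~\ref{thm:dual-2wt} and \ref{thm:2weight-C-C**}. Here I would appeal to Corollary~\ref{cor:comp-subspace-*}: setting $k_2 = k - \log_q(q^k - n(q-1))$, the defining set of $\mathcal{C}$ can be written as $\mathrm{PG}(k-1,q) \setminus \overline{U}$ for a $(k-k_2-1)$-dimensional projective subspace $\overline{U}$, and $\overline{\mathcal{C}^*(w_1)}$ is the code whose defining set is $\mathrm{PG}(k-1,q) \setminus \overline{U^{\perp}}$. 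Since this defining set spans $\mathbb{F}_q^k$ (as $\overline{U^{\perp}}$ has positive codimension), the dimension claim $\dim \mathcal{C}^*(w_1) = k$ is immediate.

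To close the boundary case, I would verify by a direct computation that $\overline{\mathcal{C}^*(w_1)}$ is itself a projective two-weight code of length $n' = \frac{q^k - q^{k_2}}{q-1}$ with nonzero weights $q^{k-1} - q^{k_2-1}$ and $q^{k-1}$, and that its minimum weight equals $\frac{n'(q-1)}{q}$. Thus $\overline{\mathcal{C}^*(w_1)}$ falls back into the same boundary regime of Corollary~\ref{cor:comp-subspace-*}, now with the roles of $\overline{U}$ and $\overline{U^{\perp}}$ interchanged. A second application of that corollary then identifies $\overline{\overline{\mathcal{C}^*(w_1)}^*(\overline{w}_{1,1}^*)}$ as the code whose defining set is $\mathrm{PG}(k-1,q) \setminus \overline{(U^{\perp})^{\perp}} = \mathrm{PG}(k-1,q) \setminus \overline{U}$, which is precisely the defining set of $\mathcal{C}$. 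The crux of the boundary case is therefore the involutive identity $(U^{\perp})^{\perp} = U$, which is what makes a single iteration of Corollary~\ref{cor:comp-subspace-*} suffice to return to the original code.
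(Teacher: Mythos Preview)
Your proof is correct and follows essentially the same approach as the paper, which simply cites Corollary~\ref{cor:comp-subspace-*} and Theorem~\ref{thm:2weight-C-C**} without further detail. Your explicit case split and the identification of the boundary case with the involutive identity $(U^{\perp})^{\perp} = U$ is exactly the content that the paper's one-line proof is implicitly invoking.
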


\begin{proof}
The conclusion follows directly from Corollary~\ref{cor:comp-subspace-*} and Theorem~\ref{thm:2weight-C-C**}.
\end{proof}

\subsection{Characterization of Extendable Two-Weight Codes}

In \cite{SDC2024}, Sun, Ding, and Chen investigated the extended codes of linear codes, and in \cite{Sun2024}, they determined the weight distributions of the extended codes for several classes of two-weight codes. Nevertheless, the fundamental question of which two-weight codes are extendable remains open. The following theorem provides a characterization of extendable two-weight codes.

\begin{theorem}
Let $\mathcal{C}$ be an $[n,k]$ two-weight linear code over $\mathbb{F}_q$ with nonzero Hamming weights $w_1 < w_2$. Then $\mathcal{C}$ is extendable if and only if
\[
w_2 = \frac{(q - 1) q^{k - 2} n}{q^{k - 1} - 1}.
\]
\end{theorem}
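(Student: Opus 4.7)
The plan is to invoke Theorem~\ref{thm:ext-dist}, according to which $\mathcal{C}$ is extendable if and only if $A_{n^*}(\overline{\mathcal{C}^*(w_1)}) > 0$, where $w_1$ is the minimum distance and $n^* = A_{w_1}/(q-1)$ is the length of $\overline{\mathcal{C}^*(w_1)}$. When $\mathcal{C}$ is not projective, I would first pass to its projective reduction: both the extendability property and the target ratio $\tfrac{(q-1)q^{k-2}n}{q^{k-1}-1}$ scale identically under column replication, so it suffices to treat the projective case. Assuming further that $w_1 \neq \tfrac{n(q-1)}{q}$, Theorem~\ref{thm:dual-2wt} guarantees $\dim(\overline{\mathcal{C}^*(w_1)}) = k$, making Theorem~\ref{thm:ext-dist} directly applicable.

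By Corollary~\ref{cor:proj-dual-2wt}, $\overline{\mathcal{C}^*(w_1)}$ is itself a projective two-weight code whose larger weight $\overline{w}_{1,2}^*$ has frequency $q^k - 1 - n(q-1)$, which is strictly positive since any projective two-weight code must satisfy $n < \tfrac{q^k-1}{q-1}$. As $\overline{w}_{1,2}^* \le n^*$ automatically, the condition $A_{n^*}(\overline{\mathcal{C}^*(w_1)}) > 0$ reduces to the equation $\overline{w}_{1,2}^* = n^*$. Substituting the explicit formulas from Proposition~\ref{prop:wt-dist-2} for $A_{w_1}$ and from Corollary~\ref{cor:proj-dual-2wt} for $\overline{w}_{1,2}^*$, and clearing the common factor $(q-1)(w_2 - w_1)$, the identity
\[
q^{k-2}(q-1)\bigl(w_2 q - n(q-1)\bigr) = w_2(q^k-1) - n q^{k-1}(q-1)
\]
is obtained. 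The $w_1$--terms drop out and the $w_2 q^k$ contributions cancel, collapsing the equation to $w_2(q^{k-1}-1) = n(q-1)q^{k-2}$, which rearranges to the stated formula.

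The residual case $w_1 = \tfrac{n(q-1)}{q}$ must be handled separately, because Corollary~\ref{cor:proj-dual-2wt} does not apply there. In this situation Corollary~\ref{cor:comp-subspace-*} gives the exact structure of $\mathcal{C}$: its defining set is the complement of a $(k-k_2-1)$-dimensional projective subspace in $\mathrm{PG}(k-1,q)$, with $w_2 = q^{k-1}$. Direct inspection (adjoining a projective point from the missing subspace) shows that $\mathcal{C}$ is extendable precisely when that subspace is a single point, and one checks by hand that this matches $w_2 = \tfrac{(q-1)q^{k-2}n}{q^{k-1}-1}$. I expect the main obstacle to be purely computational rather than conceptual: arranging the algebra in the central equation $\overline{w}_{1,2}^* = n^*$ so that it compresses cleanly into the stated compact ratio, and verifying that the boundary case $w_1 = \tfrac{n(q-1)}{q}$ is consistent with it without introducing ad hoc exceptions.
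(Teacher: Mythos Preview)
Your proposal is correct and follows essentially the same route as the paper: split off the boundary case $w_1=\tfrac{n(q-1)}{q}$ (handled via the subspace-complement description), and in the generic case combine Theorem~\ref{thm:ext-dist} with Corollary~\ref{cor:proj-dual-2wt} to reduce extendability to the equation $\overline{w}_{1,2}^*=A_{w_1}/(q-1)$, which simplifies to the stated formula. Your preliminary reduction from non-projective to projective codes is an extra step the paper does not include (it tacitly works in the projective setting throughout the section); note that a general non-projective two-weight code need not be a uniform replication of a projective one, so that reduction would require a bit more care than stated.
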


\begin{proof}
We first consider the case where the defining set of $\mathcal{C}$ is the complement of a $(t-1)$-dimensional projective subspace $U$ in $\mathrm{PG}(k-1, q)$, where $1 \le t \le k - 1$. In this case, $\mathcal{C}$ is a two-weight code with parameters $\left[\frac{q^k - q^t}{q - 1},\, k\right]$ and weights $w_1 = q^{k-1} - q^{t-1}$ and $w_2 = q^{k-1}$. Let $\tilde{\mathcal{C}}$ denote the code defined by the point set $\left(\mathrm{PG}(k-1, q) \setminus U\right) \cup \{P\}$. Then $\tilde{\mathcal{C}}$ is an extension of $\mathcal{C}$. When $k - 1 \ge t \ge 2$, it can be directly verified that $\tilde{\mathcal{C}}$ and $\mathcal{C}$ share the same minimum nonzero distance, which implies that $\mathcal{C}$ is not extendable. Moreover, one can easily check that in this case $w_2 > \frac{(q - 1) q^{k - 2} n}{q^{k - 1} - 1}$, and hence the conclusion of the theorem holds. For $t = 1$, Example~\ref{remark:PG-POINT} shows that the conclusion still holds.

Now suppose that the defining set of $\mathcal{C}$ is not the complement of a projective subspace. By Corollary~\ref{cor:proj-dual-2wt}, the code $\overline{\mathcal{C}^*(w_1)}$ has dimension $k$. According to Theorem~\ref{thm:ext-dist}, the code $\mathcal{C}$ is extendable if and only if $\overline{\mathcal{C}^*(w_1)}$ contains a codeword whose weight equals its length. Again, from Corollary~\ref{cor:proj-dual-2wt}, the length of $\overline{\mathcal{C}^*(w_1)}$ is $A_{w_1}/(q - 1)$, and its largest nonzero weight is given by
\[
\overline{w}_{1,2}^* = \frac{q^{k - 2}}{w_2 - w_1} \left(w_2 q - n (q - 1)\right),
\]
where 
\[
A_{w_1} = \frac{1}{w_2 - w_1} \left(w_2 (q^k - 1) - n q^{k - 1} (q - 1)\right).
\]
Hence, $\mathcal{C}$ is extendable if and only if
\[
\frac{1}{(q - 1)(w_2 - w_1)} \left(w_2 (q^k - 1) - n q^{k - 1} (q - 1)\right) = \frac{q^{k - 2}}{w_2 - w_1} \left(w_2 q - n (q - 1)\right).
\]
A direct computation shows that this equality holds precisely when
\[
w_2 = \frac{(q - 1) q^{k - 2} n}{q^{k - 1} - 1}.
\]
The proof is thus complete.
\end{proof}

In what follows, we describe three families of extendable two-weight codes.

\begin{example}\label{remark:PG-POINT}
Let $\mathcal{C}$ be the linear code corresponding to the point set obtained by deleting one point from the projective space $\mathrm{PG}(k-1, q)$. Then $\mathcal{C}$ is a two-weight linear code with parameters $[n, k] = [\frac{q^k - q}{q - 1},\, k]$ and nonzero weights $w_1 = q^{k-1} - 1$ and $w_2 = q^{k-1}$. Consequently, these parameters satisfy 
\[
w_2 = \frac{(q - 1) q^{k-2} n}{q^{k-1} - 1}.
\]
It is evident that the $q$-ary $k$-dimensional simplex code, with parameters $[\frac{q^k - 1}{q - 1},\, k]$ and minimum weight $q^{k-1}$, can be viewed as an extension of $\mathcal{C}$. Therefore, $\mathcal{C}$ is extendable.
\end{example}

\begin{example}\label{remark:hyperoval}
A hyperoval in $\mathrm{PG}(2,q)$, where $q$ is even, is a set of $q+2$ points such that no three are collinear and every line intersects it in either $0$ or $2$ points. While the hyperoval is unique for $q=2$ and $q=4$, there exist many projectively distinct examples for larger even $q$. 
From the coding-theoretic viewpoint, the linear code $\mathcal{C}_D$ associated with a hyperoval $D$ is an MDS code with parameters $[q+2,3,q]$, which has exactly two nonzero weights, namely $w_1 = q$ and $w_2 = q+2$. Furthermore, the code $\overline{\mathcal{C}_D^*(w_1)}$ is a two-weight code with parameters $\left[\frac{(q+1)(q+2)}{2}, 3\right]$, where the nonzero weights are $w_1 = \frac{q(q+1)}{2}$ and $w_2 = \frac{q(q+2)}{2}$. Therefore, $\overline{\mathcal{C}_D^*(w_1)}$ is an extendable two-weight code.
\end{example}

\begin{example}\label{remark:max-arc}
Let $q$ and $h$ be powers of $2$ with $1 < h < q$ and $h \mid q$. 
Choose an irreducible quadratic form over $GF(q)$, say 
\(
\phi(x,y) = ax^2 + bxy + cy^2,
\)
and let $H$ be a subgroup of the additive group of $GF(q)$ of order $h$. 
Define a subset $D$ of the projective plane $\mathrm{PG}(2,q)$ by
\(
D = \{(1, x, y) : \phi(x, y) \in H\}. 
\)
Denniston~\cite{Denniston1969} proved that the linear code $\mathcal{C}_D$ associated with $D$ 
is a two-weight code with parameters $[(h-1)(q+1)+1, 3]$ and nonzero weights
\(
w_1 = (h-1)q, \quad w_2 = (h-1)(q+1)+1.
\)
Moreover, the code $\overline{\mathcal{C}_D^*(w_1)}$ has parameters
\(
\left [\frac{(q+1)(1+(h-1)(q+1))}{h}, \, 3 \right ]
\)
and two nonzero weights
\(
w_1 = \frac{q(q+1)(h-1)}{h}, \quad 
w_2 = \frac{q(1+(q+1)(h-1))}{h}.
\)
Hence, $\overline{\mathcal{C}_D^*(w_1)}$ is an extendable two-weight code.
\end{example}

Based on Examples~\ref{remark:PG-POINT}, \ref{remark:hyperoval} and \ref{remark:max-arc}, the following open problem is both interesting and worthy of further investigation.  

\begin{problem}
Does there exist an extendable two-weight linear code over $\mathbb{F}_q$ whose parameters differ from $\left[\frac{(q+1)\bigl(1+(h-1)(q+1)\bigr)}{h},\,3\right]$ or $\left[\frac{q^k - q}{q - 1},\,k\right]$?
\end{problem}

\section{Conclusion}\label{conc}
In this paper, we proposed a new framework for the secondary construction of linear codes based on their weight functions. By considering the set of codewords in a given linear code having a fixed Hamming weight, we developed a systematic and general approach for generating new linear codes and analyzing their fundamental parameters. This methodology establishes intrinsic connections between the structural properties of the constructed codes and the extendability of the original ones. For projective two-weight codes, we completely determined the parameters of the derived codes, established an upper bound on the minimum weight of two-weight codes, and characterized all two-weight codes attaining this bound. In addition, several divisibility properties concerning the parameters of two-weight codes were obtained, further revealing the algebraic regularities behind their weight functions.

Beyond the specific class of two-weight codes examined in this work, the proposed framework provides a versatile analytical tool for both code construction and structural exploration.  Future research may naturally extend this approach to other significant families of linear codes, such as BCH codes, algebraic-geometric codes, and LDPC codes, by investigating the linear codes associated with their fixed-weight codewords. We believe that such studies will not only yield new and interesting code families but also deepen our understanding of the combinatorial and algebraic foundations of coding theory.

\end{document}